\newtheorem{theorem}{Theorem}
\newtheorem{lemma}{Lemma}
\newtheorem{proposition}{Proposition}
\newtheorem{corollary}{Corollary}
\newtheorem{claim}{Claim}
\newtheorem{observation}{Observation}
\newtheorem{remark}{Remark}
\newtheorem{definition}{Definition}
\newtheorem{assumption}{Assumption}
\mathchardef\ordinarycolon\mathcode`\:
\DeclareMathOperator*{\argmin}{arg\,min}
\DeclareMathOperator*{\argmax}{arg\,max}
\begin{document}

\title{\textbf{Turning the Ratchet: \\Dynamic Screening with Multiple Agents}\footnote{We thank Ralph Boleslavsky, Rahul Deb, Laura Doval, 
Tom Gresik, Moritz Meyer-ter-Vehn, Alessandro Pavan, Alexandre Poirier, Michael Grubb, Siyang Xiong, 
Yuzhe Zhang, and participants at various seminars and conferences for helpful discussions and comments. All errors are our own. \textit{Email:} Ekmekci (mehmet.ekmekci@bc.edu); Maestri (lucas.maestri@fgv.br); Wei (dwei10@ucsc.edu).}}
\author{%
\begin{tabular}{ccccc}
\textsc{Mehmet Ekmekci
} & \textsc{Lucas Maestri
} & \textsc{Dong Wei
}\\
{ \textit{Boston College}} & { \textit{FGV EPGE}} & { \textit{UC Santa Cruz}}%
\end{tabular}
}
\date{March 2025
}
\maketitle

\begin{abstract}
We study a dynamic contracting problem with multiple agents and limited commitment. A principal seeks to screen efficient agents using one-period contracts, but is tempted to revise contract terms upon knowing an agent's type. Alterations of contracts are observable and, hence, whenever past promises are broken future information revelation stops. We provide necessary and sufficient conditions under which information revelation can be fostered. For sufficiently patient players, private information is either never revealed or fully revealed in a sequential manner. Optimal contracts provide high-powered incentives upon initial disclosure of an agent’s type, and rewards for information revelation vanish over time.
 \bigskip
 
\noindent \textit{Keywords:} ratchet effect, dynamic contracts, screening, limited commitment\\
\noindent \textit{JEL classification:} C73, D82, D86, J31  
\end{abstract}

\newpage
\clearpage

\section{Introduction}
In dynamic relationships with private information, information revelation is often critical for improving allocative efficiency and welfare. For example, in a worker-manager relationship, the worker's ability is usually his private information; obtaining this information would allow the manager to assign different tasks based on the worker's ability, which is beneficial for social surplus and the firm's profit. Nevertheless, the revelation of private information can be difficult to incentivize, when the uninformed party lacks a commitment to not exploiting such private information. This causes the well-known ``ratchet effect." Indeed, if a worker expects his manager to implement worse terms of trade (such as assigning him more work or reducing his wage rate) after the manager knows his capability, the worker would be reluctant to reveal such information in the first place. {
This concern is especially pronounced when the firm holds some monopsony power and the worker cannot credibly threaten to leave.}

In various organizations, however, a person in a supervising position is often managing many subordinates at the same time. In such scenarios where one principal engages with a large population of agents, the principal's commitment power may be partially restored, because if she breaks her promise with one of the agents, there can be potential punishments imposed on her by other agents; the linkage between these relationships could then push the principal to be more restrained in each of them. In this paper, we study the effect of a principal simultaneously contracting with multiple agents, and whether and how that can alleviate the ratchet-effect problem. In particular, we are interested in addressing the following questions. First, how should dynamic incentives be designed in order to incentivize information revelation over time? Relatedly, what are the qualitative features of the optimal contract? Moreover, what determines whether any information is revealed in equilibrium?

To answer these questions, we study a stylized model in which a principal (she) interacts with the same mass of agents (he) over time. An agent can be either the efficient (i.e., low-cost) type or the inefficient (i.e., high-cost) type, which is his private information. In each period, an agent produces an observable quantity, incurs a production cost, and collects some payment from the principal. Meanwhile, in each period the principal tries to induce a distinct subset of efficient agents to reveal their type. She does so by promising a sequence of separating contracts to these agents while also offering a pooling contract to those who wish to remain unrevealed. Free to make any promise about the future, the principal can only commit to one-period contracts for the current period, and her promise about what contract to offer in any future period can be reconsidered and reneged on when the day arrives. Nonetheless, reneging is not costless. We assume that the principal's reneging is publicly observable to \textit{all} agents, and once it occurs, those agents who haven't yet revealed their private information will stop revealing altogether. In this model, the principal's fear for losing future information then becomes a deterrent to the excessive exploitation of a low-cost agent after his type is revealed. In turn, this gives each agent more incentives to reveal his type in the first place, which potentially fosters information transmission and alleviates the ratchet effect.

We say that a mechanism is revealing if it successfully induces some efficient agents to reveal their type over time. Depending on the model parameters, an optimal mechanism can be revealing or non-revealing.  Our first main result characterizes the qualitative features of an optimal mechanism when it is revealing.\footnote{When an optimal mechanism is non-revealing, it is simply a static pooling contract that extracts the full surplus from the inefficient type and leaves some rent to the efficient type.} Our second main result provides a necessary and sufficient condition under which every optimal mechanism is revealing when players are sufficiently patient.

When information revelation is feasible, in any optimal mechanism, every period there is a distinct subset of efficient agents choosing to reveal their type, and all efficient agents will reveal at some point. In other words, information revelation is \textit{gradual} and \textit{eventual.} Gradual revelation ensures that looking into the future from each period's perspective, there is always new information to arrive later. This gives the principal sustained incentives to continue paying the promised information rent, instead of fully exploiting every low-cost agent who has revealed his type. Eventual revelation is implied by optimality, for otherwise the principal can make a higher profit in another mechanism with more agents revealing in every period.

The pooling contract offered in each period (taken by all inefficient agents and those efficient ones who wish to remain unrevealed) exhibits a simple and intuitive dynamics. Specifically, each pooling contract asks the agent to produce a quantity less than the inefficient type's surplus-maximizing level and remunerates him at the high cost of production;\footnote{Downward distortions of pooling quantities are typical in standard screening models (i.e., an inefficient type is never overworked). They are used to reduce the information rent of the efficient type.} as the efficient agents gradually drop out of the pool, the demanded quantity converges to the surplus-maximizing level over time.

On the other hand, once an efficient agent reveals his type, he will be getting the separating contracts starting from the current period. This sequence of contracts consists of two phases.\footnote{The sequence of separating contracts an agent gets can differ depending on when he reveals his efficiency. Nevertheless, these sequences of contracts all share the same qualitative dynamics.} In the first several periods after revelation, which we call the \textit{``reward phase,"} the agent is asked to produce greater than the efficient type's surplus-maximizing quantity and receives a payment equal to the inefficient type's high cost. During these periods, an efficient agent, while working intensely, will be collecting positive rents. The demanded quantity and offered payment in the contract will gradually decline during the reward phase until the next phase, called the \textit{``exploitation phase,"} is reached. In the exploitation phase, the principal extracts the entire surplus from an efficient agent by asking him to produce at his surplus-maximizing level and paying him at the low cost of production.

The dynamics of the separating contracts are shaped by the interplay between the principal's limited commitment and the agent's incentive compatibility. First, the principal would like to frontload an efficient agent's information rent as much as possible. This is because by paying more today (with commitment) and promising less in the future (without commitment), the principal can deliver the same value to an efficient agent while relaxing her \textit{non-reneging} constraint.\footnote{In our model, although the principal could renege on her promise when a future date comes, any \textit{implementable} allocation requires the principal to find it optimal to always keep her promise. Formally, this is captured by a ``non-reneging" constraint in each period.} However, paying all information rent to an agent in one period can be infeasible, because such a high upfront payment may induce the inefficient agent to pretend to be efficient, reap the high payment, and then exit from the relationship. To hedge against this ``take-the-money-and-run" strategy by the inefficient type, the principal has to split the total information rent to an efficient agent into the \textit{first several periods} after his type is revealed, making the reward phase nontrivial. Moreover, the upward distortion of the low-cost type's quantity is also driven by the principal's incentive to frontload information rent. Indeed, if his production is at or below the surplus-maximizing level, the principal can increase the quantity and perturb the payment to further relax her non-reneging constraint and/or increase the surplus.


Next, we derive a necessary and sufficient condition under which the agent's private information can be elicited. With limited commitment, the main challenge in encouraging information revelation is to resolve the agent's concern---that the principal would not honor the promised information rent after knowing his type. Naturally, our easy-to-check condition on the model primitives involves a comparison between the principal's benefit and cost of keeping promises. From the principal's perspective, the benefit of honoring all promises (for one more period) is that more efficient agents will reveal their type, and this additional information can be utilized later. Meanwhile, the cost of keeping promises (for one more period) comes from two sources: \textit{i)} payment of information rent to those agents who are still in the reward phase of their separating contracts, and \textit{ii)} surplus loss due to the quantity distortions in these contracts. We show that, when the players are sufficiently patient, every optimal mechanism is revealing if and only if the benefit of keeping promises outweighs its cost. This necessary and sufficient condition uncovers a novel tradeoff between eliciting future information and exploiting past disclosures; it also enables us to examine the effect of skill heterogeneity on information revelation.

Our analysis highlights the key distinctions between contracting with one agent and that with many agents. In the one-agent case, if the principal breaks her promise, there is little punishment this single agent can impose on the principal (other than perhaps quitting the relationship). In the many-agent case, different agents could reveal their private information at different points in time, and the communication among these agents would publicize any mistreatment of an agent by the principal to all other agents. Consequently, those who have not revealed their type could then impose a collective punishment by stopping revealing information altogether. This much harsher punishment induces the principal to honor the promised information rent, which in turn gives the efficient agents more incentives to reveal their type. This result suggests that, in organizations with commitment issues, enhancing ties and facilitating (informal) communication between employees could help overcome the ratchet effect and foster information transmission.

{
Our characterization of optimal mechanisms also generates sharp predictions about the life cycle of a worker's contract in labor markets with a significant lack of commitment. For an incompetent worker, his wage and work amount remain at a relatively low level throughout his career. More interestingly, a competent worker's contract exhibits an inverse U-shaped pattern: His wage and work amount may start low, but once the principal learns about his competence, he will receive a major promotion that comes with a significant increase in both wage and responsibilities, after which they are  faded down over time as the principal begins to make full use of her knowledge about the worker's type (see Figure \ref{fig:lifecycle}). In industries where contracts are often renegotiatied and skills are project-specific, such as IT consulting and entertainment, we find examples that seem consistent with our predictions. We discuss them in detail at the end of Section \ref{sec:structure}. 
}

Apart from contracting in labor markets, our model and insights extend naturally to other related settings. For example, large retail corporations, such as Walmart, engage with thousands of suppliers who possess valuable but private information about production costs, etc. The ratchet effect discourages these suppliers from revealing their true costs to the retailer, as such disclosures might trigger worse terms of trade and erode their profits. Our analysis offers a recipe to such retailers for leveraging the interdependence between multiple relationships, fostering information transmission, and improving profitability.
\medskip

\noindent \textbf{Related Literature.} This paper contributes to the literature on ratchet effect. The seminal works of \cite{freixas1985} and \cite{laffont1988,laffont1993} first model the ratchet effect and show how it severely impairs information revelation in
dynamic relationships. \cite{gerardi2020} extend the model to infinite horizon and find that the unique equilibrium outcome in the patient limit is pooling, so long as the probability of the inefficient type is not too low. 

The literature has identified several institutions that help mitigate the ratchet effect. \cite{kanemoto1992} point out that competition for second-hand workers can help firms with their commitment problems; \cite{carmichael2000} find that a patient firm may have an incentive to honor its promises to short-lived workers; when there are productivity shocks, \cite{avidit2017} show that the principal can progressively learn the agent's private information. {
By contrast, even with a monopsonistic firm and persistent private information, our paper demonstrates how the interdependence between multiple relationships with long-lived workers can restore the firm's commitment power and relieve the ratchet effect.} Moreover, whenever information revelation is feasible, \textit{all} private information will be revealed eventually.

This paper also belongs to the broader literature on screening with limited commitment. In the context of dynamic pricing, \cite{hart1988} analyze a model in which a principal rents over time an indivisible good to an agent who is privately informed about his valuation; \cite{skreta2006,skreta2015} and \cite{liu2019} use a mechanism-design approach to study optimal auctions with limited commitment; \cite{bonatti2011} studies optimal dynamic menu pricing when past purchases can affect a consumer's learning of product quality and thus his future demands; \cite{deb2015} demonstrate how delaying contracting with high-valuation buyers can help a seller credibly maintain higher prices in the future. More generally, \cite{strulovici2017} and \cite{maestri2017} examine screening models with renegotiation opportunities; \cite{bester2001} and \cite{doval2022} develop revelation principles for models involving limited commitment and a single agent. Our paper differs from these studies by analyzing the interaction between a principal and a continuum of agents, and allowing for allocations that depend not only on each agent's type but also on his dynamic choices.

There is also a macroeconomics literature on optimal government policies with a lack of commitment and a continuum of privately
informed agents \citep[see, for example,][]{sleet2008,albanesi2006,acemoglu2010,scheuer2016}. These papers all assume that agents of the same type receive the same allocation, while the principal in our model can personalize the allocation to (ex-ante identical) agents who reveal their private information at different points in time. {
\cite{golosov2021} analyze optimal unemployment insurance with agents who cannot falsely claim to have job offers, while the current model assumes fully soft information, allowing agents to misreport freely. The different nature of private information in our model generates binding take-the-money-and-run constraints and novel predictions about the equilibrium dynamics.}

Our paper is reminiscent of other contributions highlighting that interacting with multiple agents help the principal keep her promises. In a moral-hazard setting with complete information, \cite{levin2002} shows that through pooling incentives the principal does better by making promises to several agents than to only one. This differs from our model as there is no scope for information revelation. \cite{ausubel1989} find that when selling a durable good to a continuum of consumers, a patient seller can obtain a payoff arbitrarily close to the monopoly profit in an equilibrium in which she slowly decreases the price over time. In our model, the key tradeoff faced by the principal is between reneging on the information rent promised in the past and eliciting new information in the future. This tradeoff is only present in our paper because in a durable-good setting a buyer leaves the market immediately after he reveals his type (i.e., buys the good).

\section{Model}
\subsection{Players, Types, and Payoffs}
A principal (she) interacts with a continuum of agents (he) of measure $1$. Time is discrete and infinite, and both parties discount the future with a discount factor $\delta \in \left( 0,1\right)$. Each agent has a type $\theta \in \left\{ L,H\right\}$, which affects the agent's cost of production and is his private information. Initially, the measure of the \textit{low(-cost)} type is $\alpha_0\in (0,1)$ and the measure of the \textit{high(-cost)} type is $1-\alpha_0$.  

In period $t$, if a type-$\theta$ agent produces quantity $q_t$ and receives a payment of $x_t$, his payoff is%
\[
x_{t}-C_{\theta} \left(q_{t}\right), 
\]%
where $C_{\theta}(q_t)$ is the cost of production. We assume that the quantity belongs to some compact set $[0,\Bar{q}]$, and that $C_{\theta}(\cdot)$ is smooth, strictly
increasing, and convex, satisfying $C_{\theta} \left( 0\right) =0$ for
every $\theta$ and $C'_{H}\left(q\right) >C'_{L}\left(q\right)$ for every $q\in  [0,\bar{q}]$. This implies that $\Delta C(\cdot) := C_H(\cdot) - C_L (\cdot)$ is nonnegative and strictly increasing. We further assume that $\Delta C$ is concave, i.e., $\Delta C''(\cdot) \leq 0$.\footnote{The concavity of $\Delta C$ ensures a cleaner statement of our main results. Our results still hold qualitatively even without this assumption. See Section \ref{sec:concavedeltac} for a discussion.} These assumptions are satisfied if, for example, $C_\theta(q) = \theta q$.

The principal's payoff from an agent is%
\[
v(q_{t})-x_{t}, 
\]%
where $v(\cdot )$ is strictly increasing, strictly concave, and
differentiable, satisfying $v(0) = 0$.

In any period, the social surplus from type $\theta$ producing quantity $q$ is%
\[
\pi _{\theta}(q):=v(q)-C_\theta(q). 
\]
For each $\theta\in \{L,H\}$, we define the surplus-maximizing quantity
\[
q_{\theta}^{\ast }:=\argmax_{q}\ \pi _{\theta}(q). 
\]%
We assume the solutions are interior, so that $0 < \pi _{H}(q_{H}^{\ast })<\pi _{L}(q_{L}^{\ast })$ and $0 < q_H^* < q_{L}^* < \Bar{q}$. 

If at any point an agent decides to irreversibly drop out of the relationship, he will receive a payoff of $0$ as his outside option.

Throughout this paper, we maintain the following assumption that the initial measure of the low-type agents is not too large.
\begin{assumption}\label{ass:1}
$\alpha_0 < \frac{\pi_H(q_H^*)}{\pi_L(q_L^*)}$.
\end{assumption}
Assumption \ref{ass:1} ensures that the principal should \textit{not} simply ignore the high types and extract full surplus from and only from the low types. See Section \ref{sec:alpha} for a discussion on relaxing this assumption.

\begin{remark}
The assumption of a continuum of agents accounts for various situations where new information could trickle in and there is uncertainty about the extent of information that will need to
be disclosed in the future. In Section \ref{sec:cardinality}, we discuss to what extent our results and insights remain valid when this assumption is relaxed. 
\end{remark}




\subsection{The Contracting Environment}
We are interested in studying a contracting environment with three key features:
\begin{enumerate}
    \item The principal can only commit to one-period contracts and may renege on her past promises. However, any reneging is observable to \textit{all} agents and could be ``punished" by the agents.
    \item Each agent has private information about his type, and any information revelation is voluntary.
    \item Each agent can irreversibly leave the relationship in any period by refusing the current contract.
\end{enumerate}

To better understand the limited commitment assumption, note that in this dynamic setting, the principal can not only specify a contract (typically describing the amount of work and compensation) for the current period, but also make promises about what contracts to offer in future periods. We assume that any contract for the current period, once accepted, is binding and enforceable. On the other hand, when a future period comes, the principal could ``renege" on her promise by presenting a contract that is different from what she has promised to offer. This form of limited commitment can be caused by various factors, such as imperfect enforcement of long-term contracts, limited validity of verbal promises, and/or periodical leadership changes within an organization. 
\subsubsection*{Sequential Revelation Mechanisms}
We focus our attention on a natural class of \textbf{sequential revelation mechanisms} where low-cost agents can come forward to gradually reveal their efficiency over time. In each period, any agent can announce that his type/cost is low, and the contract each agent gets depends on whether he has made such an announcement.

Formally, at the beginning of each period $t$, the principal proposes a \textit{pooling contract} $(q_t^P,x_t^P)$ and a sequence of \textit{separating contracts} $\left\{\left(
q_{\tau}^{S}\left( t\right) ,x_{\tau}^{S}\left( t\right) \right) \right\}_{\tau\geq
t}$. If an agent has not made any announcement (at $t$ or before), he will be given the pooling contract $(q_t^P,x_t^P)$. If, in period $t$, an agent announces that his type is low, then he will be given the separating contract $\left(q_{t}^{S}\left( t\right) ,x_{t}^{S}\left( t\right)\right)$ in the current period and be promised a sequence of separating contracts  $\left\{\left(
q_{\tau}^{S}\left( t\right) ,x_{\tau}^{S}\left( t\right) \right) \right\}_{\tau>
t}$ for all future periods $\tau>t$, where the $t$ in the parentheses indicates the time of his revelation.\footnote{A contract $(q_t,x_t)$ in period $t$ should be interpreted as a take-it-or-leave-it offer: An agent will be paid $x_t$ if he produces $q_t$ (or more), but he can also leave the relationship by refusing the contract.} We allow the sequence of separating contracts to depend on the period in which the agent announces his (low) type; thus, low-type agents revealing in two different periods may be promised two distinct sequences of separating contracts.

From an agent's perspective, in a sequential revelation mechanism, his choice set in each period depends on whether he has made an announcement or not. Specifically,
\begin{itemize}
    \item For an agent who has not made any announcement before $t$, in period $t$ he can choose between: \textit{i)} continuing to take the pooling contract (i.e., remaining silent), \textit{ii)} starting to take the separating contract (i.e., revealing to be efficient), and \textit{iii)} quitting the relationship.\footnote{We interpret the agent's choice within a menu as making a report (or remaining silent) about his type.}
    \item For an agent who announced to be the low type at some $T<t$, in period $t$ he can choose between: \textit{i)} taking the separating contract $(q_t^S(T),x_t^S(T))$ and \textit{ii)} quitting the relationship.
\end{itemize}
Indeed, once an announcement is made, the agent is identified by the principal and can no longer take the pooling contract in any future period. 

From the principal's perspective, because of her limited commitment, in each period $t$ she can freely propose a contract $(\tilde{q}_t^S(T),\tilde{x}_t^S(T))$ to agents who revealed at each $T<t$. This contract can potentially differ from the one she promised to offer $({q}_t^S(T),{x}_t^S(T))$, although in equilibrium they must be the same.

In summary, the timeline within each period $t$ is as follows.
\begin{enumerate}
    \item The principal proposes a pooling contract $(q_t^P,x_t^P)$, a sequence of separating contracts $\left\{\left(
q_{\tau}^{S}\left( t\right) ,x_{\tau}^{S}\left( t\right) \right) \right\}_{\tau\geq
t}$ (for those who wish to reveal at $t$), and $(\tilde{q}_t^S(T),\tilde{x}_t^S(T))$ for each $T<t$ (for those who revealed before $t$).\footnote{One can allow the principal to post any finite menu with $K$ contracts to agents who haven't yet revealed and focus on equilibrium outcomes where at most two of these contracts are chosen within each period.}
\item[2(a).] If an agent always took the pooling contract prior to $t$, he can choose between $(q_t^P,x_t^P)$, $(q_t^S(t),x_t^S(t))$, and quitting. 
\item[2(b).] If an agent took the separating contract $(q_T^S(T),x_T^S(T))$ at some $T<t$, he can choose between $(\tilde{q}_t^S(T),\tilde{x}_t^S(T))$ and quitting.
\item[3.] The chosen contracts are enforced between the principal and each agent.
\end{enumerate}

\begin{remark}
    Sequential revelation mechanisms introduced above are suited to address our key research questions---that whether and how the interdependence between multiple relationships can facilitate information revelation over time. These mechanisms focus particularly on the revelation of the efficient agent's type, as such information is most useful for the principal and creates critical tensions between the two parties.\footnote{In our model, knowing that an agent is inefficient, compared to not knowing his type, does not change the principal's optimal contract with this agent.} Nevertheless, whether or not the restriction to this class of mechanisms is without loss remains an open question, as we are not aware of a revelation principle applicable to this setting. In particular, the revelation principle developed in \cite{doval2022} for a single agent does not extend directly to our setting with multiple (in fact, a continuum of) agents.\footnote{Indeed, with multiple agents, attempting to extend their ``canonical mechanisms" requires us to keep track of the principal's belief evolution about all agents' types, which could jointly determine the entire allocation. It is an open question whether such an extension remains canonical in multi-agent environments, and even if so, how it can be applied to characterize optimal mechanisms.}
\end{remark}

\subsection{Implementable Allocations and the Principal's Problem}
\subsubsection*{Allocations}
In general, an allocation is a collection of quantity and payment streams, $\left\{\left(q_t^i,x_t^i\right)\right\}_{t\geq 0,i\in [0,1]}$, which describes the work and compensation of each agent in every period. In a sequential revelation mechanism, because the contract each agent gets only depends on his type and the time of revelation (if his type is low), an allocation can be summarized by
\begin{equation}\label{eq:allocation}
    \left\{\left(q_t^P,x_t^P\right), \left(q^S_\tau(t),x^S_\tau(t)\right), r_t\right\}_{t\geq 0, \tau\geq t},
\end{equation}
which specifies each agent's contracts before and after revealing (that his type is low), and the measure of low-cost agents, $r_t$, who reveal in each period $t$.\footnote{For any $t$ such that $r_t = 0$, one can simply set $\left(q^S_\tau(t),x^S_\tau(t)\right) = \left(q_\tau^P,x_\tau^P\right)$ for all $\tau\geq t$, making the separating contracts starting from a non-revealing period the same as the pooling contracts. Alternatively, one can think $\left(q^S_\tau(t),x^S_\tau(t)\right)$ as defined only for $t$ such that $r_t > 0$.}$^,$\footnote{In an allocation described by \eqref{eq:allocation}, we implicitly assume that each individual agent is using a pure strategy: all high-type agents never choose to make a (false) announcement, and each low-type agent reveals with probability one in some period. The revealing period may differ across different low-type agents, thus $r_t$ can be positive in multiple periods despite the fact that there is no randomization on the individual level.} Define $R_t : = \sum_{\tau < t} r_t$ to be the mass of agents who reveal before $t$. Since this paper focuses solely on sequential revelation mechanisms, we only consider allocations that can be represented by \eqref{eq:allocation}.

Given an allocation from a sequential revelation mechanism, in period $t$, the continuation payoff of a type-$\theta$ agent if he were never to reveal is
\begin{align}
    U_{\theta,t}^P := \left( 1-\delta \right) \sum_{\tau =t}^{\mathbb{\infty }}\delta
^{\tau -t}\left[ x_{\tau}^P-C_\theta \left( q_{\tau}^P\right) \right].
\end{align}
The continuation payoff of a low-type agent who has revealed at $T\leq t$ (i.e., who starts to take the separating contract at $T$) is
\begin{align}\label{eq:ULtT}
    U_{L,t}^{S}\left( T\right) :=\left( 1-\delta \right) \sum_{\tau =t}^{%
\mathbb{\infty }}\delta ^{\tau -t}\left[ x_{\tau}^{S}(T)-C_L \left(q^S_\tau(T)\right) \right].
\end{align}
We define the continuation profit of the principal as
\begin{align}
    \Pi_t := &\left( 1-\delta \right) \sum_{\tau =t}^{\mathbb{\infty }}\delta
^{\tau -t}\left[ \left( 1-\alpha _{0}\right) +\left( \alpha _{0}-R_{\tau
}-r_{\tau }\right) \right] \left[ v(q_{\tau }^P)-x_{\tau }^P\right]\nonumber \\
&+\left( 1-\delta \right) \sum_{\tau = t}^{\infty}\delta^{\tau - t} \left[\sum_{T = 0}^{\tau} r_{\tau} \left[ v(q_{\tau}^{S}\left( T \right) )-x_{\tau}^{S}\left( T \right) \right]\right],
\end{align}
where the first term sums over the principal's profit from the agents who take the pooling contracts (which include all high types and some low types who haven't yet revealed), and the second term captures her profit from the agents who have revealed at different times in the past.

\subsubsection*{Constraints and Implementable Allocations}
In our environment with limited commitment, voluntary revelation, and voluntary exit, an allocation can be implemented only if: \textit{i)} the principal never wants to renege on her promises; \textit{ii)} information revelation is incentive-compatible for the low types, while the high types never want to make a false announcement; \textit{iii)} the agents' continuation payoffs never fall below their outside options, on and off the equilibrium path. These considerations lead to the following sets of constraints.

    \paragraph{Non-reneging constraints:}
    \begin{equation}\label{eq:nonreneging}
        \Pi_t \geq R_{t}\pi _{L}\left( q_{L}^{\ast }\right) + \left( 1-R_{t}\right)\pi _{H}\left( q_{H}^{\ast }\right)\text{, for all }t. \tag{NR}
    \end{equation}
    
    While the principal has to honor a one-period contract once it is accepted,\footnote{That is, if an agent is given a contract $(x_t,q_t)$ in period $t$, then the payment $x_t$ must be made as long as $q_t$ is produced.} she may renege on her past promises at the beginning of a period by presenting a contract that is different from what she has promised to offer.  We assume that the principal's reneging is observable to all agents, which makes it possible to punish such behavior. To understand the RHS, if the principal were to renege on her previous promises in period $t$, the worst punishment the agents can impose on the principal would be to cut off all future information. In this case, the best the principal can do after reneging is to extract the maximum surplus from the low-type agents who have revealed and, because of Assumption \ref{ass:1}, treat all unrevealed agents as if they were the high type.\footnote{
    The principal can always guarantee herself a payoff arbitrarily close to the RHS of \eqref{eq:nonreneging} by offering $\left(q_L^*,C_L(q_L^*)+\varepsilon\right)$ to each revealed agent and $\left(q_H^*,C_H(q_H^*)+\varepsilon\right)$ to each unrevealed agent in every future period. In Appendix \ref{app:B}, we explicitly construct a continuation equilibrium that generates this payoff.
    }
    
    We note that the RHS of \eqref{eq:nonreneging} is driven by two critical assumptions: \textit{i)} that the principal's reneging on any agent is publicly observable to all agents; and \textit{ii)} that low-type agents who haven't announced their type will react by stopping revealing information altogether. Technically, these assumptions lead to the worst punishment that agents as a group can impose on the principal. Economically, assuming the principal's reneging to be publicly observable captures the idea that workers often talk/complain to each other especially when they are mistreated by their manager. 
    Meanwhile, the agent's reaction to reneging could come from the following reasoning: once a worker sees his colleagues being mistreated by the manager, he will lose trust in the manager and believe that he would be treated the same way if he were to reveal his efficiency.\footnote{This reaction by the agent is indeed sequentially rational if the principal, upon reneging, would subsequently fully exploit anyone who later reveals to be the low-cost type.}
    
    \paragraph{Incentive compatibility constraints:\\}

    For the low type,
    \begin{align}\label{eq:IC-L}
        U^S_{L,t}(t) \geq \sup_{\mathbb{T}\geq t}(1-\delta) \sum_{\tau =t}^{\mathbb{T}}\delta ^{\tau -t}\left[ x_{\tau }^{P} -C_L\left(q_{\tau }^{P} \right)\right] + \delta^{\mathbb{T}} U^S_{L,\mathbb{T}}(\mathbb{T}), \text{ for all $t$ such that } r_t > 0,\nonumber\\
        \text{with equality if $0<r_t<\alpha-R_t$.}
        \tag{IC-L}
    \end{align}
\eqref{eq:IC-L} requires that, whenever a positive mass of agents makes the announcement in period $t$, any low-type agent should prefer to take the separating contracts beginning that period, rather than revealing at some optimal time later (perhaps never). Moreover, if some but not all low-type agents reveal in period $t$, then any low-type agent must be indifferent.

    For the high type,
    \begin{equation}\label{eq:IC-H}
        U^P_{H,t} \geq U^S_{H,t} : = \sup_{\mathbb{T}\geq t}\left( 1-\delta \right)
\sum_{\tau =t}^{\mathbb{T}}\delta ^{\tau -t}\left[ x_{\tau }^{S}\left(
t\right) -C_H\left(q_{\tau }^{S}\left( t\right) \right)
\right], \text{ for all }t. \tag{IC-H}
    \end{equation}
\eqref{eq:IC-H} insists that a high-type agent always
prefers to take the pooling contract, instead of falsely announcing his type in period 
$t$ followed by an optimal quitting decision. This constraint hedges against the \textit{``take-the-money-and-run"} strategy, in which a high type may reap an attractive initial payment in a separating contract and leave the relationship afterwards.

\paragraph{Participation constraints:}
\begin{align}
    U_{H,t}^P &\geq 0,\text{ for all }t. \tag{IR-H} \label{eq:IR-H} \\
U_{L,\ t+s}^{S}\left( t\right) &\geq 0,\text{ for all }t\text{ such that }r_{t} >0 \text{ and for all }s \geq 0. \tag{IR-L} \label{eq:IR-L}
\end{align}

\begin{definition}
    An allocation, $\left\{\left(q_t^P,x_t^P\right), \left(q^S_\tau(t),x^S_\tau(t)\right), r_t\right\}_{t\geq 0, \tau\geq t}$, is \textbf{implementable} by a sequential revelation mechanism if it satisfies \eqref{eq:nonreneging}, \eqref{eq:IC-H}, \eqref{eq:IC-L}, \eqref{eq:IR-H}, and \eqref{eq:IR-L}. 
\end{definition}

\subsubsection*{The Principal's Problem}
The principal aims to maximize her profit within the class of sequential revelation mechanisms. That is, she looks for an implementable allocation that maximizes her total discounted profit in period $0$:
\begin{align}\label{eq:principal}
\begin{aligned}
        \sup_{\left\{\left(q_t^P,x_t^P\right), \left(q^S_\tau(t),x^S_\tau(t)\right), r_t\right\}_{t\geq 0, \tau\geq t}} &\Pi_0 \\
        \text{s.t. }\eqref{eq:nonreneging},\  \eqref{eq:IC-H},\  \eqref{eq:IC-L},\  \eqref{eq:IR-H}, &\text{ and } \eqref{eq:IR-L}.
\end{aligned}
\end{align}

\begin{theorem}\label{thm:existence}
There is a solution to \eqref{eq:principal}. That is, an optimal mechanism exists.
\end{theorem}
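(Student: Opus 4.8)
The plan is to prove existence by the direct (Weierstrass) method: exhibit a nonempty feasible set, bound the objective, compactify the domain in the product topology, and verify that the objective is upper semicontinuous while the constraint set is closed, so that a maximizing sequence admits a subsequential limit that is feasible and attains the supremum.

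First I would record the two easy facts. The feasible set is nonempty: the static pooling allocation with $r_t=0$ and $(q_t^P,x_t^P)=(q_H^*,C_H(q_H^*))$ for all $t$ gives the high type a payoff of $0$ and satisfies \eqref{eq:nonreneging} with equality (since $\Pi_t=\pi_H(q_H^*)=(1-R_t)\pi_H(q_H^*)$ when $R_t=0$), while \eqref{eq:IC-H}, \eqref{eq:IC-L}, \eqref{eq:IR-H}, and \eqref{eq:IR-L} hold trivially. The objective is bounded above: per agent per period the principal's profit is at most $\max_\theta\pi_\theta(q_\theta^*)=\pi_L(q_L^*)$, so $\Pi_0\le\pi_L(q_L^*)$. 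Hence the supremum is a finite number $V^*\in[\pi_H(q_H^*),\pi_L(q_L^*)]$, and along a maximizing sequence we may assume $\Pi_0\ge\pi_H(q_H^*)/2>0$.

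Next I would compactify. Quantities live in $[0,\bar{q}]$ and the revelation masses satisfy $r_t\ge0$ and $\sum_t r_t\le\alpha_0$, so both already range over sets that are compact in the product topology (closed subsets of a Tychonoff product of compact intervals). The substantive work is to show that, without loss for the maximization, payments range over a fixed compact set as well. The key lever is \eqref{eq:nonreneging}: writing $\Pi_\tau=(1-\delta)(\text{period-}\tau\text{ profit})+\delta\Pi_{\tau+1}$ and using $0\le\Pi_\tau\le\pi_L(q_L^*)$ shows the \emph{aggregate} per-period profit is uniformly bounded, hence so is aggregate per-period payment; since the pooling group always has mass at least $1-\alpha_0>0$, this caps $x_t^P$ from above uniformly in $t$. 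Lower bounds on payments come from the participation constraints \eqref{eq:IR-H} and \eqref{eq:IR-L}, which prevent any continuation payment stream from being too negative, and the separating payments are capped from above through the take-the-money-and-run constraint \eqref{eq:IC-H}: a bounded $U^P_{H,t}$ forces every partial sum of the cohort-$t$ separating stream, and in particular the revelation-period payment $x_t^S(t)$, to be bounded. For cohorts with small mass $r_T$ I would use the convention (stated in the footnote to \eqref{eq:allocation}) that a zero-mass cohort is assigned the pooling contracts, together with the partial-sum bounds from \eqref{eq:IC-H}, to keep those coordinates in a fixed compact range. With all coordinates confined to fixed compact sets, the domain is compact.

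It then remains to check semicontinuity, and to conclude. Because payoffs and profits are discounted, each of $\Pi_t$, $U^P_{\theta,t}$, and $U^S_{L,t}(T)$ is a uniform limit of its finite-horizon truncations, each of which is continuous in the product topology; hence these functionals are continuous, and $\Pi_0$ is in particular upper semicontinuous, which is all I need. For the incentive constraints, the right-hand sides of \eqref{eq:IC-L} and \eqref{eq:IC-H} are suprema over deviation times $\mathbb{T}$ of continuous functions, hence lower semicontinuous; since ``(continuous) $\ge$ (lower semicontinuous)'' defines a closed set, both \eqref{eq:IC-L} and \eqref{eq:IC-H} are closed conditions, as are the continuous inequalities \eqref{eq:nonreneging}, \eqref{eq:IR-H}, and \eqref{eq:IR-L}. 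The conditional equality in \eqref{eq:IC-L} is preserved under limits too: its trigger $0<r_t<\alpha_0-R_t$ is an open condition, so if it holds at a limit point it holds along a tail of the sequence, where the equality was imposed and therefore passes to the limit by continuity. Thus the feasible set is a closed subset of a compact domain, hence compact; being nonempty, the upper semicontinuous objective attains its maximum on it. The main obstacle is the payment-bounding step: obtaining \emph{uniform} control of payments, both to guarantee a compact effective domain and to ensure the infinite-horizon tails of the constraint and objective functionals are uniformly small. The subtle case is a cohort whose mass $r_T$ is positive but vanishing along the maximizing sequence, since such payments are invisible to the aggregate profit bound and must instead be controlled through \eqref{eq:IC-H}, \eqref{eq:IR-L}, and the zero-mass normalization; getting this tail control right---rather than the Weierstrass machinery itself---is where the real effort lies.
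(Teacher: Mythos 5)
Your high-level strategy coincides with the paper's: both proofs take a maximizing sequence, confine all coordinates to compact sets, extract a product-topology limit, and use continuity of the discounted objective together with the fact that the constraints are weak inequalities. The divergence---and the genuine gap---is in how payments get confined to a compact set. You try to deduce uniform payment bounds from the constraints themselves, but the constraints do not imply such bounds for the separating streams. Concretely, fix an implementable allocation, a cohort revealing at $T$ with mass $r_T>0$, a period $s>T$, and $M>0$, and perturb only that cohort's payments by $x_s^S(T)\mapsto x_s^S(T)-M$ and $x_{s+1}^S(T)\mapsto x_{s+1}^S(T)+M/\delta$. Every continuation value $U^S_{L,\tau}(T)$ with $\tau\le s$ is unchanged and $U^S_{L,s+1}(T)$ rises, so \eqref{eq:IR-L} and \eqref{eq:IC-L} are undisturbed; every partial sum in the definition of $U^S_{H,T}$ is unchanged except the one truncated exactly at $s$, which falls, so \eqref{eq:IC-H} is weakly relaxed. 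The only constraint that registers the perturbation is \eqref{eq:nonreneging} at $s+1$, and it tightens by $r_T(1-\delta)M/\delta$---an amount proportional to the cohort's mass. So whenever \eqref{eq:nonreneging} has slack at some period after some cohort's revelation (which happens precisely in the interesting, revealing regime; the allocation constructed in the proof of Part 2 of Theorem \ref{thm:nsc} is of this kind when \eqref{eq:nsconditionrevealing} holds strictly), payments of size $M$ of order $1/r_T$ remain feasible, and this blows up as the cohort mass shrinks. In short, \eqref{eq:IC-H} and \eqref{eq:IR-L} bound partial sums from above and continuation values from below, respectively, and these two one-sided controls can never cap individual payments; the only individual payment they pin down is the revelation-period one, $x_t^S(t)$.

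The zero-mass convention does not repair this: along the sequence the offending cohort has positive mass, its payment coordinates simply fail to converge, and---worse---the mass-weighted flow $r_TM$ in the example is of order one, so even after replacing that cohort's contracts by the convention in the limit, the profit of the limit allocation can differ from the limit of profits, which is exactly the continuity failure your Weierstrass argument must exclude. (Your cap on $x_t^P$ has a milder, fixable version of the same issue: an aggregate payment bound caps one group's payment only if the other groups' payments are already bounded below; the clean route is through continuation values, using \eqref{eq:nonreneging} to bound total agent continuation payoffs above and \eqref{eq:IR-H} to bound each below, which places $U^P_{H,t}$ in a fixed interval.) The missing idea---and what the paper actually does---is to \emph{impose} bounds without loss rather than derive them: given any implementable allocation, the frontloading construction (Claim \ref{cl:frontloading}, together with the pooling normalization $x_t^P=C_H(q_t^P)$ of Claim \ref{cl:poolingpayment}) yields another implementable allocation with the \emph{same} profit whose payments all lie in $[0,C_H(\bar q)]$, so the maximization may be restricted to this normalized class from the outset. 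Once that replacement is made, the remainder of your argument---truncation-based continuity of $\Pi_0$, lower semicontinuity of the deviation suprema, closedness of the weak inequalities, and the open-trigger argument for the conditional equality in \eqref{eq:IC-L}---is correct, and indeed more detailed than the paper's own write-up.
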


\section{Properties of Optimal Mechanisms}\label{sec:structure}
We first define some terminology regarding the information revealed in a sequential revelation mechanism.
\begin{definition}
    A sequential revelation mechanism is \textbf{revealing} if $r_t>0$ for some $t$.  A sequential revelation mechanism is \textbf{non-revealing}  if $r_t=0$ for all $t$.
\end{definition}

\begin{definition}
    For a revealing mechanism, we say that information revelation is \textbf{gradual} if $r_t>0$ for all $t$, and that information revelation is \textbf{eventual} if $
    \sum_{t=0}^\infty r_t =\alpha_0$.
\end{definition}

Given Assumption \ref{ass:1}, if the principal does not seek to induce any information revelation, the best she can do is to offer the pooling contract $\left(q_H^*,C_H(q_H^*)\right)$ in every period to all agents, obtaining profit $\pi_H(q_H^*)$. In this context, we are specifically interested in the following questions: (When) Does an optimal mechanism involve information revelation? And if so, what are the qualitative features of the contracts offered to the agents? In this section, we characterize the structure of optimal contracts, \textit{assuming that information revelation can be fostered;} in the next section, we provide a necessary and sufficient condition for optimal mechanisms to be revealing.

In what follows, we assume that the maximal profit is strictly greater than $\pi_H(q^*_H)$ (thus non-revealing mechanisms are suboptimal),\footnote{It can be shown that when information revelation is possible, this condition holds generically.} and that the discount factor $\delta$ is greater than some threshold $\bar{\delta}$ which we shall define later.

\begin{lemma}[Frontloading of Payment/Rent to L]\label{lem:frontloading}
In any optimal mechanism, the sequence of separating contracts $\left\{\left(q_\tau^S(t),x_\tau^S(t)\right)\right\}_{\tau\geq t}$ for agents who reveal in period $t$ satisfies
\begin{equation}
        x_\tau^S(t) =\begin{cases}
        C_H(q_\tau^S(t)), &\text{ if } t\leq \tau < t+s\\
        C_L(q_\tau^S(t)) + \beta \Delta C (q_\tau^S(t)), &\text{ if } \tau = t+s\\
        C_L (q_\tau^S(t)), &\text{ if } \tau > t+s\\
        \end{cases}
\end{equation}
for some $s\geq 1$ and $\beta \in (0,1]$.
\end{lemma}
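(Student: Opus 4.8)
The plan is to pass from payments to the low type's per-period \emph{rent} $\rho_\tau := x_\tau^S(t)-C_L(q_\tau^S(t))$ and show that optimality forces this rent to be pushed as early as possible, capped each period by $\Delta C(q_\tau^S(t))$. With this parametrization the low type's continuation value is $U_{L,t}^S(t)=(1-\delta)\sum_{\tau\ge t}\delta^{\tau-t}\rho_\tau$ and the principal's per-agent flow profit is $\pi_L(q_\tau^S(t))-\rho_\tau$, so for fixed quantities the promised value $U_{L,t}^S(t)$ depends only on $\sum_\tau\delta^{\tau-t}\rho_\tau$, not on the timing of the rent. I would first record that at an optimum the pooling contract fully extracts the high type, $U_{H,t}^P=0$: were it positive, lowering the pooling payment would extract strictly more from the entire mass of pooling agents while only relaxing \eqref{eq:IC-L} and \eqref{eq:nonreneging} and keeping \eqref{eq:IR-H}. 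Given $U_{H,t}^P=0$, the high type's payoff from mimicking and quitting at date $\mathbb{T}$ is $(1-\delta)\sum_{\tau=t}^{\mathbb{T}}\delta^{\tau-t}(\rho_\tau-\Delta C(q_\tau^S(t)))$, so \eqref{eq:IC-H} is equivalent to the cumulative ``take-the-money-and-run'' bound
\[
\sum_{\tau=t}^{\mathbb{T}}\delta^{\tau-t}\rho_\tau \;\le\; \sum_{\tau=t}^{\mathbb{T}}\delta^{\tau-t}\Delta C\big(q_\tau^S(t)\big)\qquad\text{for every }\mathbb{T}\ge t.
\]

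The core step is a rent-redistribution (frontloading) argument. Holding quantities and $U_{L,t}^S(t)$ fixed, I would move a small amount of rent from a later period $\tau_2$ to an earlier period $\tau_1<\tau_2$, keeping $\sum_\tau\delta^{\tau-t}\rho_\tau$ constant and all $\rho_\tau\ge0$. This leaves $\Pi_0$ unchanged (the discounted rent bill is fixed), preserves \eqref{eq:IC-L} and \eqref{eq:IR-L} (the total is fixed and continuation rents stay nonnegative), and weakly relaxes every \eqref{eq:nonreneging} constraint: for a date $t'$ with $\tau_1<t'\le\tau_2$ the principal's remaining rent obligation strictly falls, raising $\Pi_{t'}$, whereas for $t'\le\tau_1$ and $t'>\tau_2$ the continuation is unchanged. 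Hence more frontloaded payments are always weakly feasible and enlarge the set of implementable $(r,q)$.

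It remains to show that optimality forces \emph{maximal} frontloading subject to the cumulative bound, which pins down the three-part form ``in any optimal mechanism.'' Suppose an optimal mechanism is not maximally frontloaded. Frontloading it yields a mechanism with the same $\Pi_0$—hence also optimal—but with strict slack in some previously binding \eqref{eq:nonreneging}; I would then exploit that slack with a secondary perturbation (inducing a small additional mass to reveal at some date) to obtain a strict profit gain, contradicting optimality of the original. Thus any optimal mechanism saturates the cumulative bound as early as possible: $\rho_\tau=\Delta C(q_\tau^S(t))$, i.e.\ $x_\tau^S=C_H(q_\tau^S)$, for the first $s$ periods; a single transition period $t+s$ with $\rho_{t+s}=\beta\,\Delta C(q_{t+s}^S)$ chosen to deliver $U_{L,t}^S(t)$ exactly; and $\rho_\tau=0$, i.e.\ $x_\tau^S=C_L(q_\tau^S)$, thereafter. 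Rewriting the exact-fit case as a final full period gives $\beta\in(0,1]$, and $s\ge1$ because the binding \eqref{eq:IC-L} together with the upward-distorted low-type quantities yields a promised value exceeding one period's rent.

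The main obstacle I anticipate is precisely this forcing step. Because frontloading leaves $\Pi_0$ unchanged, the conclusion cannot follow from a direct profit comparison; I must verify that the non-reneging slack created by frontloading is convertible into a \emph{strict} profit gain—that raising some $r_\tau$, after reconciling \eqref{eq:IC-L} and the induced change in $R_\tau$, is both feasible and profitable. The second delicate point is establishing $U_{H,t}^P=0$ against a possibly binding \eqref{eq:IC-L}, where lowering the pooling payment may tip indifferent low types into revealing and thereby alter the revelation pattern. Once these are in place, the majorization solution of the cumulative bound and the sign of the non-reneging effect are routine.
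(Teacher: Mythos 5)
Your architecture matches the paper's: establish full extraction of the high type, show that reallocating rent earlier preserves profit and weakly relaxes \eqref{eq:nonreneging}, then force maximal frontloading by converting the resulting slack into a strict improvement, with the cumulative take-the-money-and-run bound pinning down the three-part payment form. However, the two steps you yourself flag as ``obstacles'' are not routine residuals---they constitute the bulk of the paper's proof---and as written your proposal has genuine gaps there. The central one is the forcing step. The paper proves (its Claim 8) that \eqref{eq:nonreneging} binds in every period of \emph{any} optimal mechanism, by shifting a small mass of agents who reveal at $\hat t$ to reveal at $\hat t-1$ when \eqref{eq:nonreneging} is slack at $\hat t$. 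Making this shift feasible and strictly profitable requires machinery you do not supply: (i) revelation must have no last period and must be gradual (the paper's Claims 2, 3, and 7), so that there is a mass at $\hat t$ available to shift and low types are indifferent; (ii) strict profitability comes from the fact that the pooling quantity satisfies $q_t^P < q_H^* < q_L^*$ (the paper's Claim 6), so shifted agents produce strictly more surplus one period earlier; and (iii)---the subtlest point---a sequential revelation mechanism permits only \emph{one} separating sequence per revelation date, so the shifted agents and those already revealing at $\hat t-1$ must be offered a single common sequence. The paper resolves this by constructing a ``convex combination'' of the two sequences in rent space and using concavity of $\pi_H$ and $\Delta C$ to show the merged offer loses no profit. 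Without (iii), your perturbation exits the feasible class of mechanisms altogether, so the contradiction never gets off the ground.

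Two further steps are under-justified. Your argument for $U^P_{H,t}=0$ (``lower the pooling payment'') collides with exactly the issue you note: \eqref{eq:IC-L} holds with \emph{equality} whenever $0<r_t<\alpha_0-R_t$, so cutting pooling payments destroys the low types' indifference and changes the revelation pattern. The paper instead jointly re-specifies the pooling payments to $C_H(q_t^P)$ \emph{and} rebuilds the separating payments via the frontloading construction so that indifference, all constraints, and the principal's profit are simultaneously preserved; the conclusion for every optimal mechanism then comes only after its Claim 8. Finally, your one-line reason for $s\geq 1$ (``upward-distorted low-type quantities'') is not the operative one: what is needed is that the low type's total rent---equal to his pooling payoff by indifference---is bounded away from zero at any optimum, which the paper derives from Assumption \ref{ass:1} (otherwise the principal's profit would fall below the guaranteed $\pi_H(q_H^*)$), while the rent deliverable per period is capped by $(1-\delta)\Delta C(\bar q)\to 0$; this is precisely where the patience threshold $\bar\delta$ enters, and your proposal never invokes it.
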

Lemma \ref{lem:frontloading} indicates that the separating contracts entail two phases: a \textit{reward phase} and an \textit{exploitation phase}. In the reward phase, the principal pays a positive rent to a low-type agent for at least two periods once he reveals his type. After the initial $s+1$ rent-paying periods, the exploitation phase starts and the rent disappears forever. This implies that an agent's continuation payoff after revealing declines over time and reaches zero in a finite number of periods.  We call each period in the reward phase a \textbf{debt period}, and we say that the principal is \textbf{in debt} with an agent in period $t$ if the agent's continuation payoff is positive.

To understand Lemma \ref{lem:frontloading}, note that the principal can always keep an agent's present value while adjusting the timing of payment to compensate him as much as possible in the initial periods following his type announcement. 
By fulfilling more today and promising less in the future, such frontloading of payment relaxes the principal's non-reneging constraint \eqref{eq:nonreneging} and makes room for a strict profit improvement.

However, there is a limit to how much rent can be paid to the low type in each period: If the principal pays too much in the initial periods of the separating contracts, the high type would be tempted to make a false announcement, work only in the first period to reap the high payment, and then exit the relationship; that is, \eqref{eq:IC-H} would be violated. In any optimal mechanism, the payoff of the high-type is always kept to be zero.  This implies that the principal can pay the low type at most $C_H(q_\tau^S(t))$ in any period without inducing the high type to take the money and run. By doing so, the principal can give at most $\Delta C(q_\tau^S(t))$ of rent to a low type in a period. When the discount factor is sufficiently large, the rent from each individual period matters little, so the principal has to use multiple debt periods to deliver the low-type's total information rent (i.e., $s + 1\geq 2$).\footnote{In any optimal mechanism, the low type's total information rent $U_{L,t}^S$ is equal to his payoff if he were never to reveal. This is positive and bounded away from zero, because by taking the pooling contract a low type gets paid $C_H(q_t^P)$ in each period while only incurring a cost of $C_L(q_t^P)$.}

\begin{lemma}[Gradual and Eventual Information Revelation]\label{lem:info} 
In any optimal mechanism, information revelation is gradual and eventual.
\end{lemma}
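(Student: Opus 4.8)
The plan is to establish the two properties separately, both powered by a single surplus-accounting observation: since \eqref{eq:IC-L} binds in any optimal mechanism, a low type's total rent equals his value from pooling forever, $U^P_{L,t}$, regardless of whether or when he reveals. Hence a given low-type agent extracts the same present value of rent whether he pools or reveals, and the principal's profit from him differs only through the \emph{surplus} his production generates. A pooling low type produces the distorted quantity $q^P_\tau \le q_H^\ast < q_L^\ast$, generating surplus $\pi_L(q^P_\tau) < \pi_L(q_L^\ast)$, whereas a revealed low type is eventually driven to $q_L^\ast$ in the exploitation phase, generating surplus that converges to the first best $\pi_L(q_L^\ast)$. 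Thus, holding rent fixed, moving a low type from pooling to revealing strictly raises the surplus extracted from him; moreover, thinning the pool of low types slackens the downward distortion of $q^P$, which additionally raises the principal's profit from the high types. This is the engine behind both conclusions.

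For \textbf{eventual} revelation, I would argue by contradiction: suppose $R_\infty := \sum_t r_t < \alpha_0$, so a positive mass $\alpha_0 - R_\infty$ of low types pools forever. I perturb the optimal mechanism by inducing an additional small mass $\varepsilon$ of these agents to reveal (at a late date), using separating contracts of the frontloaded form in Lemma \ref{lem:frontloading}, calibrated so that the delivered value equals $U^P_{L}$ and the per-period rent never exceeds $\Delta C(q^S_\tau)$ (so \eqref{eq:IC-L} and \eqref{eq:IC-H} continue to hold). By the accounting above this strictly raises profit at unchanged rent, contradicting optimality.

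For \textbf{gradual} revelation I would first rule out revelation stopping in finite time. If $r_t = 0$ for all $t \ge T^\ast$ while, by eventual revelation, all low types have revealed by $T^\ast$, then the agents revealing at $T^\ast$ are, by Lemma \ref{lem:frontloading} with $s \ge 1$, still in the reward phase—so the principal is \emph{in debt}—during some period $t' > T^\ast$. At such a $t'$ no further information ever arrives, and because the debt agents retain strictly positive continuation payoff, the principal's value from them is strictly below $\pi_L(q_L^\ast)$; hence $\Pi_{t'} < \alpha_0\,\pi_L(q_L^\ast) + (1-\alpha_0)\,\pi_H(q_H^\ast)$, violating \eqref{eq:nonreneging}. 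So the support of $\{t : r_t > 0\}$ is unbounded. To exclude interior gaps, suppose $r_{t_0} = 0$ but revelation resumes after $t_0$; shifting a small mass of that later revelation forward to $t_0$ moves those agents from pooling to (near-)exploitation during the intervening periods, which by the surplus accounting strictly increases profit at unchanged rent, again contradicting optimality.

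The main obstacle, in each perturbation, is verifying feasibility of \eqref{eq:nonreneging}: adding or advancing revelation raises $R_t$ (hence the reneging value on the right-hand side) while frontloading additional rent (lowering current profit on the left-hand side), so one must show the perturbed profit path still dominates the reneging path in \emph{every} period. I expect to handle this by exploiting the slack that the extra future information creates in later non-reneging constraints and, where a constraint binds at the insertion period, by spreading the perturbation over a block of periods and slightly slowing the revelation rate—using the continuum of agents and the infinite horizon to absorb the adjustment. Simultaneously controlling the induced change in the endogenous pooling quantity and in the remaining agents' continuation values (so that their \eqref{eq:IC-L} still holds) is the delicate bookkeeping at the heart of the argument.
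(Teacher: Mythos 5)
Your rent-fixed surplus-accounting engine is the same intuition the paper uses, but in both perturbations the step you defer---verifying \eqref{eq:nonreneging}---is precisely where your constructions break, and the fixes are not the ones you sketch. For eventual revelation: in any optimal mechanism \eqref{eq:nonreneging} binds in \emph{every} period (this is the paper's Claim \ref{cl:principalindifference}, proved via a concavification argument), so there is no slack anywhere to absorb an additive insertion. Concretely, inserting an extra mass $\varepsilon$ revealing at a late date $T$ tightens the one-step non-reneging inequality at each debt period $t\in\{T+1,\dots,T+s\}$: the inserted agents add rent payments and upward-distorted production to the flow cost, and they raise the reneging value through $R_t$ by $\varepsilon\left[\pi_L(q_L^*)-\pi_H(q_H^*)\right]$, while contributing no offsetting ``benefit'' term at those dates (their informational benefit accrues only at $T$ itself and after $T+s$). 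Since the inequality held with equality before, your perturbed allocation is infeasible. The paper's resolution is multiplicative, not additive: scale the \emph{entire} profile to $\hat r_t=(\alpha_0/\bar R)\,r_t$ with all contracts unchanged. Then in each period's inequality the benefit term (period-$t$ revealers) and the cost term (debt to past revealers) scale by the same factor, and the only unscaled term---the pooling-distortion loss, weighted by $1-R_{t+1}$---shrinks in magnitude because its sign is negative and its weight falls; this is what preserves \eqref{eq:nonreneging} at every date. Your closing remark about ``spreading the perturbation and slowing the rate'' points vaguely in this direction but does not deliver the argument.

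Two further problems with the gradual part. First, your no-gaps step has the same feasibility hole: shifting mass from $\hat t$ to an earlier date raises $R_{\hat t}$ and removes future information, so it \emph{tightens} \eqref{eq:nonreneging} at $\hat t$, and you need a reason this is affordable. The paper's key observation is that a gap itself creates the needed slack: if $r_t>0$, $r_{t+1}=0$ and $\hat t$ is the next revelation date, the principal is still paying debt at $t+1$ (Claim \ref{cl:bardelta}), so reneging at $t+1$ strictly dominates reneging at $\hat t$ (it saves the debt and loses nothing, as no one reveals in between); since \eqref{eq:nonreneging} holds at $t+1$, it must be strictly slack at $\hat t$, and that slack is exactly what absorbs the shifted mass. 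Second, your no-last-period step is circular as written: you invoke Lemma \ref{lem:frontloading}'s $s\geq 1$ to get debt extending past $T^*$, but in the paper that fact (Claim \ref{cl:bardelta}) is derived from the low type's indifference (Claim \ref{cl:lowtypeindifference}), which itself presupposes that revelation never stops (Claim \ref{cl:nolastperiod}). The paper instead rules out a last revelation date directly: after $T^*$, \eqref{eq:nonreneging} forces full exploitation of everyone (so no rent can be paid after $T^*$), and then \eqref{eq:IC-L} and \eqref{eq:IC-H} jointly require $(1-\delta)\Delta C(q_{T^*}^S(T^*))\geq \delta\Delta C(q_H^*)$, which fails for $\delta$ close to $1$. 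Your conclusion is the same, but the positive-continuation-rent input has to be established there, not imported from Lemma \ref{lem:frontloading}.
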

When an optimal mechanism is revealing, Lemma \ref{lem:info} indicates that in every period there will be a positive measure of low-type agents revealing themselves, and that every low-type agent will reveal his type in some period.

To see why information revelation must be gradual, note first that there cannot be a final period of revelation after which no further revelation occurs. Otherwise, the principal would not want to pay any information rent to any low types after that final revealing period. But this is inconsistent with the fact that the principal has to pay rent for at least two periods to the agents who reveal in that final period. Moreover, there cannot be any gaps between the periods with information revelation. Indeed, if $r_t$ and $r_{t+2}$ are positive but $r_{t+1}=0$, then the principal's non-reneging constraint \eqref{eq:nonreneging}  must be slack at $t+2$. This is because if the principal were to renege, he would rather renege at the beginning of $t+1$ than at $t+2$, as reneging one period earlier would save the rent she has to pay at $t+1$ without losing extra information (given that $r_{t+1} = 0$); in other words, the principal's incentive to renege is strictly less at $t+2$ than at $t+1$. But then, one can shift a small measure of agents who were supposed to reveal at $t+2$ to reveal at $t+1$, without violating \eqref{eq:nonreneging} at $t+2$ (or any other constraint).\footnote{In general, by moving agents to reveal earlier, it will hurt the non-reneging constraints in later periods because there is less future information to lose. The slackness of \eqref{eq:nonreneging} at $t+2$ makes it possible to induce earlier revelation without violating \eqref{eq:nonreneging}.} This variation successfully induces some low-types agents to reveal earlier and strictly benefits the principal.

Information revelation must also be eventual because if $\sum_{t=0}^\infty r_t <\alpha_0$, one can create another implementable allocation with the same set of contracts while scaling up the measure of the revealing agents in every period. Doing so will not violate any constraint but will induce more information to be revealed over time, thereby increasing the principal's profit.
\begin{lemma}[Downward Distortion of H's Quantity]\label{lem:poolingquantity}
In any optimal mechanism, the sequence of pooling contracts $\left\{\left(q_t^P,x_t^P\right)\right\}_{t\geq 0}$ satisfies $q_t^P<q_H^*$ for all $t$ and $\lim_{t\to\infty}q_t^P = q_H^*$. Moreover, $x_t^P = C_H(q_t^P)$.
\end{lemma}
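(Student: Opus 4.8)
The plan is to establish the payment identity first and then the two quantity claims, using the full-extraction result behind Lemma~\ref{lem:frontloading} and the characterization of revelation in Lemma~\ref{lem:info}. For $x_t^P=C_H(q_t^P)$: the high type only ever consumes the pooling contract, so the only rent the principal can still squeeze out of the pool is his continuation payoff $U_{H,t}^P$, and I would argue that in any optimum \eqref{eq:IR-H} binds at every date. If $x_t^P>C_H(q_t^P)$ at some $t$, lowering $x_t^P$ raises the pool profit and, since it lowers the value of the ``stay pooling'' option, relaxes \eqref{eq:IC-L} and lets the principal cut the information rent she owes; conversely, paying the high type more than his cost in any period is pure waste and only tightens \eqref{eq:nonreneging}. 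Hence $\rho_t:=x_t^P-C_H(q_t^P)=0$ for all $t$, and the low type's per-period pooling rent is exactly $\Delta C(q_t^P)$.

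For $q_t^P<q_H^*$ I would use a one-date variation. By Lemma~\ref{lem:frontloading} and the binding \eqref{eq:IC-L}, the rent delivered to each low type revealing at $t'\le t$ equals his pooling value, which contains the additive term $\Delta C(q_t^P)$; thus the total discounted rent the principal pays depends on $q_t^P$ through $R_{t+1}\,\Delta C(q_t^P)$, where $R_{t+1}=\sum_{s\le t}r_s$ is the mass revealing by $t$. Perturbing $q_t^P\mapsto q_t^P-\varepsilon$ (keeping $x_t^P=C_H(q_t^P)$) changes profit by a pool-profit term proportional to $(1-R_{t+1})\pi_H'(q_t^P)$ and a rent-saving term proportional to $R_{t+1}\Delta C'(q_t^P)$. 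At $q_t^P=q_H^*$ the pool-profit term vanishes while the rent saving is strictly positive (gradual revelation gives $R_{t+1}>0$, and $\Delta C'>0$), so the variation strictly improves profit; it is feasible because lowering $q_t^P$ relaxes \eqref{eq:IC-L} and weakly raises every continuation profit $\Pi_{\hat t}$, leaving \eqref{eq:nonreneging} (whose right-hand side is free of $q_t^P$) satisfied. The same sign computation rules out $q_t^P>q_H^*$, so $q_t^P<q_H^*$ at every date.

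The convergence $q_t^P\to q_H^*$ is the part I expect to be the main obstacle, precisely because it cannot come from the within-period trade-off above: the full-commitment first-order condition would set $\pi_H'(q_t^P)=\tfrac{R_{t+1}}{1-R_{t+1}}\Delta C'(q_t^P)$, which stays bounded away from zero as $R_{t+1}\to\alpha_0$, so local variations alone would leave distortion persistent. Convergence is instead a consequence of limited commitment, and I would extract it by squeezing the continuation profit. Full-information, zero-rent extraction gives the upper bound $\Pi_t\le\bar\Pi:=\alpha_0\pi_L(q_L^*)+(1-\alpha_0)\pi_H(q_H^*)$, while \eqref{eq:nonreneging} gives $\Pi_t\ge R_t\pi_L(q_L^*)+(1-R_t)\pi_H(q_H^*)$; by eventual revelation (Lemma~\ref{lem:info}) $R_t\to\alpha_0$, so the lower bound also tends to $\bar\Pi$, forcing $\Pi_t\to\bar\Pi$. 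This in turn forces both the rent from date $t$ on and the per-period surplus gap to vanish. Since the per-period surplus is at most $(1-\alpha_0)\pi_H(q_t^P)+\alpha_0\pi_L(q_L^*)$, its convergence to $\bar\Pi$ yields $\pi_H(q_t^P)\to\pi_H(q_H^*)$, and strict concavity of $\pi_H$ then gives $q_t^P\to q_H^*$. The delicate steps will be justifying the extraction identity used for $x_t^P=C_H(q_t^P)$ at every history, and transferring convergence of the discounted average surplus to the period-by-period quantities.
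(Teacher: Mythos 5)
Your payment claim ($x_t^P=C_H(q_t^P)$ at \emph{every} date) has a genuine gap, and it is exactly the step where the paper has to work hardest. Constraint \eqref{eq:IR-H} restricts continuation values $U_{H,t}^P$, not flow rents $\rho_t:=x_t^P-C_H(q_t^P)$, and the two are not interchangeable. At any optimum $U_{H,0}^P=0$ (otherwise reduce date-$0$ payments uniformly), and once $U_{H,0}^P=0$, a positive flow $\rho_t>0$ at some $t$ can only exist if it is financed by negative flows at earlier dates (indeed $U_{H,0}^P=0$ and $U_{H,1}^P\geq 0$ already force $\rho_0\leq 0$). Consequently your proposed perturbation---lower $x_t^P$ wherever $\rho_t>0$---is infeasible: it pushes $U_{H,0}^P$ (and possibly other $U_{H,s}^P$, $s\leq t$) strictly below zero, violating \eqref{eq:IR-H}. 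The only feasible rearrangements hold the present value of pool payments fixed, and those leave the principal's profit \emph{unchanged}, so there is no strict improvement available from payments alone; ``pure waste'' is not the operative force, and your argument never addresses why backloaded (negative-flow) payments are suboptimal. The paper closes this in two steps: frontloading the payments (Claims~\ref{cl:frontloading} and~\ref{cl:poolingpayment}) yields an \emph{equally} profitable allocation that weakly relaxes \eqref{eq:nonreneging}, strictly somewhere whenever some $\rho_t\neq 0$; and \eqref{eq:nonreneging} must bind in every period $t\geq 1$ at any optimum (Claim~\ref{cl:principalindifference}), because slack NR permits a strictly profitable shift of some revelation one period earlier. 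Together these rule out $\rho_t\neq 0$ in every optimal mechanism. This binding-NR machinery is the missing idea in your part 1, and since your downward-distortion perturbation presupposes the payment identity, the repair is needed there too.

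Your other two parts fare better. The downward-distortion variation (marginal rent saving $R_{t+1}\Delta C'(q_H^*)>0$ against a vanishing first-order surplus loss at $q_H^*$, activated by gradual revelation) is essentially the paper's Claim~\ref{cl:hightypedistortion} combined with Claim~\ref{cl:gradualrevelation}. The convergence argument is correct and genuinely different from the paper's: the paper derives $q_t^P\to q_H^*$ from the principal's \emph{indifference} between reneging at $t$ and $t+1$ together with $r_t\to 0$ (Claim~\ref{cl:poolingquantityconverge}), whereas your squeeze $R_t\pi_L(q_L^*)+(1-R_t)\pi_H(q_H^*)\leq \Pi_t\leq \bar\Pi$ with $R_t\to\alpha_0$ uses \eqref{eq:nonreneging} only as an inequality, plus eventual revelation and nonnegative agent rents, bypassing the binding-NR claim entirely for this step. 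Moreover, the step you flagged as delicate is immediate at fixed $\delta$: writing $S_t$ for flow surplus, $\bar\Pi-\Pi_t\geq (1-\delta)(\bar\Pi-S_t)\geq (1-\delta)(1-\alpha_0)\left[\pi_H(q_H^*)-\pi_H(q_t^P)\right]$, so the period-$t$ gap itself vanishes and strict concavity of $\pi_H$ finishes the proof. That is a clean, more elementary route to the convergence claim.
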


Lemma \ref{lem:poolingquantity} states that in any optimal mechanism, the high-type agents are asked to produce less than their surplus-maximizing quantity in each period and the quantity produced converges to the surplus-maximizing level in the long run. This is an analogue of the standard distortion in screening models: the low-type agents in our environment are given information rent to reveal their type, and the total rent each of them gets depend on what he would have got if he were never to reveal. In an optimal mechanism, the pooling contracts always pay $C_H(q_t^P)$ to give zero payoffs to the high types. This means that by taking the pooling contract in period $t$, a low-type agent can obtain a positive payoff of $\Delta C(q_t^P)$ because he only incurs a cost of $C_L(q_t^P)$. Therefore, lowering $q_t^P$ from $q_H^*$ is profitable, because it strictly reduces the rent having to be paid to the low types while only having a second-order effect on the principal's profit from the high types. The convergence of $q_t^P$ to $q_H^*$ over time follows from Lemma \ref{lem:info}, as all low-type agents opt out of the pooling contracts eventually. 

\begin{lemma}[Upward Distortion and Frontloading of L's Quantity]\label{lem:separatingquantity}
In any optimal mechanism, the sequence of separating contracts $\left\{\left(q_\tau^S(t),x_\tau^S(t)\right)\right\}_{\tau\geq t}$ for agents who reveal in period $t$ satisfies: If $\{t, t+1, ..., t+s\}$ is the set of debt periods, then $q_t^S(t) > q_{t+1}^S (t) > ... > q_{t+s}^S (t) \geq q_L^{*}$ and $q_{\tau}^S(t) = q_L^*$ for all $\tau > t+s$.
\end{lemma}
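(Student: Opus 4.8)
The plan is to pin down the three claims---exploitation at $q_L^*$, the weak ordering together with the lower bound $q_L^*$, and the strict ordering---through local perturbations of a candidate optimal allocation that hold fixed each revealing agent's total continuation value, so that all of \eqref{eq:IC-L} and \eqref{eq:IC-H} (and their cross-cohort versions, which reference only the total values $U^S_{L,\cdot}(\cdot)$) are left untouched, and then read off the sign of the effect on the objective $\Pi_0$ and on the non-reneging constraints \eqref{eq:nonreneging}. Throughout I use the payment structure from Lemma \ref{lem:frontloading}: in a full debt period the payment is $C_H(q_\tau^S(t))$, so the principal's per-period profit is $\pi_H(q_\tau^S(t))$ and the per-period rent is $\Delta C(q_\tau^S(t))$; in the exploitation phase the payment is $C_L(q_\tau^S(t))$, the agent's per-period payoff is zero, and the principal earns $\pi_L(q_\tau^S(t))$.

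First I would dispose of the exploitation phase. For $\tau>t+s$, varying $q_\tau^S(t)$ while keeping $x_\tau^S(t)=C_L(q_\tau^S(t))$ leaves the agent's payoff at zero (so \eqref{eq:IC-L} is unaffected) and cannot raise the high type's deviation value in \eqref{eq:IC-H}, since those periods are strictly loss-making for a mimicking high type and so his optimal stopping time never reaches them. The principal's profit that period is $\pi_L(q_\tau^S(t))$, which is maximized at $q_L^*$ and whose increase only relaxes every \eqref{eq:nonreneging} dated $\le\tau$; hence optimality forces $q_\tau^S(t)=q_L^*$. For the lower bound in the reward phase, suppose $q_\tau^S(t)<q_L^*$ in some debt period. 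Raising $q_\tau^S(t)$ toward $q_L^*$ strictly increases the surplus $\pi_L$ produced that period, and I offset the extra rent by lowering the coefficient $\beta$ at the transition period $t+s$ (or trimming rent at another debt period), keeping the agent's total value fixed. Since total rent is held fixed while surplus rises, total profit rises and every \eqref{eq:nonreneging} is relaxed---a contradiction---so $q_\tau^S(t)\ge q_L^*$ throughout.

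For the ordering, the central device is the frontloading perturbation: for consecutive debt periods $\tau,\tau+1$, raise $q_\tau^S(t)$ and lower $q_{\tau+1}^S(t)$ so that the present value of the rent $\Delta C$ delivered over the two periods is unchanged. A short computation gives that the induced change in $\Pi_0$ (and in each $\Pi_{t'}$ with $t'\le\tau$) is proportional to $h(q_\tau^S(t))-h(q_{\tau+1}^S(t))$, where $h:=\pi_H'/\Delta C'=\pi_L'/\Delta C'-1$, while the continuation profit $\Pi_{\tau+1}$ strictly rises, because $q_{\tau+1}^S(t)\ge q_L^*>q_H^*$ yields $\pi_H'<0$; thus \eqref{eq:nonreneging} at $\tau+1$ is relaxed. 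Using $\pi_L''<0$ together with the assumed concavity $\Delta C''\le0$, one checks that $h$ is strictly decreasing on $[q_L^*,\bar q]$. Hence if $q_\tau^S(t)<q_{\tau+1}^S(t)$ then $h(q_\tau^S(t))>h(q_{\tau+1}^S(t))$, so frontloading strictly raises profit while relaxing all constraints---impossible. This establishes the weak ordering $q_\tau^S(t)\ge q_{\tau+1}^S(t)$.

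The strict ordering is where the real work lies, and I expect it to be the main obstacle. At $q_\tau^S(t)=q_{\tau+1}^S(t)$ the first-order profit effect of frontloading vanishes ($h$ agrees at the two arguments), so the only thing that can break the tie is a binding non-reneging constraint: along the value-preserving frontloading direction the Lagrangian derivative reduces to $\mu_{\tau+1}\,d\Pi_{\tau+1}$, which is strictly positive whenever the multiplier $\mu_{\tau+1}$ on \eqref{eq:nonreneging} at $\tau+1$ is positive, contradicting stationarity; conversely, if that constraint were slack, reversing the perturbation (backloading) shows equality is consistent. The crux is therefore to prove that \eqref{eq:nonreneging} binds in every debt period $t+1,\dots,t+s$. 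The argument I would pursue is that slack in \eqref{eq:nonreneging} at a debt period means the principal is frontloading more than necessary: one can then move a sliver of rent from that date to a later date inside the reward phase, shift the corresponding quantity toward $q_L^*$, and strictly raise surplus without violating the (now slack) constraint---contradicting optimality. Establishing this binding pattern globally (across the overlapping cohorts that are simultaneously in debt in any given period, which exist because revelation is gradual by Lemma \ref{lem:info}) and reconciling it with the transition period $t+s$, where the rent coefficient is $\beta$ rather than $1$ and $\beta$ is itself a choice variable, are the delicate points; once the binding pattern is secured, the frontloading computation upgrades the weak ordering to $q_t^S(t)>\cdots>q_{t+s}^S(t)\ge q_L^*$.
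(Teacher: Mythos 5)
Your treatment of the exploitation phase, the lower bound $q_\tau^S(t)\geq q_L^*$, and the weak ordering is sound and essentially matches the paper's argument (your monotonicity of $h=\pi_H'/\Delta C'$ is the same computation the paper performs directly with $\pi_H'$ and $\Delta C'$, using concavity of $\pi_H$ and $\Delta C$). The gap is exactly where you predicted: the strict ordering. Your route requires that \eqref{eq:nonreneging} bind (indeed, carry a strictly positive multiplier) at every debt period, but your proposed proof of binding fails. The backloading perturbation you describe---lowering rent at the slack date $p$ and raising it at a later date $p'$---does not threaten the constraint at $p$ at all; it tightens \eqref{eq:nonreneging} at dates $p+1,\dots,p'$, because from the viewpoint of those dates the rent reduction at $p$ is sunk while the added rent at $p'$ is still owed. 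If instead you backload rent from $p-1$ into the slack period $p$ (the only version that actually consumes the slack at $p$), the first-order profit gain is proportional to $h(q_p)-h(q_{p-1})$, which is zero precisely in the equality case $q_{p-1}=q_p$ that you need to rule out. So the argument is circular: rent-reallocation perturbations yield a contradiction only when the strict ordering already holds. Relatedly, your remark that slackness at $\tau+1$ would make equality ``consistent'' is false in the paper---slackness of \eqref{eq:nonreneging} is by itself incompatible with optimality---which signals the missing ingredient.

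The missing idea is Claim \ref{cl:principalindifference} and, specifically, the variation in its proof: if \eqref{eq:nonreneging} is slack at some period, the principal can induce a small mass $\phi$ of the agents who were going to reveal at that period to reveal one period earlier, giving them a contract that produces $q_L^*$ in the first period and thereafter replicates their old contract, packaged together with the existing offer as a single ``convex combination'' contract (so that only one separating sequence is posted per date; feasibility uses concavity of the per-period profit as a function of the rent delivered). Since, by gradual revelation, these agents would otherwise produce the distorted pooling quantity $q^P<q_H^*$ in that period, this deviation yields a first-order profit gain; hence slack \eqref{eq:nonreneging} contradicts optimality and the constraint binds in every period. With this in hand, the equality case is dispatched without multipliers: frontload as you propose (second-order surplus loss, but first-order slack created at $\tau+1$), then apply the revelation-shifting variation with $\phi$ proportional to $\epsilon$, converting the slack into a first-order gain. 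Without some version of this revelation-shifting deviation---whose value is exactly what makes your ``multiplier'' on \eqref{eq:nonreneging} strictly positive---the strict inequalities $q_t^S(t)>\dots>q_{t+s}^S(t)$ cannot be reached. (The transition-period issue you flag is minor by comparison: the paper settles it in a footnote by showing that in the contradiction case $q_{\tau+1}^S(t)>q_\tau^S(t)\geq q_L^*$, one must have $x_{\tau+1}^S(t)=C_H(q_{\tau+1}^S(t))$, since otherwise jointly lowering $q_{\tau+1}^S(t)$ and $x_{\tau+1}^S(t)$ is itself a profitable deviation.)
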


Lemma \ref{lem:separatingquantity} implies that the principal never asks a low-type agent to produce below the surplus-maximizing level and will distort the quantities upward in all (non-final) debt periods. 
Moreover, the quantities required in the separating contracts are decreasing over time throughout the debt periods.

To understand the direction of distortion, note that if $q_\tau^S(t)< q_L^*$, the principal can increase $q_\tau^S(t)$ and $x_\tau^S(t)$ at the same time to keep the low-type agent's payoff unaffected while improving the surplus and her profit. 
Moreover, further increasing $q_\tau^S(t)$ above $q_L^*$ in a debt period can help the principal frontload information rent and thus relax her non-reneging constraints in future periods.

The quantities in the separating contracts must be decreasing across the debt periods, because otherwise if $q_\tau^S(t)\leq q_{\tau+1}^S(t)$ for a debt period $\tau$, the principal can then increase $\left(q_\tau^S(t), x_\tau^S(t)\right)$ by $(\varepsilon, \varepsilon C_H')$ and decrease $\left(q_{\tau+1}^S(t), x_{\tau+1}^S(t)\right)$ by $(\delta^{-1}\varepsilon, \delta^{-1}\varepsilon C_H')$, leaving each type's payoff unchanged. This perturbation is profitable, as it leads to a higher average surplus from the low types (due to the concavity of the surplus function) and also allows the principal to frontload some information rent from $\tau+1$ to $\tau$.\footnote{Indeed, this perturbation increases the low type's payoff by $\varepsilon\Delta C'$ at $\tau$ and decreases it by $\delta^{-1}\varepsilon\Delta C'$ at $\tau+1$.}  

Combining these results, the following theorem characterizes the qualitative structure of the contracts offered in an optimal mechanism.
\begin{theorem}\label{thm:structure}
    Suppose that non-revealing mechanisms are suboptimal. 
    There exists a $\bar{\delta}$ such that if $\delta>\bar{\delta}$, then in every optimal mechanism,
    \begin{enumerate}
        \item Information revelation is gradual and eventual;
        \item After a low-type agent reveals his type, both the amount of work demanded and the payment he receives start high and decrease over time;
        \item A low-type agent is given positive rents for a finite number of periods following his revelation, after which the principal extracts full surplus from him;
        \item The amount of work demanded in the pooling contracts starts below the
surplus-maximizing level and converges to it over time.
    \end{enumerate}
\end{theorem}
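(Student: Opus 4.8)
The plan is to obtain Theorem \ref{thm:structure} as a synthesis of the four preceding lemmas rather than through a fresh argument; the threshold $\bar\delta$ is simply inherited from the lemmas (in particular Lemma \ref{lem:frontloading} and Lemma \ref{lem:separatingquantity}), which already require $\delta$ to be large, so that each low type needs at least two debt periods to be paid his total rent. Part 1 is immediate from Lemma \ref{lem:info}, and Part 4 is immediate from Lemma \ref{lem:poolingquantity}. For Part 3, I would read off Lemma \ref{lem:frontloading}: during the $s+1$ periods $t,\dots,t+s$ of the reward phase the low type is paid either $C_H(q_\tau^S(t))$ or, at $\tau=t+s$, $C_L(q_{t+s}^S(t))+\beta\Delta C(q_{t+s}^S(t))$ with $\beta>0$, leaving him strictly positive rent, whereas from $\tau=t+s+1$ onward the payment equals $C_L(q_\tau^S(t))$, so his continuation rent vanishes and the principal extracts full surplus thereafter.

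The only clause needing real work is Part 2, the joint decrease of the demanded quantity and the payment over the life of a separating contract. The quantity side is immediate from Lemma \ref{lem:separatingquantity}: $q_t^S(t)>q_{t+1}^S(t)>\cdots>q_{t+s}^S(t)\geq q_L^*$ and $q_\tau^S(t)=q_L^*$ for $\tau>t+s$, so the quantity is strictly decreasing through the reward phase and constant afterward, hence decreasing overall. For the payment, I would substitute this ordering into the three-case formula of Lemma \ref{lem:frontloading} and verify monotonicity both within each phase and across the two phase boundaries.

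The (routine) crux is the two transitions. Inside the reward phase the payment is $C_H(q_\tau^S(t))$ for $\tau<t+s$, which strictly decreases since $C_H$ is increasing and the quantities strictly fall. Crossing into the last reward period, the payment moves from $C_H(q_{t+s-1}^S(t))$ to $C_L(q_{t+s}^S(t))+\beta\Delta C(q_{t+s}^S(t))$; because $\beta\leq 1$ the latter is at most $C_H(q_{t+s}^S(t))$, which is strictly below $C_H(q_{t+s-1}^S(t))$ as $q_{t+s}^S(t)<q_{t+s-1}^S(t)$, so the payment keeps falling. Crossing into the exploitation phase, it drops from $C_L(q_{t+s}^S(t))+\beta\Delta C(q_{t+s}^S(t))$ to $C_L(q_L^*)$; since $q_{t+s}^S(t)\geq q_L^*$ and $\beta\Delta C(q_{t+s}^S(t))>0$, the former strictly exceeds the latter. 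Within the exploitation phase the payment is constant at $C_L(q_L^*)$. Thus both the quantity and the payment start high and are weakly decreasing throughout, which establishes Part 2 and, together with the three observations above, completes the theorem.
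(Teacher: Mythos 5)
Your proposal is correct and follows exactly the paper's own route: the paper proves Theorem \ref{thm:structure} in one line, as a direct consequence of Lemmas \ref{lem:frontloading} through \ref{lem:separatingquantity}, with the threshold $\bar{\delta}$ inherited from the lemmas' proofs just as you describe. Your explicit verification of payment monotonicity across the two phase transitions (using $\beta\in(0,1]$ so that $C_L+\beta\Delta C\leq C_H$, and $q_{t+s}^S(t)\geq q_L^*$ for the drop into the exploitation phase) is a detail the paper leaves implicit, and it checks out.
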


\begin{remark}
    Fixing the parameters, there may be multiple optimal mechanisms, but our qualitative characterization applies to every one of them. Moreover, a revealing mechanism and a non-revealing mechanism cannot be optimal at the same time.
\end{remark}
\begin{remark}
    Our analysis allows each sequence of separating contracts to be tailored to the revealing date. Restricting these sequences to be independent of such date does not change our results: the characterization in Theorem \ref{thm:structure} holds verbatim and the condition in Theorem \ref{thm:nsc} remains the same.
\end{remark}

{
Theorem \ref{thm:structure} summarizes the testable implications of our model for the work and wage dynamics in equilibrium. In particular, it makes sharp predictions about the life cycle of an agent's contract, depending on his type. As illustrated in Figure \ref{fig:lifecycle}, for an inefficient agent, his wage and work amount gradually increase over time, but both of them remain at a relatively low level throughout the game. On the other hand, an efficient agent's contract exhibits an inverse U-shaped pattern. His wage and work amount may start low, but once the principal learns about his competence, he will receive a major promotion that comes with a significant increase in both wage and responsibilities. However, due to the principal's limited commitment, such high-powered incentives providing information rent will not last forever: payments and quantities will gradually fall to the surplus-maximizing levels, at which point the principal starts to fully exploit the agent.

\begin{figure}[ht]
	\centering
	\begin{subfigure}[t]{.49\linewidth}
 \includegraphics[width=1\linewidth]{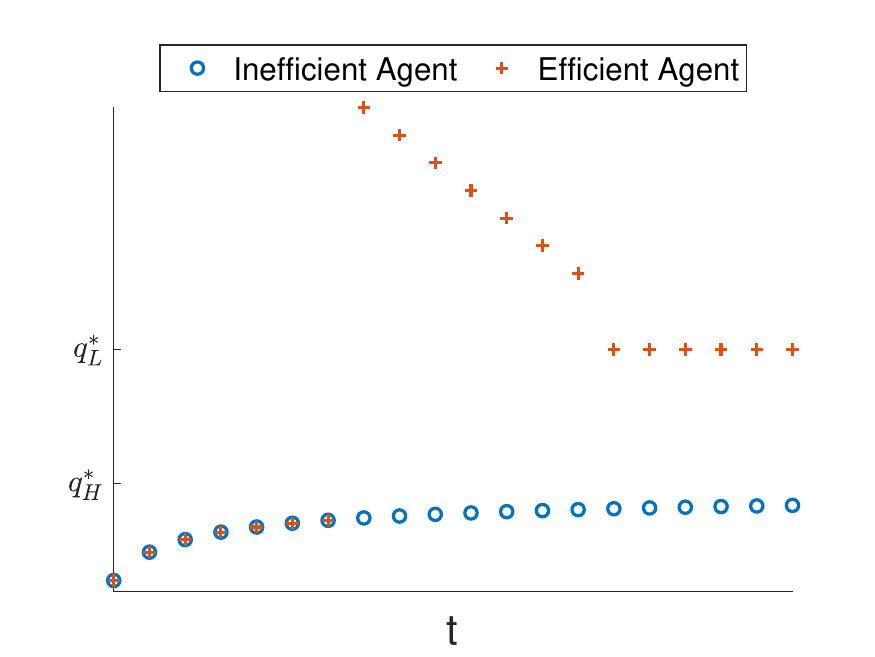}
		\caption{Type-Dependent Production Dynamics}
	\end{subfigure}
	\begin{subfigure}[t]{.49\linewidth}
 \includegraphics[width=1\linewidth]{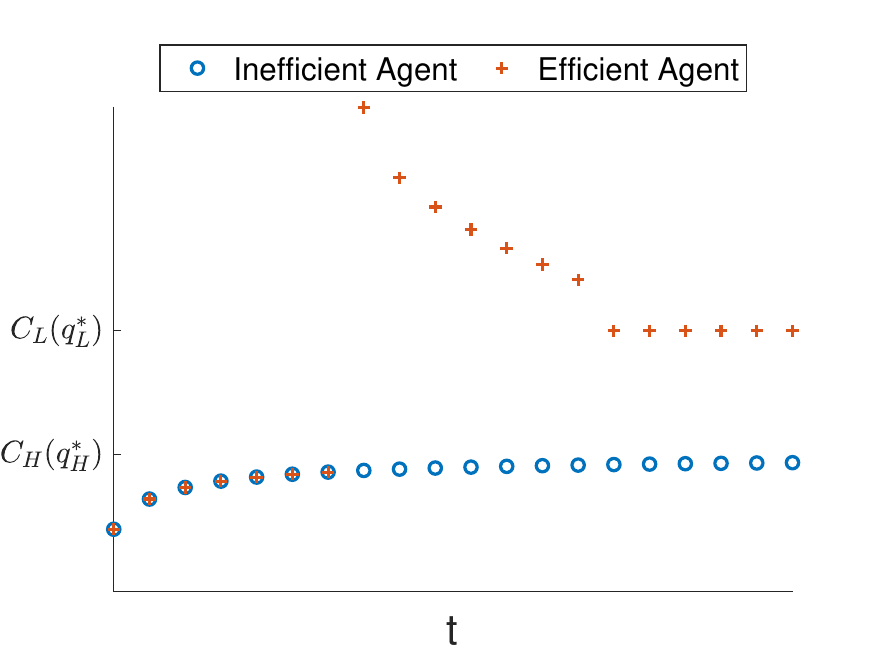}
		\caption{Type-Dependent Payment Dynamics}
	\end{subfigure}
	\caption{Life Cycle of an Agent's Contract. This figure is plotted under the following functional forms and parameter values: $v(q) = 2q^{1/2}$, $C_\theta (q) = \theta q$, $\theta_H=3$, $\theta_L = 2$; the separating contracts are for an efficient agent who reveals his type at $t=8$.}
	\label{fig:lifecycle}
\end{figure}

In industries where contracts are subject to renegotiation and skills are project-specific, we do observe such inverse U-shaped patterns for capable workers. For example, IT consultants and contractors, after demonstrating their capabilities of implementing certain projects, are typically hired with remunerative contracts for initial infrastructure setups. These contracts are usually renegotiated with reduced rates once the project enters a maintenance-focused phase.\footnote{In 2022, Gartner highlighted and advised that as companies moved toward routine maintenance phases post-initial IT infrastructure setup, many renegotiated contracts to reflect lower rates (see Kasey Panetta, ``IT Budgets Are Growing. Here's Where the Money's Going," \textit{Gartner}, October 21, 2021; Lori Perri, ``Trying to Cut Software and Cloud Costs? Here's How," \textit{Gartner}, December 16, 2022).} 
In the entertainment industry, actors in multi-season TV shows/movies are offered lucrative contracts once they prove their fit for certain characters. However, pay cuts are sometimes renegotiated in later seasons, despite a show's continued success and an actor's consistently good performance.\footnote{For example, the main cast of \textit{``The Simpsons"} agreed to a significant salary reduction in their 2011 contract renegotiation with Fox in order to keep the show on the air (see Alex Ben Block and Kim Masters, ``Simpsons Cast Blinks in Salary Showdown With Fox," \textit{The Hollywood Reporter}, October 7, 2011). In another instance, Stephen Amell, a leading actor in \textit{``Arrow,"}  noted in multiple interviews he was prepared to continue working at a potentially ``reduced pay structure" for future seasons. The last season of the show moved to shorter 10-episode runs instead of the usual 22-episode seasons.} In research-intensive universities, professors are promoted with tenure after demonstrating their strong research ability. Over time, tenured faculty face less pressure on research but are often assigned research-unrelated administrative work which could be argued as effectively reducing their real pay.

One should not take these predictions only at face value, as our model abstracts away from other realistic features, such as increasing productivity over time, which naturally lead to rising work and wages. Instead, our analysis serves to identify two important economic forces on screening contracts in environments with limited commitment. To induce competent workers to reveal, they have to be rewarded with sufficient information rent, leading to the initial raises in pay and responsibilities after information disclosure. But since the principal is unable to commit to not using information about a worker's type forever, the rent will disappear after some periods, resulting in a negative force on quantities and wages. If productivity also increases over time, we may not necessarily observe actual declines in compensation, but rather the flattening of it following a major raise, as seen with corporate executives and law/consulting firm partners.
}

\section{When Is Information Revelation Feasible?}
The analysis in the previous section presumes that the optimal mechanism is revealing. We now turn to deriving conditions under which this presumption is true. Our result will uncover a new tradeoff between fostering future information revelation and exploiting past disclosures.
\subsection{An Intuitive Derivation}\label{sec:intuitivederivation}
Note that the key to overcoming the ratchet effect in our setting is to make sure that the principal does not want to renege on paying the information rent promised to the low-type agents. This is the premise for any low types to be willing to reveal their private information. 

We begin with some heuristic calculations to illustrate the main tradeoffs faced by the principal.  Consider a stationary auxiliary environment where: \textit{i)} a constant measure  $r$ of low-cost agents reveal in each period; \textit{ii)} the principal does not distort the quantity demanded from the high-cost agents; and \textit{iii)} for $T$ debt periods after a low-type agent reveals, the principal demands a fixed quantity $\tilde{q}_L\in [q_L^*,\bar{q}]$ while paying him $C_H(\tilde{q}_L)$. These features mirror the key properties of an optimal revealing mechanism (i.e., Lemmas \ref{lem:frontloading} through \ref{lem:separatingquantity}), at least asymptotically, while the stationarity of this auxiliary environment simplifies the examination of the principal's reneging incentives.

Specifically, in this stationary auxiliary environment, reneging is never optimal for the principal if and only if waiting one more period to renege is better than reneging right away. From the principal's perspective, if he waits one more period and then reneges, an additional measure $r$ of low-type agents will reveal in the current period, enabling the principal to make an extra profit of 
\begin{align*}
    r\delta\left[\pi_L(q_L^*) - \pi_H(q_H^*)\right]
\end{align*}
by exploiting them starting the next period. 

On the other hand, waiting one more period to renege requires the principal to pay the promised rent to those agents with whom she is still in debt. Because in period $t$ the principal is in debt with agents who revealed at $t-1$, $t-2$, ..., $t-T$ and each of these agents gets paid $C_H(\tilde{q}_L)$ while producing $\tilde{q}_L$, the total loss of profit from these agents in the \textit{current} period if the principal honors her promises is\footnote{The loss of profit per agent, $\pi_L(q_L^*)-\pi_H(\tilde{q}_L)$, is equal to the information rent paid to a low-type agent, $\Delta C(\tilde{q}_L)$, plus the inefficiency loss due to quantity distortions, $\pi_L(q_L^*)-\pi_L(\tilde{q}_L)$.}
$$(1-\delta)Tr\left[\pi_L(q_L^*)-\pi_H(\tilde{q}_L)\right].$$

Note further that the number of debt periods $T$ is endogenous as it is determined by the number of periods needed to deliver a low-type's total information rent. Indeed, $T$ has to satisfy
\begin{align*}
    (1-\delta)\sum_{\tau=0}^{T-1}\delta^\tau\Delta C(\tilde{q}_L) = \Delta C(q_H^*)
\end{align*}
where the RHS is the total information rent needed to induce a low-type to reveal (equaling his payoff if he were to take the pooling contract forever) and the LHS is the discounted sum of his rents from the $T$ debt periods after revealing. When $\delta$ approaches one, this condition implies that\footnote{To see this, note that $ (1-\delta)\sum_{\tau=0}^{T-1}\delta^\tau\Delta C(\tilde{q}_L) = \Delta C(q_H^*)$ implies that $1-\delta^T = \frac{\Delta C({q}_H^*)}{\Delta C(\tilde{q}_L)}$, and thus $(-\ln \delta) T =\ln \left(\frac{\Delta C(\tilde{q}_L)}{\Delta C(\tilde{q}_L)-\Delta C({q}_H^*)}\right)$. Moreover, when $\delta$ is close to $1$, $-\ln\delta$ is approximately $(1-\delta)$.}
\begin{align*}
    (1-\delta)T \to \ln \left(\frac{\Delta C(\tilde{q}_L)}{\Delta C(\tilde{q}_L)-\Delta C({q}_H^*)}\right).
\end{align*}
Therefore, from the principal's perspective, as $\delta$ approaches one, waiting one more period to renege is strictly better than reneging right away if and only if
\begin{align}\label{eq:nscondition}
    \pi_L(q_L^*) - \pi_H(q_H^*) > \ln \left(\frac{\Delta C(\tilde{q}_L)}{\Delta C(\tilde{q}_L)-\Delta C({q}_H^*)}\right) \left[\pi_L(q_L^*)-\pi_H(\tilde{q}_L)\right];
\end{align}
that is, when the benefit of waiting outweighs the loss.

\subsection{The Formal Result}
While the above analysis focuses on an auxiliary environment, it turns out that the same condition determines whether information revelation is feasible in our original non-stationary setting. We have the following theorem.
\begin{theorem}\label{thm:nsc}
\mbox{ }
\begin{enumerate}
    \item If for all $\tilde{q}_L \in [q_L^*,\bar{q}]$,
\begin{align}\label{eq:nsconditionnonrevealing}
        \pi_L(q_L^*) - \pi_H(q_H^*) < \ln \left(\frac{\Delta C(\tilde{q}_L)}{\Delta C(\tilde{q}_L)-\Delta C({q}_H^*)}\right) \left[\pi_L(q_L^*)-\pi_H(\tilde{q}_L)\right],
\end{align}
then the (unique) optimal mechanism is non-revealing whenever $\delta$ is sufficiently close to $1$.

\item If for some $\tilde{q}_L \in [q_L^*,\bar{q}]$,
\begin{align}\label{eq:nsconditionrevealing}
        \pi_L(q_L^*) - \pi_H(q_H^*) > \ln \left(\frac{\Delta C(\tilde{q}_L)}{\Delta C(\tilde{q}_L)-\Delta C({q}_H^*)}\right) \left[\pi_L(q_L^*)-\pi_H(\tilde{q}_L)\right],
\end{align}
then every optimal mechanism is revealing whenever $\delta$ is sufficiently close to $1$.
\end{enumerate}
\end{theorem}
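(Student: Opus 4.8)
The plan is to show that the binding constraint for revelation feasibility is the non-reneging constraint \eqref{eq:nonreneging}, and that in the patient limit its slack reduces to the scalar comparison in \eqref{eq:nsconditionrevealing}. Abbreviate the per-agent exploitation gain $A:=\pi_L(q_L^*)-\pi_H(q_H^*)>0$ and the per-period reward-phase cost $B(\tilde q_L):=\pi_L(q_L^*)-\pi_H(\tilde q_L)$. Writing $V_t^{\mathrm{ren}}:=R_t\pi_L(q_L^*)+(1-R_t)\pi_H(q_H^*)$ for the right-hand side of \eqref{eq:nonreneging} and $S_t:=\Pi_t-V_t^{\mathrm{ren}}\ge0$ for its slack, the identity $\Pi_t=(1-\delta)F_t+\delta\Pi_{t+1}$ (with $F_t$ the period-$t$ flow profit) together with $V_{t+1}^{\mathrm{ren}}-V_t^{\mathrm{ren}}=r_tA$ yields
\[
S_t=\delta\,r_tA+\delta\,S_{t+1}-(1-\delta)\bigl(V_t^{\mathrm{ren}}-F_t\bigr),
\]
where the debt-service flow $(1-\delta)(V_t^{\mathrm{ren}}-F_t)$ is the rent-plus-distortion the principal forgoes by honoring her promises to every cohort currently in its reward phase. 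Implementability is exactly $S_t\ge0$ for all $t$; since all rents vanish once every agent is exploited, $S_t\to0$ and \eqref{eq:nonreneging} binds asymptotically, so feasibility is governed by the limiting ratio of debt service to the marginal revelation gain.

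For the revealing direction I would exhibit the stationary template of Section \ref{sec:intuitivederivation} at a quantity $\tilde q_L$ satisfying \eqref{eq:nsconditionrevealing}: reveal agents at a near-constant rate over a long horizon (tapering at the end so that eventual revelation holds), keep pooling undistorted at $(q_H^*,C_H(q_H^*))$, and give each cohort $T$ debt periods at quantity $\tilde q_L$ and payment $C_H(\tilde q_L)$, with $T$ pinned down by the rent identity $(1-\delta)\sum_{j=0}^{T-1}\delta^j\Delta C(\tilde q_L)=\Delta C(q_H^*)$ (a fractional boundary period supplying the $\beta$ of Lemma \ref{lem:frontloading}). Then \eqref{eq:IC-H} holds because reward payments never exceed $C_H$, and \eqref{eq:IC-L}, \eqref{eq:IR-H}, \eqref{eq:IR-L} hold by construction. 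The only binding requirement, $\min_tS_t\ge0$, reduces through the recursion to $(1-\delta)T\,B(\tilde q_L)\le\delta A$, which as $\delta\to1$ becomes $\ln\!\bigl(\tfrac{\Delta C(\tilde q_L)}{\Delta C(\tilde q_L)-\Delta C(q_H^*)}\bigr)B(\tilde q_L)\le A$, precisely \eqref{eq:nsconditionrevealing}. Profitability is then automatic: with $x:=\Delta C(q_H^*)/\Delta C(\tilde q_L)\in(0,1)$ the per-cohort profit gain is $A-(1-\delta)\sum_{j=0}^{s}\delta^jB(\tilde q_L)\to A-xB(\tilde q_L)$, which is positive whenever $A>xB(\tilde q_L)$, and since $-\ln(1-x)>x$ on $(0,1)$ condition \eqref{eq:nsconditionrevealing} is strictly stronger. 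Hence for $\delta$ near $1$ this implementable revealing mechanism earns strictly more than $\pi_H(q_H^*)$, so no optimal mechanism is non-revealing.

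For the non-revealing direction I would argue by contradiction, supposing some optimal mechanism is revealing for $\delta$ arbitrarily close to $1$. By Theorem \ref{thm:structure} revelation is gradual and eventual and each separating contract is two-phase. Eventual revelation forces $\sum_tr_t=\alpha_0$ and hence $r_t\to0$, so at a date $t$ late in the horizon the principal is simultaneously in debt to every cohort that revealed within the preceding $\approx\ln(\tfrac1{1-x})/(1-\delta)$ periods, whose accumulated masses dwarf the marginal $r_t$. The recursion then shows that $\min_tS_t\ge0$ requires the accumulated debt service not to exceed the discounted marginal gain, which in the patient limit forces, for the effective reward quantity $\tilde q_L\in[q_L^*,\bar{q}]$ of the late cohorts, an inequality of the form $A\ge-\ln(1-x)B(\tilde q_L)$. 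Since \eqref{eq:nsconditionnonrevealing} holds for every $\tilde q_L$, the reverse strict inequality holds throughout $[q_L^*,\bar{q}]$, so $S_t<0$ at some late date and \eqref{eq:nonreneging} is violated. Hence no revealing mechanism is implementable in the patient limit, and the optimum is the unique pooling contract $(q_H^*,C_H(q_H^*))$.

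The main obstacle is the reduction invoked in the previous paragraph: collapsing the family of non-reneging constraints of a genuinely non-stationary mechanism --- with vanishing revelation rates $r_t$, cohort-dependent debt lengths, and declining reward quantities $q_t^S(T)$ --- into a single scalar comparison indexed by one effective $\tilde q_L$, uniformly as $\delta\to1$. Three points need care. First, because eventual revelation forces $r_t\to0$ there is no literal steady state, so I must show that the debt overhang is sustained by the stock of recent revelation rather than by the vanishing flow, and that this stock weights the older, larger cohorts so as to pin the threshold to the single-quantity value $-\ln(1-x)B(\tilde q_L)$ rather than to the smaller value a fictitious constant-rate profile would suggest. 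Second, I must rule out that spreading a cohort's rent over a declining quantity profile relaxes \eqref{eq:nonreneging} below this threshold; this is where the concavity of $\Delta C$ enters, forcing the relevant variational problem to reduce to its evaluation at a constant quantity. Third, I must justify the double limit ($\delta\to1$ and $t\to\infty$), controlling uniformly the initial transient, the fractional boundary period $\beta$ of Lemma \ref{lem:frontloading}, and the pooling distortions of Lemma \ref{lem:poolingquantity}, all of which only tighten the bound. Once this reduction is in place, both parts follow from the displayed recursion for $S_t$.
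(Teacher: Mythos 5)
Your Part 2 (sufficiency) is essentially the paper's own argument: the same stationary template, the same rent identity pinning down the number of debt periods $T$ and the fractional period $\beta$, and the same limiting verification that ``waiting one more period to renege beats reneging now.'' Your profitability check via the per-cohort gain $A-xB(\tilde q_L)$ together with $-\ln(1-x)>x$ is a slight variant of the paper's bound (which instead uses the payoff from reneging at $t=1$), and both work. One caution: ``near-constant rate \ldots tapering at the end'' must mean an infinite-horizon, asymptotically vanishing profile --- the paper uses $r_t=(1-\rho)\alpha_0\rho^t$ with $\rho$ near $1$. If revelation instead stops at a finite date, then after the last revealing period the principal still owes debt but has no future information to lose, and your own recursion (with $r_t=0$ and terminal slack $0$) forces $S_t<0$, so \eqref{eq:nonreneging} fails; this is exactly why the paper proves there can be no last revelation period.

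The genuine gap is Part 1, and you name it yourself: the ``reduction'' you defer to the final paragraph is not a technical loose end but the entire content of the necessity proof. Your pointwise plan --- fix a late date $t$, argue the stock of recent revealers dwarfs the flow $r_t$, and read off a single effective quantity --- cannot work as stated, because at a fixed date the outstanding debt is owed to many cohorts at different stages of their reward phases, with masses and quantity profiles that an arbitrary non-stationary mechanism can vary freely; a cross-sectional snapshot never produces the \emph{lifetime} quantity $(1-\delta)\bigl(T(t)-t\bigr)$ whose limit is the logarithm. The paper's resolution is an aggregation argument: sum the per-period non-reneging inequalities
$r_t\delta A\geq(1-\delta)\sum_{\tau=z(t)}^{t-1}r_\tau\bigl[\pi_L(q_L^*)-\pi_t^S(\tau)\bigr]$
over all $t\geq\tilde t$ and \emph{exchange the order of summation}, which converts the date-indexed stock of debt service into each cohort's own lifetime debt cost, yielding
$\sum_{t\geq\tilde t}r_t\bigl\{\delta A-(1-\delta)\sum_{\tau=t+1}^{T(t)}\bigl[\pi_L(q_L^*)-\pi_H(q_\tau^S(t))\bigr]\bigr\}\geq 0$.
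Because this is a nonnegative weighted sum, \emph{some single cohort} must have a nonnegative bracket; only then is Jensen's inequality applied (using concavity of $\pi_H$ and $\Delta C$) within that one cohort's declining quantity profile to define $\tilde q_t$, after which the rent identity and $\lim_{\delta\to1}(1-\delta)/(-\ln\delta)=1$ deliver a $\tilde q$ violating \eqref{eq:nsconditionnonrevealing}. This summation swap is precisely what disposes of your three ``points needing care'' (stock versus flow, declining profiles, the double limit); without it, your Part 1 restates the theorem's difficulty rather than resolving it. Two smaller omissions: the paper separately handles the knife-edge case in which a revealing optimum earns exactly $\pi_H(q_H^*)$ --- needed both to invoke Theorem \ref{thm:structure} (whose hypotheses presume non-revealing mechanisms are suboptimal) and to obtain the uniqueness claim in Part 1.
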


Below we interpret this theorem economically and use it to analyze the effect of skill heterogeneity on information revelation. For readers interested in the technical details, we sketch the proof of this theorem in Section \ref{sec:proofsketch}.

\subsection{Economic Implications}
\subsubsection*{Cost-Benefit Comparison}
Economically, the LHS of condition \eqref{eq:nsconditionrevealing} in Theorem \ref{thm:nsc} is equal to the maximum gains in surplus after the principal obtains information about an agent's type, while the RHS measures the costs to the principal for inducing separation of types. Specifically, to incentivize information revelation, the principal needs to distort the low-type's quantity (Lemma \ref{lem:separatingquantity}) and pay information rent for a finite number of debt periods (Lemma \ref{lem:frontloading}). Based on the previous calculations, $\pi_L(q_L^*)-\pi_H(\tilde{q}_L)$ measures the principal's profit loss per agent per debt period, and the multiplier, $\ln \left(\frac{\Delta C(\tilde{q}_L)}{\Delta C(\tilde{q}_L)-\Delta C({q}_H^*)}\right)$, accounts for the \textit{discounted} number of debt periods. Theorem \ref{thm:nsc} says that an optimal mechanism is revealing if and only if the social value of information exceeds the costs needed to incentivize a low-type agent to voluntarily reveal his private information.

A useful observation from Theorem \ref{thm:nsc} is that, for an optimal mechanism to be revealing, the discounted number of debt periods, $\ln \left(\frac{\Delta C(\tilde{q}_L)}{\Delta C(\tilde{q}_L)-\Delta C({q}_H^*)}\right)$, must be less than $1$. Otherwise, it is impossible to use a smaller surplus gain (per period) to cover a larger separation cost (per period).
\begin{observation}\label{obs:multiplier}
If an optimal mechanism is revealing, then $\ln \left(\frac{\Delta C(\bar{q})}{\Delta C(\bar{q})-\Delta C({q}_H^*)}\right)<1$.
\end{observation}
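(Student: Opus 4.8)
The plan is to read off Observation \ref{obs:multiplier} as a purely algebraic consequence of the necessary condition for a revealing optimum supplied by Theorem \ref{thm:nsc}, together with two elementary monotonicity facts. Write $A := \pi_L(q_L^*) - \pi_H(q_H^*)$ for the maximal surplus gain, $B(q) := \pi_L(q_L^*) - \pi_H(q)$ for the per-period separation cost at quantity $q$, and $M(q) := \ln\!\big(\Delta C(q)/(\Delta C(q) - \Delta C(q_H^*))\big)$ for the discounted number of debt periods. In this notation the revealing condition \eqref{eq:nsconditionrevealing} reads $A > M(\tilde q_L)\,B(\tilde q_L)$, and the target is exactly $M(\bar q) < 1$.

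First I would invoke Theorem \ref{thm:nsc}: since the optimal mechanism is revealing, the uniform non-revealing inequality \eqref{eq:nsconditionnonrevealing} cannot hold for every $\tilde q_L$ (this is the contrapositive of part~1), so there exists some $\tilde q_L^\ast \in [q_L^*, \bar q]$ with $A \geq M(\tilde q_L^\ast)\,B(\tilde q_L^\ast)$. Second, I would note that $B(\tilde q_L^\ast) > A$ strictly: because $\pi_H$ is strictly concave with unique interior maximizer $q_H^*$ and every $\tilde q_L^\ast \geq q_L^* > q_H^*$, we have $\pi_H(\tilde q_L^\ast) < \pi_H(q_H^*)$, hence the per-period cost strictly exceeds the total surplus gain. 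Third, I would observe that $M(\tilde q_L^\ast) > 0$, since $\Delta C(q_H^*) > 0$ forces the argument of the logarithm above one. Chaining these, $A \geq M(\tilde q_L^\ast)\,B(\tilde q_L^\ast) > M(\tilde q_L^\ast)\,A$, and dividing by $A > 0$ gives $M(\tilde q_L^\ast) < 1$. Finally, I would extend from $\tilde q_L^\ast$ to $\bar q$ via monotonicity of $M$: setting $u = \Delta C(q)$ and $c = \Delta C(q_H^*)$, the map $M = \ln u - \ln(u - c)$ has $u$-derivative $-c/[u(u-c)] < 0$, so $M$ is strictly decreasing in $q$ (as $\Delta C$ is increasing); since $\bar q$ is the right endpoint, $M(\bar q) \leq M(\tilde q_L^\ast) < 1$, which is the claim.

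The conceptually important step, rather than any computation, is recognizing that the per-period separation cost $B(\tilde q_L^\ast)$ necessarily dominates the one-shot surplus gain $A$ — this is what drives the discounted debt multiplier below one, and it hinges only on $q_H^*$ being the interior strict maximizer of $\pi_H$. A secondary point to get right is the logical form of the hypothesis: ``revealing'' does not hand me the strict inequality \eqref{eq:nsconditionrevealing} directly but only the negation of the uniform version of \eqref{eq:nsconditionnonrevealing}, i.e.\ a weak inequality $A \geq M(\tilde q_L^\ast)\,B(\tilde q_L^\ast)$ at a single point. The strictness of the final conclusion is rescued not by the hypothesis but by the strict gap $B > A$, so I would be careful to record that $\pi_H(\tilde q_L^\ast) < \pi_H(q_H^*)$ is strict. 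No genuinely hard obstacle remains once these two observations are in place; the rest is one line of algebra and a monotonicity check.
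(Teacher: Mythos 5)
Your proof is correct and follows essentially the same route as the paper, which justifies the Observation informally via the cost-benefit comparison: the per-period separation cost $\pi_L(q_L^*)-\pi_H(\tilde{q}_L)$ exceeds the per-period surplus gain $\pi_L(q_L^*)-\pi_H(q_H^*)$, so the discounted debt multiplier must fall below one. Your two points of extra care---that ``revealing'' only yields the \emph{weak} inequality $A \geq M(\tilde{q}_L^*)B(\tilde{q}_L^*)$ at some point (via the contrapositive of part 1 of Theorem \ref{thm:nsc}), with strictness recovered from the strict gap $B>A$, and that the monotonicity of $M$ transfers the bound from $\tilde{q}_L^*$ to $\bar{q}$---simply make rigorous what the paper leaves implicit.
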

This condition requires $\Delta C(q_H^*)$, the low type's total information rent (i.e., his payoff if he were to take the pooling contract forever), to be not too large, or $\Delta C(\bar{q})$, the maximal information rent deliverable in a given period, to be sufficiently large.

Graphically, Figure \ref{fig:nsc} depicts the conditions in Theorem \ref{thm:nsc}. The blue curves plot the RHS of \eqref{eq:nsconditionrevealing} as a function of $\tilde{q}_L\in [q_L^*,\bar{q}]$, and the yellow lines indicate the values of the LHS which are constant w.r.t. $\tilde{q}_L$. Under the parameters of panels (a) and (b) (where $\theta_L$ and $\theta_H$ are sufficiently distant from each other), there exists some $\tilde{q}_L$ at which the yellow line is above the blue curve, thus every optimal mechanism is revealing. In contrast, under the parameters of panel (c) (where $\theta_L$ and $\theta_H$ are close), the yellow line is always below the blue curve, so the optimal mechanism is non-revealing.
\begin{figure}[ht]
	\centering
	\begin{subfigure}[t]{.49\linewidth}
		\includegraphics[width=1\linewidth]{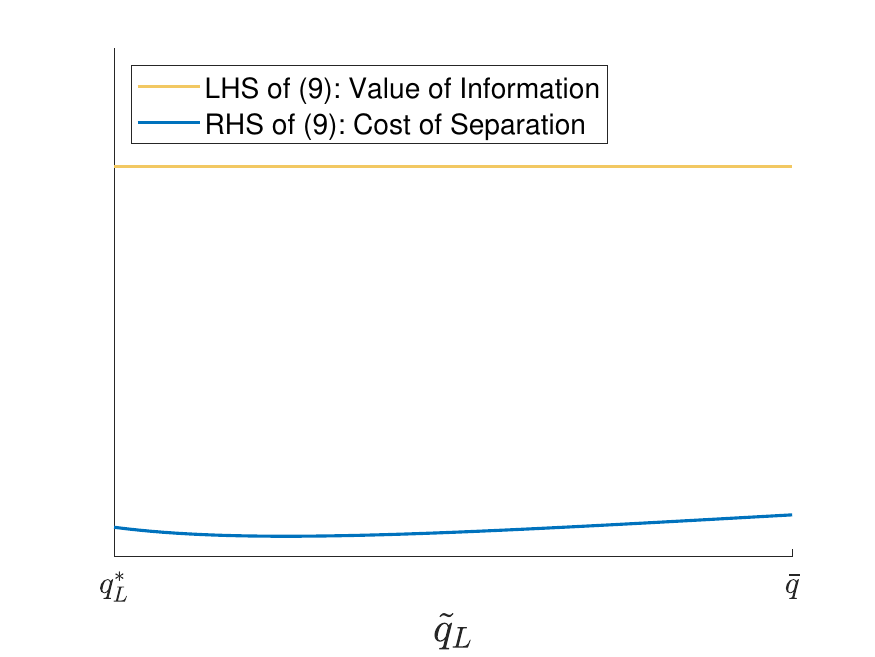}
		\caption{$\theta_L= 1.5,\ \theta_H = 3$ (Revealing)}
	\end{subfigure}
	\begin{subfigure}[t]{.49\linewidth}
		\includegraphics[width=1\linewidth]{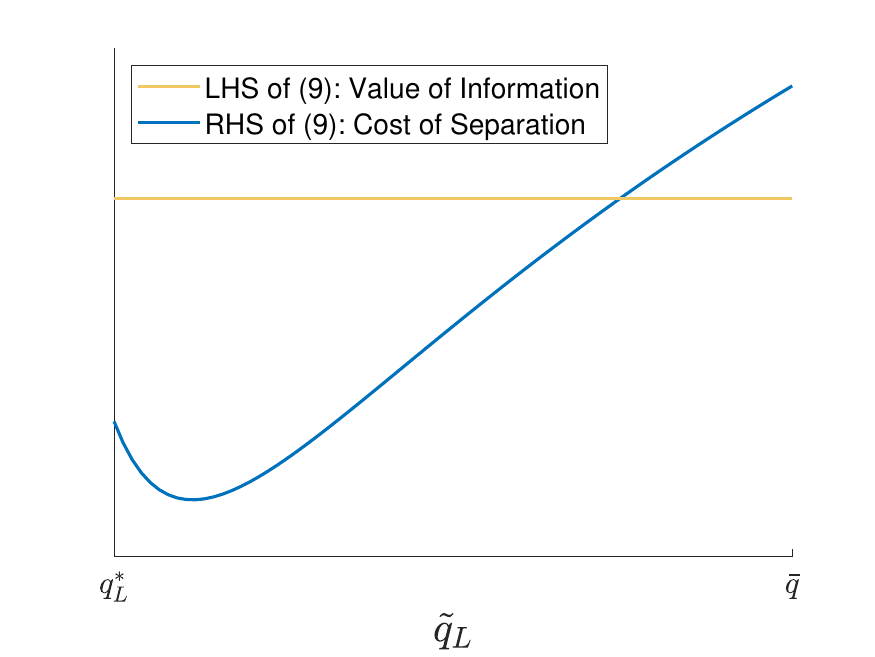}
		\caption{$\theta_L= 2,\ \theta_H = 3$ (Revealing)}
	\end{subfigure}
    \begin{subfigure}[t]{.49\linewidth}
		\includegraphics[width=1\linewidth]{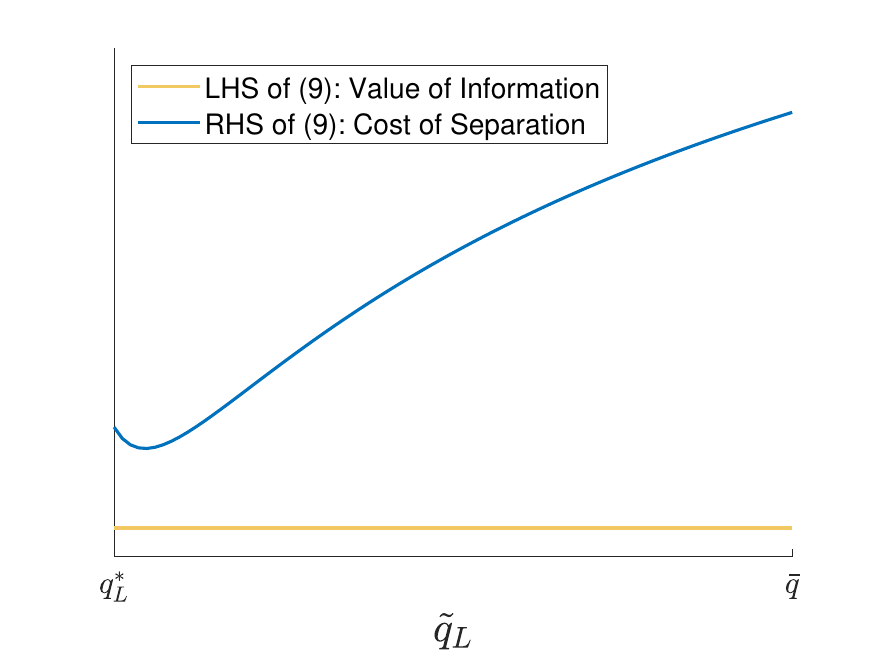}
		\caption{$\theta_L= 2.5,\ \theta_H = 3$ (Non-Revealing)}
	\end{subfigure}
	\caption{Necessary and Sufficient Condition for Information Revelation. This figure is plotted under the following functional forms and parameter values: $v(q) = 2q^{1/2}$, $C_\theta (q) = \theta q$, $\bar{q} = 1$, $\theta_H=3$, $\theta_L = 1.5$ for (a), $\theta_L=2$ for (b), and $\theta_L=2.5$ for (c).}
	\label{fig:nsc}
\end{figure}

\subsubsection*{Skill Heterogeneity Facilitates Information Revelation}
How does skill heterogeneity among agents affect information revelation? On the one hand, greater skill heterogeneity increases the value of information about an agent's type. On the other hand, as the cost difference widens, the low type can get a higher payoff by pretending the high type, so the principal has to pay more information rent to induce separation of types. Figure \ref{fig:nsc} suggests that, overall, information revelation becomes easier when the agents' skills are more heterogeneous, that is, when $\theta_H$ and $\theta_L$ are further apart. We now formalize this prediction. 

Let us first give some cardinal meaning to the agent's type. To this end, we assume
\begin{align*}
    C_\theta(q) = \theta q.
\end{align*}
We have the following result.
\begin{proposition}\label{prop:typedistance}
Fix all parameters except $\theta_L$ and $\delta$.\footnote{This result also requires $\bar{q}$ to be large enough (but fixed). See Appendix \ref{app:thm3} for details.} There exists $\bar{\theta}<\theta_H$ such that: 
\begin{itemize}
    \item if $\theta_L >\bar{\theta}$, then the (unique) optimal mechanism is non-revealing whenever $\delta$ is sufficiently close to $1$;
    \item if $\theta_L <\bar{\theta}$, then every optimal mechanism is revealing whenever $\delta$ is sufficiently close to $1$.
\end{itemize} 
\end{proposition}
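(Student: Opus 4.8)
The plan is to specialize the threshold condition of Theorem~\ref{thm:nsc} to the linear specification and then show that the relevant inequality switches from ``revealing'' to ``non-revealing'' exactly once as $\theta_L$ rises toward $\theta_H$. With $C_\theta(q)=\theta q$ one has $\Delta C(q)=(\theta_H-\theta_L)q$, so the cost difference cancels inside the logarithm and the multiplier simplifies to
\[
m(\tilde q_L):=\ln\!\Big(\tfrac{\Delta C(\tilde q_L)}{\Delta C(\tilde q_L)-\Delta C(q_H^*)}\Big)=\ln\!\Big(\tfrac{\tilde q_L}{\tilde q_L-q_H^*}\Big),
\]
which is \emph{independent of} $\theta_L$ (it involves $\theta_H$ only through the fixed $q_H^*$). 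I would define $\Phi(\theta_L,\tilde q_L):=[\pi_L(q_L^*)-\pi_H(q_H^*)]-m(\tilde q_L)[\pi_L(q_L^*)-\pi_H(\tilde q_L)]$ and $\Psi(\theta_L):=\max_{\tilde q_L\in[q_L^*,\bar q]}\Phi(\theta_L,\tilde q_L)$. By Theorem~\ref{thm:nsc}, for $\delta$ near $1$ the optimal mechanism is revealing when $\Psi(\theta_L)>0$ and non-revealing when $\Psi(\theta_L)<0$, so it suffices to show that $\Psi$ crosses zero exactly once on the admissible range $\theta_L\in(\underline\theta,\theta_H)$, where $\underline\theta:=v'(\bar q)$ is pinned down by $q_L^*=\bar q$ (and Assumption~\ref{ass:1} is maintained throughout, which only constrains $\alpha_0$).

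Two facts drive the monotonicity. First, the envelope theorem applied to $\pi_L(q_L^*(\theta_L))=\max_q[v(q)-\theta_L q]$ gives $\tfrac{d}{d\theta_L}\pi_L(q_L^*)=-q_L^*$, while $\pi_H(q_H^*)$, $\pi_H(\tilde q_L)$, and $m(\tilde q_L)$ do not depend on $\theta_L$; hence $\partial_{\theta_L}\Phi(\theta_L,\tilde q_L)=-q_L^*\,[1-m(\tilde q_L)]$. Second, reusing the logic behind Observation~\ref{obs:multiplier}: since $q_H^*$ maximizes $\pi_H$ we have $\pi_L(q_L^*)-\pi_H(\tilde q_L)\ge\pi_L(q_L^*)-\pi_H(q_H^*)>0$, so at any point with $\Phi\ge0$ the inequality $\pi_L(q_L^*)-\pi_H(q_H^*)\ge m(\tilde q_L)[\pi_L(q_L^*)-\pi_H(\tilde q_L)]$ forces $m(\tilde q_L)\le1$, and strictly $m(\tilde q_L)<1$ because $\tilde q_L\ge q_L^*>q_H^*$. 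Consequently, at any maximizer $\tilde q_L^\ast$ realizing $\Psi(\theta_L)\ge0$ we obtain $\partial_{\theta_L}\Phi|_{\tilde q_L^\ast}=-q_L^*[1-m(\tilde q_L^\ast)]<0$.

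The subtle point, and the step I expect to be the main obstacle, is that the feasible set $[q_L^*(\theta_L),\bar q]$ has a \emph{moving} left endpoint (as $q_L^*$ is decreasing in $\theta_L$), so a naive envelope computation of $\Psi'$ would carry an extra boundary term of uncertain sign. I would dispose of this by proving the maximizer is never at the lower endpoint. Differentiating $\Phi$ in $\tilde q_L$ and evaluating at $\tilde q_L=q_L^*$ yields, after simplification, $\partial_{\tilde q_L}\Phi|_{q_L^*}=(\theta_H-\theta_L)\big[\tfrac{q_H^*}{q_L^*-q_H^*}-\ln\!\big(\tfrac{q_L^*}{q_L^*-q_H^*}\big)\big]$; writing $u:=q_H^*/(q_L^*-q_H^*)>0$ this equals $(\theta_H-\theta_L)[u-\ln(1+u)]>0$. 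Thus $\Phi$ strictly increases in $\tilde q_L$ at the left endpoint, so the maximizer lies in the interior or at the fixed endpoint $\bar q$; in either case Danskin's theorem gives $\Psi'(\theta_L)=\partial_{\theta_L}\Phi|_{\tilde q_L^\ast}$ with no boundary correction. Combined with the sign lemma, $\Psi'<0$ at every zero of $\Psi$, and since $\Psi$ is continuous (Berge's maximum theorem), a strict downcrossing at each zero forces at most one zero.

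It remains to locate the endpoints so that the crossing $\bar\theta$ is interior. As $\theta_L\uparrow\theta_H$, $q_L^*\downarrow q_H^*$ and $\pi_L(q_L^*)\downarrow\pi_H(q_H^*)$, so the first term of $\Phi$ vanishes; on the part of the range where $m(\tilde q_L)\ge1$ one has $m(\tilde q_L)[\pi_L(q_L^*)-\pi_H(\tilde q_L)]\ge\pi_L(q_L^*)-\pi_H(\tilde q_L)>\pi_L(q_L^*)-\pi_H(q_H^*)$, while on the complementary compact set $\{m<1\}$ the product converges uniformly to the strictly positive $m(\tilde q_L)[\pi_H(q_H^*)-\pi_H(\tilde q_L)]$, so $\Psi<0$ near $\theta_H$ (non-revealing). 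At the other end, testing $\tilde q_L=\bar q$ gives $\Psi(\underline\theta^+)\ge[\pi_L(\bar q)-\pi_H(q_H^*)]-m(\bar q)[\pi_L(\bar q)-\pi_H(\bar q)]$, whose first term grows without bound in $\bar q$ while the subtracted term stays bounded; this is precisely where the hypothesis that $\bar q$ is large enough is used to secure $\Psi>0$ at the bottom of the range (revealing). Continuity, the single-downcrossing property, and the intermediate value theorem then deliver a unique threshold $\bar\theta<\theta_H$ with $\Psi>0$ for $\theta_L<\bar\theta$ and $\Psi<0$ for $\theta_L>\bar\theta$; invoking Theorem~\ref{thm:nsc} for $\delta$ close to $1$ completes the argument.
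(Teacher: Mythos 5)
Your proposal is correct and follows essentially the same route as the paper's proof: the same specialization of Theorem \ref{thm:nsc}, the same envelope-type derivative $-q_L^*\left[1-m(\tilde q_L)\right]$ of the gap function, the same use of the multiplier bound $m<1$ wherever the gap is nonnegative to force a strict downcrossing, the same exclusion of the moving left endpoint via $u-\ln(1+u)>0$, and the same endpoint analyses (non-revealing as $\theta_L\uparrow\theta_H$, revealing at the bottom of the range where ``$\bar q$ large enough'' enters). The only difference is packaging---you characterize $\bar\theta$ as the unique zero of $\Psi$ via single-crossing plus the intermediate value theorem, whereas the paper defines $\bar\theta$ as the infimum of the non-revealing region and propagates $G'<0$ downward---and your loose claim that $\pi_L(\bar q)-\pi_H(q_H^*)$ ``grows without bound'' in $\bar q$ is stronger than needed (positivity for large $\bar q$ suffices and is all that holds for general $v$), a point on which the paper's own argument is equally informal.
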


Proposition \ref{prop:typedistance} indicates that information revelation is more likely when the difference between the two types' marginal costs gets larger, and that separation becomes impossible if the difference in marginal costs is too small. Indeed, when the gap in marginal costs vanishes, the value of information is negligible and is thus unable to cover any cost for incentivizing separation. On the other hand, if $\theta_L$ decreases (i.e., when the difference in marginal costs rises), both the surplus gains from information revelation and the cost to induce separation will increase by $q_L^*$. However, by Observation \ref{obs:multiplier}, the multiplier on the RHS of \eqref{eq:nsconditionrevealing} is less than $1$ whenever the condition is satisfied, making the surplus gains increase at a higher rate. Therefore, if an optimal mechanism is revealing at some $\bar{\theta}$, it remains so for any $\theta_L<\bar{\theta}$.

\subsection{Sketch of Proof}\label{sec:proofsketch}
Below we outline the key steps for proving Theorem \ref{thm:nsc}, especially how they relate to the principal's tradeoffs in the auxiliary environment analyzed in Section \ref{sec:intuitivederivation}.  The formal proof is in the appendix.

To establish the sufficiency of condition \eqref{eq:nsconditionrevealing} for information revelation, note that the biggest jump from the auxiliary environment to our model is the constant measure $r$ of revealing agents in every period. This is apparently infeasible in our model with a fixed mass of (low-type) agents. However, one could set the measure of low-type agents revealing in period $t$ to be $r_0\rho^t$. With $\rho$ close to (but less than) $1$ and $r_0$ close to $0$, the sequence is almost constant while the sum can still equal $\alpha_0$.\footnote{Recall that in the auxiliary environment, condition \eqref{eq:nscondition} for information revelation is independent of $r$. This gives us the freedom to adjust $r_0$ without affecting the principal's incentives.} When condition \eqref{eq:nsconditionrevealing} holds, this construction allows us to obtain implementable allocations where the principal's incentives are almost identical to the auxiliary environment.

Let us now turn to necessity. Assume that an optimal mechanism is revealing. Then in every period, the principal should prefer to honor her promise for one more period than reneging right away. As the quantity in the pooling contract converges to $q_H^*$ in the long run (by Theorem \ref{thm:structure}), for each sufficiently large $t\geq \tilde{t}$, we have
\begin{equation*}
r_t\delta\left[\pi_L(q_L^*)-\pi_H(q_H^*)\right] \geq (1-\delta)\sum_{\tau = z(t)}^{t-1}r_\tau\left[\pi_L(q_L^*) - \pi_H (q_t^S(\tau))\right].
\end{equation*}
The LHS is the benefit of honoring promise for one more period and then reneging (relative to reneging right away), since information about an extra $r_t$ measure of efficient agents will be revealed. The RHS represents the cost, as the principal will forgo some profits from those agents revealing in $\{z(t),...,t-1\}$ with whom she is still in debt.

Adding up these constraints from $\tilde{t}$ onward, we obtain
\begin{align*}
\sum_{t=\tilde{t} }^{\infty}r_t\delta\left[\pi_L(q_L^*)-\pi_H(q_H^*)\right] &\geq (1-\delta)\sum_{t=\tilde{t} }^{\infty}\sum_{\tau = z(t)}^{t-1}r_\tau\left[\pi_L(q_L^*) - \pi_H (q_t^S(\tau))\right]\\
& \geq
(1-\delta)\sum_{t=\tilde{t} }^{\infty}r_t\sum_{\tau = t+1}^{T(t)}\left[\pi_L(q_L^*) - \pi_H (q_\tau^S(t))\right]\\
& \geq 
(1-\delta)\sum_{t=\tilde{t} }^{\infty}r_t\sum_{\tau = t+1}^{T(t)}\left[\pi_L(q_L^*) - \pi_H (\tilde{q}_{L,t})\right], 
\end{align*}
where the second line follows from changing the order of summation and dropping terms from periods before $\tilde{t}$, and the third line follows from the concavity of $\pi$ and Jensen's inequality, with $\tilde{q}_{L,t}$ defined as an ``average" of $q_\tau^S(t)$ across the debt periods that delivers the same total information rent.\footnote{The formal proof also uses the concavity of $\Delta C$ in this step to ensure the desired direction of the inequality.}

Rearranging the above inequality, we obtain
\begin{equation*}
\sum_{t=\tilde{t} }^{\infty}r_t\left\{\delta\left[\pi_L(q_L^*)-\pi_H(q_H^*)\right]-(1-\delta)(T(t)-t) \left[\pi_L(q_L^*) - \pi_H (\tilde{q}_{L,t})\right]\right\}\geq 0.
\end{equation*}
This means that for at least some $t$, we must have
\begin{equation*}
    \delta\left[\pi_L(q_L^*)-\pi_H(q_H^*)\right]\geq (1-\delta)(T(t)-t) \left[\pi_L(q_L^*) - \pi_H (\tilde{q}_{L,t})\right].
\end{equation*}
Now the payoff comparison becomes very close to that in the auxiliary environment. As $\delta\to 1$, calculations similar to Section \ref{sec:intuitivederivation} imply that condition \eqref{eq:nsconditionnonrevealing} must fail, establishing part (1) of Theorem \ref{thm:nsc}.

\section{Discussion}
\subsection{The Role of Assumption \ref{ass:1}}\label{sec:alpha}
We have assumed that $\alpha _{0} <  \ushort{\alpha} \equiv \frac{\pi _{H}(q_{H}^{\ast })}{\pi
_{L}(q_{L}^{\ast })}$, that is, the initial proportion of the efficient agents is not too large. This assumption ensures that the inefficient agents are sufficiently important to the principal, and in particular, at any history the worst possible continuation equilibrium is obtained through the best
\textit{pooling} contract for the principal. 

On the other hand, it is easy to see
that there exists $\bar{\alpha}\in \left( 0,1\right) $ such that if $\alpha_0 >\bar{\alpha}$, the optimal solution with full commitment is to make an offer that extracts the entire surplus from the efficient type, while the inefficient type simply leaves the relationship. Since this allocation does not entail time inconsistency, it is also optimal in our setting with limited commitment.

What about $\alpha_0 \in \left[ \ushort{\alpha},\ \bar{\alpha}\right] ?$ One can show that our characterizations extend to a small region beyond these known boundaries. In particular, there exist $\varepsilon _{1}>0$ and $\varepsilon _{2}>0$
such that: when $\alpha \in \left[ \ushort{\alpha},\ \ushort{\alpha}
+\varepsilon _{1}\right] $, the optimal mechanism is revealing whenever condition \eqref{eq:nsconditionrevealing} in Theorem \ref{thm:nsc} is satisfied; when $\alpha \in \left[ \bar{\alpha}-\varepsilon _{2},\ \bar{\alpha}\right] $, it is optimal for the principal to offer only one contract that excludes
the inefficient type. Unfortunately, figuring out the exact values
of $\varepsilon _{1}$ and $\varepsilon _{2}$, as well as what happens for
priors in $\left(\ushort{\alpha} +\varepsilon _{1},\ \bar{\alpha}-\varepsilon _{2}\right)$, is a daunting
task that requires a different set of techniques. We leave it as a direction for future research.

\subsection{The Role of the Concavity of $\Delta C$}\label{sec:concavedeltac}
The assumption that $\Delta C$ is concave (i.e., $C_H''\leq C_L''$) is used to show that quantities in the separating contracts are decreasing during reward phase (Lemma \ref{lem:separatingquantity}) and to prove Theorem \ref{thm:nsc}. To see the role of this assumption, note that an efficient agent's total information rent is determined by his payoff from not revealing, which is captured by $\Delta C(q_H)$, i.e., receiving a payment equal to the high cost while incurring a low cost. Moreover, the maximum information rent deliverable per period is constrained by the inefficient agent's take-the-money-and-run incentive, so an efficient agent can be compensated by at most $C_H(q_L)$ and enjoys a rent of $\Delta C(q_L)$ per period. The assumption $\Delta C^{\prime \prime }\leq 0$ then ensures that the principal's problem is concave in quantities, which substantially simplifies the analysis.

When $\Delta C$ is not concave, the principal's problem may no longer be concave. Nevertheless, Theorem \ref{thm:nsc} still holds with a slight modification: the necessary and sufficient condition now involves an appropriate concavification of the principal's objective function. Furthermore, as $\delta $ goes to one, we can show that the quantities in the separating contracts still decreases over time in a probabilistic sense.

{
\subsection{The Cardinality of the Set of Agents}\label{sec:cardinality}
In our model, the principal interacts with a continuum of agents. This setting allows us to sharply demonstrate how the principal's desire to elicit future information could help overcome her temptation to excessively exploit past disclosures, thereby mitigating the ratchet effect, even in the absence of exogenously new information. That said, one may wonder whether our results extend to environments with ``fewer" agents.

First, we can show that whenever information revelation is feasible with a continuum of agents, there exists an environment with countably many agents in which information revelation is also feasible. This would naturally arise in an environment where a principal interacts with finitely many agents each period, with new agents arriving and replacing old agents over time. Consider a variation of our model in which the agents leave the relationship
at a constant rate $\lambda$ and are replaced by new ones over time. One can show that whenever condition \eqref{eq:nsconditionrevealing} from Theorem \ref{thm:nsc} holds, the optimal stationary mechanism is revealing, and it will exhibit the same qualitative structure as in Theorem \ref{thm:structure}. On the other hand, even if condition \eqref{eq:nsconditionrevealing} fails, information revelation is still possible when the replacement rate $\lambda$ is sufficiently high. Intuitively, replacements make information revelation easier because if an agent (with a known type) is to be replaced very soon by another one (with an unknown type), the principal's benefit from exploiting this agent is rather limited, while the cost of losing future information can still be substantial.

Nonetheless, if there is a \textit{fixed} pool of finitely many agents in our model, the unique equilibrium outcome when players are sufficiently patient will be pooling, i.e., sequential information revelation is impossible. This is because with finitely many agents, if there were information revelation, there would be some point in the future when all agents who plan to reveal have revealed their types. After the final revealing period, the principal would have no future information to learn/lose, so she would fully exploit everyone, but then no agent would be willing to reveal his private information in the final revealing period, leading to unraveling.\footnote{A similar argument goes through even when agents can reveal probabilistically. With finitely many agents, the expected number of agents revealing on the equilibrium path will get arbitrarily small as time goes on. At some point, the principal will find it optimal to start fully exploiting everyone who has revealed, instead of paying a bulk of information rent to squeeze additional information.}$^,$\footnote{The distinction between finite and infinite agents in this model is reminiscent of the fundamental difference between finitely and infinitely repeated games.
} Despite this technical prediction, we believe that the main economic forces identified in this paper are present as long as there is an indefinite need for new information, which could be driven by evolving market conditions, adoptions of novel technologies, and/or arrivals of fresh workers.  
}

\section{Concluding Remarks}
We study how simultaneously contracting with many agents can alleviate a principal's commitment problem and facilitate information revelation in the presence of the ratchet effect. The principal's reneging on previous promises, while tempting, will stop all future agents from revealing their private information. Her fear for losing future information then becomes a deterrent to the principal's exploitation of the agents who have revealed their types in the past. This in turn makes each agent more willing to reveal his type, thereby mitigating the ratchet effect. We fully characterize the structure of optimal contracts, and find that these contracts frontload information rent and create distortions at all types of agents. We also provide a necessary and sufficient condition that determines whether the agents' private information is never revealed or fully revealed in the long run; this condition uncovers a novel tradeoff between fostering future information revelation and exploiting past revelation.

Our analysis has focused exclusively on sequential revelation mechanisms. While we believe these mechanisms are suited to address our main research questions, without a formal revelation principle, one cannot rule out the possibility that some other mechanisms may further improve the principal's profit. We leave the examination of mechanisms outside of this class and the development of a more general revelation principle as interesting directions for future research.

\bibliographystyle{aer}
\bibliography{references.bib}

\appendix
\section{Proofs}
\subsection{Proof of Theorem \ref{thm:existence}}
\begin{proof}[Proof of Theorem \ref{thm:existence}]
The principal can always choose to offer the pooling contract $\left(q_H^*,C_H(q_H^*)\right)$ to all agents in every period. This induces no information revelation and leads to a profit of $\pi_H(q_H^*)$. If there is no mechanism that generates a profit greater than $\left(q_H^*,C_H(q_H^*)\right)$, we are done as this pooling solution is optimal.

Now assume that there is a mechanism that yields more than $\pi_{H}(q_{H}^{\ast }).$ Let $\pi ^{\ast }$ be the supremum of the profit over all such mechanisms. Clearly, $\pi^*$ is bounded by the commitment profit. For every $n\in \mathbb{N},$ take an implementable allocation 
\[
\mathcal{C}^{n}:=    \left\{\left(q_{t,n}^P,x_{t,n}^P\right), \left(q^S_{\tau,n}(t),x^S_{\tau,n}(t)\right), r_t\right\}_{t\geq 0, \tau\geq t},
\]
such that 
\[
\Pi \left( \mathcal{C}^{n}\right) >\pi ^{\ast }-n^{-1}.
\]

By assumption, $q_{t,n}^{P},q^S_{\tau,n}(t)\in [0,\bar{q}]$ and $r_n\in [0,1]$, and one can without loss set $x_{t,n}^{P},x^S_{\tau,n}(t)\leq C_H(\bar{q})$ (see Lemma \ref{lem:frontloading} and Lemma \ref{lem:poolingquantity}). Thus the elements in $\mathcal{C}^n$ are uniformly bounded. Taking a subsequence if necessary, $\mathcal{C}^{n}$ converges to an allocation $\mathcal{C}^{\ast }$ in the product topology. As the profit $\Pi \left(
\cdot \right) $ is continuous in product topology,%
\[
\Pi \left( \mathcal{C}^{\ast }\right) =\lim \Pi \left( \mathcal{C}%
^{n}\right) \geq \pi ^{\ast },
\]
and hence $\Pi \left( \mathcal{C}^{\ast }\right) =\pi ^{\ast }.$ Finally, because all constraints in program \eqref{eq:principal} are weak inequalities, they are preserved in the
limit. So $\mathcal{C}^{\ast }$ is implementable and hence optimal.
\end{proof}

\subsection{Proof of Theorem \ref{thm:structure}}
In this section, we assume that  the maximal profit is strictly greater than $\pi_H(q^*_H)$ (thus every optimal mechanism is revealing), and that the discount factor $\delta$ is greater than some threshold $\bar{\delta}$ which we shall define later.

For any implementable allocation $\left\{\left(q_t^P,x_t^P\right), \left(q^S_\tau(t),x^S_\tau(t)\right), r_t\right\}_{t\geq 0, \tau\geq t}$, let $\mathcal{R}$ denote the set of periods in which there is revelation, i.e., $r_t>0$. Recall that $U_{L,t}^S(T)$, defined in \eqref{eq:ULtT}, is the continuation payoff of a low type at $t$ if he revealed in period $T\leq t$, and $$U^S_{H,t} : = \sup_{\mathbb{T}\geq t}\left( 1-\delta \right)
\sum_{\tau =t}^{\mathbb{T}}\delta ^{\tau -t}\left[ x_{\tau }^{S}\left(
t\right) -C_H\left(q_{\tau }^{S}\left( t\right) \right)
\right]$$ is the high type's payoff is he (falsely) reveals at $t$.
\begin{claim}\label{cl:frontloading}
    Consider an implementable allocation $\left\{\left(q_t^P,x_t^P\right), \left(q^S_\tau(t),x^S_\tau(t)\right), r_t\right\}_{t\geq 0, \tau\geq t}$. 
    Define
    \begin{equation*}
        \Delta \tilde{X}_t^S \equiv \frac{U_{H,t}^S}{1-\delta},
    \end{equation*}
    and 
    \begin{equation}\label{eq:frontloadingconstruction}
        \tilde{x}_\tau^S(t) \equiv \begin{cases}
        C_H(q_\tau^S(t)) + \Delta \tilde{X}_t^S, &\text{ if } \tau = t,\\
        C_H(q_\tau^S(t)), &\text{ if } t< \tau < t+s\\
        C_L(q_\tau^S(t)) + \beta \Delta C (q_\tau^S(t)), &\text{ if } \tau = t+s\\
        C_L (q_\tau^S(t)), &\text{ if } \tau > t+s\\
        \end{cases}
    \end{equation}
    where $s\geq 0$ and $\beta\in (0,1]$ are determined by
    \begin{equation*}
    U_{L,t}^S(t) = (1-\delta)\Delta \tilde{X}_t^S + (1-\delta)\sum_{\tau = t}^{t+s-1}\delta^{\tau-t}\Delta C(q_\tau^S(t)) + (1-\delta)\delta^{s}\beta\Delta C(q_{t+s}^S(t)).
\end{equation*}
Then, $\left\{\left(q_t^P,x_t^P\right), \left(q^S_\tau(t),\tilde{x}^S_\tau(t)\right), r_t\right\}_{t\geq 0, \tau\geq t}$ is another implementable allocation that generates the same profit for the principal. Moreover, if $x_\tau^S(t) \neq  \tilde{x}_\tau^S(t)$ for some $\tau\geq t$, then \eqref{eq:nonreneging} is not always binding under the alternative allocation.
\end{claim}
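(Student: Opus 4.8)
The plan is to check, one constraint at a time, that the frontloaded allocation $\{(q_t^P,x_t^P),(q^S_\tau(t),\tilde x^S_\tau(t)),r_t\}$ is implementable and profit-equivalent, isolating \eqref{eq:nonreneging} as the crux. I would begin by recording the two identities the construction is engineered to produce. Evaluating the low type's continuation payoff under $\tilde x$ and substituting the defining equation for $s$ and $\beta$ yields $\tilde U^S_{L,t}(t)=U^S_{L,t}(t)$; since the quantities $q^S_\tau(t)$ are untouched, the cost streams $C_L(q^S_\tau(t))$ are unchanged, so the discounted \emph{payment} stream to each revealing cohort has the same present value as before. This delivers profit preservation at once (the principal's separating profit from cohort $t$ depends on payments only through their present value), and because the pooling contracts and every $U^S_{L,\mathbb T}(\mathbb T)$ are preserved, it reproduces \eqref{eq:IC-L} and \eqref{eq:IR-H} verbatim.

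Next I would verify \eqref{eq:IC-H} by computing the high type's take-the-money-and-run value under $\tilde x$. By construction the high type's per-period net flow $\tilde x^S_\tau(t)-C_H(q^S_\tau(t))$ equals $\Delta\tilde X^S_t$ at $\tau=t$ and is nonpositive thereafter (zero on the $C_H$ periods, $-(1-\beta)\Delta C(q^S_{t+s}(t))$ at $t+s$, and $-\Delta C(q^S_\tau(t))$ afterward), so stopping immediately at $t$ is optimal and gives exactly $(1-\delta)\Delta\tilde X^S_t=U^S_{H,t}$. Thus the high type's deviation value is unchanged and \eqref{eq:IC-H} still holds. The mirror-image sign computation for the low type shows each net flow $\tilde x^S_\tau(t)-C_L(q^S_\tau(t))$ is nonnegative, so every continuation payoff stays nonnegative and \eqref{eq:IR-L} holds.

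The real work is in showing that frontloading weakly relaxes \eqref{eq:nonreneging} at every date. Its right-hand side depends only on $R_t,q_L^*,q_H^*$ and is unchanged, so it suffices to show $\Pi_t$ weakly rises, i.e.\ that the discounted future payments weaken. Writing the change in $\Pi_t$ as a sum over cohorts, those revealing at $T\ge t$ contribute zero (their present value is preserved), so I need only that for each cohort $T<t$ the tail payment $\sum_{\tau\ge t}\delta^{\tau-t}\tilde x^S_\tau(T)$ is no larger than $\sum_{\tau\ge t}\delta^{\tau-t}x^S_\tau(T)$. I would split this into two regimes. For $t\le T+s$ the new head payment through $t-1$ equals $\Delta\tilde X^S_T+\sum_{\tau=T}^{t-1}\delta^{\tau-T}C_H(q^S_\tau(T))$, which is precisely the upper bound that the \emph{original} allocation's \eqref{eq:IC-H} (stopping at $t-1$) places on the old head payment; since present values agree, a larger head forces a smaller tail. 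For $t>T+s$ the new tail pays only $C_L(q^S_\tau(T))$, while the original allocation's \eqref{eq:IR-L}, $U^S_{L,t}(T)\ge 0$, says the old tail pays at least that. Either way the tail weakly falls, so $\Pi_t^{\mathrm{new}}\ge\Pi_t^{\mathrm{old}}\ge\mathrm{RHS}$, and \eqref{eq:nonreneging} is preserved.

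Finally, for the strict statement: if $\tilde x^S_\tau(t)\neq x^S_\tau(t)$ for some $\tau$ while all present values agree, the tail differences cannot all vanish; being each weakly negative, at least one date $t$ has a strictly smaller tail payment, hence $\Pi_t^{\mathrm{new}}>\Pi_t^{\mathrm{old}}\ge\mathrm{RHS}$, so \eqref{eq:nonreneging} is slack there. I expect the main obstacle to be exactly the two-regime tail comparison in the third paragraph: the conceptual content is that the original allocation's own \eqref{eq:IC-H} and \eqref{eq:IR-L} are precisely the two inequalities needed to show that the maximally frontloaded stream dominates the old one period by period, and the delicate part is getting the regime split, and in particular the boundary date $t=T+s$ (and the degenerate case $s=0$, where the period-$t$ case labels overlap), to line up cleanly.
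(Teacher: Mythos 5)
Your proof is correct and follows essentially the same route as the paper's: payoff and profit equivalence are immediate from the construction, and \eqref{eq:nonreneging} is weakly relaxed by a cohort-by-cohort comparison of discounted payment heads/tails, using the sup definition of $U_{H,t}^S$ (the take-the-money-and-run bound) during the reward phase and \eqref{eq:IR-L} afterwards, with the same strictness argument pinned to a period where the payments differ. If anything, your placement of the regime boundary---assigning the partial-rent period $T+s$ to the \eqref{eq:IR-L} regime rather than the take-the-money-and-run regime---is slightly cleaner than the paper's case split, which is loose at exactly that date when $\beta<1$.
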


\begin{proof}
    By construction, the alternative allocation delivers the same total surplus (because quantities are unchanged) and same payoffs to both types as the original allocation does, so the principal's profit, as well as \eqref{eq:IC-H}, \eqref{eq:IC-L}, \eqref{eq:IR-H}, and \eqref{eq:IR-L}, is unaltered. 
    
    We now argue that the alternative allocation (weakly) relaxes \eqref{eq:nonreneging}. This is true if for all $T\geq t$,
    \begin{equation}\label{eq:frontloadingverify}
        (1-\delta)\sum_{\tau=t}^{T}\delta^{\tau-t}\tilde{x}_\tau^S(t)\geq (1-\delta)\sum_{\tau=t}^{T}\delta^{\tau-t}x_\tau^S(t).
    \end{equation}
    That is, if at any $T\geq t$ the principal has paid more (i.e., owes less) to the low-type agent under the alternative allocation than she does under the original allocation.
    Suppose toward a contradiction that there is $T^*\geq t$ such that $(1-\delta)\sum_{\tau=t}^{T}\delta^{\tau-t}\tilde{x}_\tau^S(t)< (1-\delta)\sum_{\tau=t}^{T}\delta^{\tau-t}x_\tau^S(t)$. 
    \begin{itemize}
        \item If $T^*\leq t+s$, then under the original allocation, the high type can get a payoff greater than $(1-\delta)\Delta\tilde{X}_t^S$ by falsely revealing at $t$ and quitting after period $T^*$, a contradiction to the definition of $U_{H,t}^S$.
        \item If $T^*> t+s$, because $(1-\delta)\sum_{\tau=t}^{\infty}\delta^{\tau-t}\tilde{x}_\tau^S(t)= (1-\delta)\sum_{\tau=t}^{\infty}\delta^{\tau-t}x_\tau^S(t)$ by construction, we have $(1-\delta)\sum_{\tau=T^*+1}^{\infty}\delta^{\tau-t}\tilde{x}_\tau^S(t)> (1-\delta)\sum_{\tau=T^*+1}^{\infty}\delta^{\tau-t}x_\tau^S(t)$. But then,
        \begin{align*}
            U_{L,T^*+1}(t) &= \left( 1-\delta \right) \sum_{\tau =T^*+1}^{%
\mathbb{\infty }}\delta ^{\tau -T^*-1}\left[ x_{\tau}^{S}(t)-C_L \left(q^S_\tau(t)\right) \right]\\
&< \left( 1-\delta \right) \sum_{\tau =T^*+1}^{%
\mathbb{\infty }}\delta ^{\tau -T^*-1}\left[ \tilde{x}_{\tau}^{S}(t)-C_L \left(q^S_\tau(t)\right) \right] \\
&= 0,
        \end{align*}
a contradiction \eqref{eq:IR-L} at $T^*+1$.
    \end{itemize}

Finally, if $x_\tau^S(t) \neq \tilde{x}_\tau^S(t)$ for some $\tau\geq t$, let $\tilde{T}$ be the first period this occurs, so that $x_\tau^S(t) = \tilde{x}_\tau^S(t)$ for $\tau< \tilde{T}$ and $x_{\tilde{T}}^S(t) < \tilde{x}_{\tilde{T}}^S(t)$. Condition \eqref{eq:frontloadingverify} then holds strictly for $T=\tilde{T}$, implying that \eqref{eq:nonreneging} is slack at $\tilde{T}+1$ under the alternative allocation.
\end{proof}

\begin{claim}\label{cl:nolastperiod}
    There exists $\hat{\delta}\in \left( 0,1\right) $ such that if 
$\delta >$ $\hat{\delta}$, then any implementable allocation with $\mathcal{R}\neq \emptyset$ satisfies $\sup \mathcal{R} = \infty$, i.e., there cannot be a last revelation period. 
\end{claim}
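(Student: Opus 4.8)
The plan is to argue by contradiction, with the non-reneging constraint \eqref{eq:nonreneging} evaluated \emph{one period after} the last revelation doing all the work. Suppose an implementable allocation has $\mathcal{R}\neq\emptyset$ but $T^*:=\sup\mathcal{R}<\infty$; since $\mathcal{R}\subseteq\mathbb{Z}_{\geq 0}$ is bounded above, $T^*\in\mathcal{R}$, so $r_{T^*}>0$ while $r_t=0$ for all $t>T^*$. Because no information arrives after $T^*$, we have $R_\tau=R_{T^*+1}=:R^*$ for every $\tau\geq T^*+1$, which is what lets me compare the continuation value of the allocation against the reneging payoff on the right-hand side of \eqref{eq:nonreneging}.

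First I would show \eqref{eq:nonreneging} at $t=T^*+1$ must bind, and that binding pins the continuation to full exploitation. Bounding the continuation profit group by group: from each revealed agent (who revealed at some $T$) the present value of profit is $(1-\delta)\sum_{\tau\geq T^*+1}\delta^{\tau-(T^*+1)}\pi_L(q_\tau^S(T))-U_{L,T^*+1}^S(T)\leq\pi_L(q_L^*)$ by \eqref{eq:IR-L}; from each agent still pooling it is $(1-\delta)\sum_{\tau\geq T^*+1}\delta^{\tau-(T^*+1)}\pi_H(q_\tau^P)-U_{H,T^*+1}^P\leq\pi_H(q_H^*)$, where the common pooling payment is bounded below by the high type's \eqref{eq:IR-H}. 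Weighting by the masses $R^*$ and $1-R^*$ gives $\Pi_{T^*+1}\leq R^*\pi_L(q_L^*)+(1-R^*)\pi_H(q_H^*)$, which is exactly the reneging payoff. Hence equality holds in \eqref{eq:nonreneging}, and it forces every one of the above inequalities to be tight; in particular $U_{L,T^*+1}^S(T^*)=0$, $U_{H,T^*+1}^P=0$, and $q_\tau^P=q_H^*$ for all $\tau\geq T^*+1$.

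Next I would derive the contradiction from the agents revealing at $T^*$. Since their continuation value from $T^*+1$ vanishes, their entire rent must be paid in the single period $T^*$, so $U_{L,T^*}^S(T^*)=(1-\delta)\bigl[x_{T^*}^S(T^*)-C_L(q_{T^*}^S(T^*))\bigr]$. On one hand, \eqref{eq:IC-L} via the never-reveal deviation (stopping time to infinity) gives $U_{L,T^*}^S(T^*)\geq U_{L,T^*}^P$, and plugging $q_\tau^P=q_H^*$ for $\tau\geq T^*+1$ into $U_{L,T^*}^P=U_{H,T^*}^P+(1-\delta)\sum_{\tau\geq T^*}\delta^{\tau-T^*}\Delta C(q_\tau^P)$ yields $U_{L,T^*}^P\geq\delta\,\Delta C(q_H^*)$, bounded away from $0$. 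On the other hand, \eqref{eq:IC-H} via take-the-money-and-run caps the deliverable rent: $x_{T^*}^S(T^*)-C_H(q_{T^*}^S(T^*))\leq U_{H,T^*}^P/(1-\delta)$, so $x_{T^*}^S(T^*)-C_L(q_{T^*}^S(T^*))\leq U_{H,T^*}^P/(1-\delta)+\Delta C(\bar q)$. Since $U_{H,T^*+1}^P=0$, one step of backward recursion gives $U_{H,T^*}^P=(1-\delta)\bigl[x_{T^*}^P-C_H(q_{T^*}^P)\bigr]=O(1-\delta)$ (payments are without loss at most $C_H(\bar q)$). Combining the two bounds gives $\delta\,\Delta C(q_H^*)\leq U_{H,T^*}^P+(1-\delta)\Delta C(\bar q)\leq(1-\delta)\bigl[C_H(\bar q)+\Delta C(\bar q)\bigr]$.

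As $\delta\to1$ the left side tends to $\Delta C(q_H^*)>0$ (because $q_H^*>0$ and $\Delta C$ is strictly increasing with $\Delta C(0)=0$) while the right side tends to $0$, which is impossible; the threshold $\hat\delta$ is the point at which this inequality first fails. I expect the delicate step to be the group-by-group accounting that turns \eqref{eq:nonreneging} into an equality together with its tightness consequences—one must keep all individual-rationality bounds in present-value form and remember that the pooling payment is pinned by the high type even for the low types who remain in the pool. Once those structural conclusions are in hand, the final clash (rent \emph{required} of the agent revealing at $T^*$ grows like $1/(1-\delta)$, while the rent \emph{deliverable} in one period stays $O(1)$) is immediate.
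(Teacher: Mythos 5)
Your overall strategy matches the paper's: isolate the last revelation period $T^*$, show that everything after it must be full exploitation, and then clash the rent that \eqref{eq:IC-L} forces the principal to deliver to a period-$T^*$ revealer \emph{in a single period} (at least $\delta\,\Delta C(q_H^*)$) against the cap that \eqref{eq:IC-H} places on one-period deliverable rent (of order $(1-\delta)\Delta C(\bar q)$). In fact your first step is more careful than the paper's: the paper simply asserts that \eqref{eq:nonreneging} ``requires that full exploitation of both types happens after $T$,'' whereas your group-by-group accounting at $t=T^*+1$ --- using \eqref{eq:IR-L} and \eqref{eq:IR-H} to show the non-reneging constraint must bind, then harvesting tightness to get $U_{L,T^*+1}^S(T^*)=0$, $U_{H,T^*+1}^P=0$, and $q_\tau^P=q_H^*$ --- is an actual proof of that assertion, correctly stated in present-value terms.

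The final combination, however, has a genuine gap. You lower-bound $U_{L,T^*}^P\geq\delta\,\Delta C(q_H^*)$ by discarding the term $U_{H,T^*}^P\geq 0$, and you then need $U_{H,T^*}^P=(1-\delta)\bigl[x_{T^*}^P-C_H(q_{T^*}^P)\bigr]$ to vanish as $\delta\to 1$, which you justify by ``payments are without loss at most $C_H(\bar q)$.'' That normalization is not available here: the claim quantifies over \emph{all} implementable allocations, so you may not modify the allocation, and implementability itself does not bound $x_{T^*}^P$ by a constant --- the only constraint restraining the period-$T^*$ pooling payment is \eqref{eq:nonreneging} at $T^*$, which (given the slack created by $r_{T^*}>0$ agents revealing) permits $x_{T^*}^P$ of order $1/(1-\delta)$, hence $U_{H,T^*}^P$ bounded away from zero uniformly in $\delta$. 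Against such an allocation your concluding inequality $\delta\,\Delta C(q_H^*)\leq U_{H,T^*}^P+(1-\delta)\Delta C(\bar q)$ has no bite, so the proof as written does not rule these allocations out. The repair is already contained in your own equations: do not drop $U_{H,T^*}^P$. Since $U_{L,T^*}^P=U_{H,T^*}^P+(1-\delta)\Delta C(q_{T^*}^P)+\delta\,\Delta C(q_H^*)$, the chain $U_{H,T^*}^P+\delta\,\Delta C(q_H^*)\leq U_{L,T^*}^P\leq U_{L,T^*}^S(T^*)\leq U_{H,T^*}^P+(1-\delta)\Delta C(\bar q)$ lets $U_{H,T^*}^P$ cancel, yielding $\delta\,\Delta C(q_H^*)\leq(1-\delta)\Delta C(\bar q)$ and the paper's threshold $\hat\delta=\frac{\Delta C(\bar q)}{\Delta C(\bar q)+\Delta C(q_H^*)}$. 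This cancellation is precisely how the paper's proof sidesteps the issue: there, the pooling payment enters both the low type's outside option and the high type's take-the-money-and-run bound and drops out of the combined inequality.
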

\begin{proof}
    Suppose that $\sup\mathcal{R} = T<\infty$. Because there is no information revelation after $T$, \eqref{eq:nonreneging} requires that full exploitation of both types happens after $T$, that is, for all $t>T$ we have $q_t^P = q_H^*$, $x_t^P = C_H(q_H^*)$, $q_{t}^S(T) =q_L^*$, and $x_{t}^S(T) =C_L(q_L^*)$.

    To incentivize a low type to reveal at $T$, we need
    \begin{equation*}
        (1-\delta)[x_T^S(T)-C_L(q_T^S(T))]\geq (1-\delta)[x_T^P-C_L(x_T^P)] + \delta \Delta C(q_H^*),
    \end{equation*}
    where the LHS is the low type's payoff from revealing at $T$ and the RHS is his payoff from taking the pooling contract forever. On the other hand, to make sure that the high type does not want to falsely reveal at $T$ (and then quit right after), we need
    \begin{equation*}
        (1-\delta)[x_T^S(T)-C_H(q_T^S(T))]\leq (1-\delta)[x_T^P-C_H(x_T^P)].
    \end{equation*}
    Combining these two inequalities, we have
    \begin{align*}
        (1-\delta)[x_T^S(T)-C_L(q_T^S(T))]&\geq (1-\delta)[x_T^P-C_L(x_T^P)] + \delta \Delta C(q_H^*)\\
        &= (1-\delta)[x_T^P-C_H(x_T^P)] + (1-\delta)\Delta C(x_T^P) + \delta \Delta C(q_H^*)\\
        &\geq (1-\delta)[x_T^S(T)-C_H(q_T^S(T))] + (1-\delta)\Delta C(x_T^P) + \delta \Delta C(q_H^*)\\
        &= (1-\delta)[x_T^S(T)-C_L(q_T^S(T))] + (1-\delta)\Delta C(x_T^P) + \delta \Delta C(q_H^*) \\ 
        &\quad\ -(1-\delta) \Delta C(q_T^S(T)),
    \end{align*}
    which implies that
    \begin{equation*}
        (1-\delta)\Delta C(q_T^S(T)) \geq \delta \Delta C(q_H^*).
    \end{equation*}
    But because $\Delta C(q_T^S(T))\leq \Delta C(\bar{q})$, the above inequality cannot hold if $\frac{1-\delta}{\delta}<\frac{\Delta C(q_H^*)}{\Delta C (\bar{q})}$, that is, if $\delta >\frac{\Delta C(\bar{q})}{\Delta C (\bar{q})+\Delta C(q_H^*)}:=\hat{\delta}$.
\end{proof}

We assume now that $\delta>\hat{\delta}$. Later we will further refine this lower bound (see Claim \ref{cl:bardelta}).
\begin{claim}\label{cl:lowtypeindifference}
    In any optimal mechanism, the low-type agent is indifferent between revealing or not at every $t\in \mathcal{R}$.
\end{claim}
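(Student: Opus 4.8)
The goal is to show that the incentive constraint \eqref{eq:IC-L} binds at every $t\in\mathcal{R}$. Since \eqref{eq:IC-L} is imposed as part of implementability and already carries the clause ``with equality if $0<r_t<\alpha-R_t$,'' the entire task reduces to verifying the double inequality $0<r_t<\alpha_0-R_t$ at each revelation period of an optimal mechanism. Once that is done, the equality clause fires automatically and the low type is, by definition, indifferent between revealing at $t$ and pursuing the best alternative (revealing later, or in the limit $\mathbb{T}\to\infty$, never revealing).

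The left inequality is immediate: $t\in\mathcal{R}$ means $r_t>0$ by definition. For the right inequality I would argue by contradiction using Claim \ref{cl:nolastperiod}. Note first that revelation is irreversible and the total low-type measure is $\alpha_0$, so $R_{t+1}=R_t+r_t\le\alpha_0$, i.e.\ $r_t\le\alpha_0-R_t$ always holds; it remains only to rule out equality. Suppose $r_t=\alpha_0-R_t$ for some $t\in\mathcal{R}$. Then $R_{t+1}=\alpha_0$, so every low-type agent has revealed by the end of period $t$ and no revealing mass is left, giving $r_\tau=0$ for all $\tau>t$ and hence $\sup\mathcal{R}\le t<\infty$. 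This contradicts Claim \ref{cl:nolastperiod}, which (for $\delta>\hat{\delta}$) guarantees $\sup\mathcal{R}=\infty$. Therefore $r_t<\alpha_0-R_t$.

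Combining the two inequalities yields $0<r_t<\alpha_0-R_t$ at every $t\in\mathcal{R}$, so \eqref{eq:IC-L} holds with equality throughout $\mathcal{R}$, which is exactly the asserted indifference. The only nontrivial input is Claim \ref{cl:nolastperiod}; the remaining step is pure bookkeeping with the cumulative measures $R_t$, so I anticipate no real obstacle. If one preferred not to invoke the definitional equality clause, the same conclusion follows from equilibrium reasoning: low types are ex-ante identical and play pure revelation times, so a positive measure revealing at $t$ coexisting with a positive measure revealing after $t$ is consistent only if a low type is indifferent at $t$; a strict preference for revealing at $t$ would induce every remaining low type to reveal at $t$, forcing $r_t=\alpha_0-R_t$ and again contradicting Claim \ref{cl:nolastperiod}.
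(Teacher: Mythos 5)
Your proof is correct and follows essentially the same route as the paper's: the paper's own one-line argument likewise derives indifference from Claim \ref{cl:nolastperiod} (no last revelation period) combined with the fact that the mechanism is revealing, which forces $0<r_t<\alpha_0-R_t$ at every $t\in\mathcal{R}$ and triggers the equality clause of \eqref{eq:IC-L}. You have merely spelled out the bookkeeping with $R_{t+1}=R_t+r_t$ that the paper leaves implicit.
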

\begin{proof}
    This follows immediately from Claim \ref{cl:nolastperiod} and the presumption that the any optimal mechanism is revealing.
\end{proof}

\begin{claim}\label{cl:poolingpayment}
    In any optimal mechanism, the high type's period-$0$ payoff satisfies $U^P_{H,0}=0$. There exists an optimal mechanism with $U^P_{H,t}=0$ for all $t\geq 0$, hence $x_t^P=C_H(q_t^P)$ in every period. Moreover, if $U^P_{H,t}>0$ at some $t$, then there exists an optimal allocation under which \eqref{eq:nonreneging} is not always binding.
\end{claim}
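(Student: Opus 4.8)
The plan is to treat the three assertions separately, using payment-reshuffling perturbations that keep all quantities and the revelation schedule $\{r_t\}$ fixed.

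For the first assertion, suppose toward a contradiction that $U^P_{H,0}>0$ in an optimal mechanism. I would lower every period-$0$ payment by a common small amount: reduce $x_0^P$ by $\varepsilon$ and, if $r_0>0$, also reduce $x_0^S(0)$ by $\varepsilon$. Because at $t=0$ no agent has yet revealed, every continuation value a low type can attain---revealing at $0$, revealing at any later date, or never revealing---falls by exactly $(1-\delta)\varepsilon$, so \eqref{eq:IC-L} is preserved with all indifferences intact (Claim~\ref{cl:lowtypeindifference}); the high type's payoffs $U^P_{H,0}$ and $U^S_{H,0}$ both fall by $(1-\delta)\varepsilon$, preserving \eqref{eq:IC-H}; the payments at $\tau\ge 1$ are untouched, so $U^P_{H,t},U^S_{H,t}$ for $t\ge1$ and every \eqref{eq:IR-L} continuation from $t=1$ on are unchanged; and \eqref{eq:IR-H}, \eqref{eq:IR-L} at $t=0$ hold for small $\varepsilon$. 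Since the perturbation lowers the principal's period-$0$ outlay to a measure-one population, it raises $\Pi_0$ by $(1-\delta)\varepsilon>0$ while relaxing \eqref{eq:nonreneging}. This contradicts optimality, so $U^P_{H,0}=0$.

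For the second and third assertions I would build a single transformed allocation and read off both conclusions. Starting from any optimal mechanism (which by Claim~\ref{cl:frontloading} I may take to have frontloaded separating payments), define $\hat x_t^P := C_H(q_t^P)$, keep all quantities $q_t^P$ and $q_\tau^S(T)$ unchanged, and re-set the separating payments---again frontloaded as in Claim~\ref{cl:frontloading}, with the period-of-revelation payment set to exactly $C_H(q_t^S(t))$---so as to deliver to a low type revealing at $T$ the reduced continuation value $\hat U^P_{L,T}=U^P_{L,T}-U^P_{H,T}$. By construction $\hat U^P_{H,t}=0$ for all $t$ (hence $\hat x_t^P=C_H(q_t^P)$), and the period-$0$ value of revealing at any date equals the never-reveal value $\hat U^P_{L,0}$, so \eqref{eq:IC-L} holds with indifference; \eqref{eq:IC-H} holds with equality because the reward-phase payment $C_H(q_\tau^S(t))$ leaves the high type exactly his zero pooling value from any take-the-money-and-run deviation; and \eqref{eq:IR-H}, \eqref{eq:IR-L} are immediate.

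The crux is \eqref{eq:nonreneging}. Using the surplus identity $\Pi_t=\mathrm{Surplus}_t-\big[(1-\alpha_0)U^P_{H,t}+(\alpha_0-R_t)U^P_{L,t}+\sum_{T<t}r_T U^S_{L,t}(T)\big]$ together with the fact that $\mathrm{Surplus}_t$ is unchanged (quantities and $\{r_t\}$ fixed), I obtain
\[
\hat\Pi_t-\Pi_t=(1-R_t)\,U^P_{H,t}+\sum_{T<t}r_T\big[U^S_{L,t}(T)-\hat U^S_{L,t}(T)\big].
\]
Both terms are nonnegative: the first because $R_t\le 1$ and $U^P_{H,t}\ge 0$, the second because delivering a smaller total value with the same per-period rents $\Delta C(q_\tau^S(T))$ (frontloaded) only shortens the reward phase, so $\hat U^S_{L,t}(T)\le U^S_{L,t}(T)$ pointwise. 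Hence $\hat\Pi_t\ge\Pi_t\ge$ the right-hand side of \eqref{eq:nonreneging}, so the transformed allocation is implementable; and since $R_0=0$ together with $U^P_{H,0}=0$ give $\hat\Pi_0=\Pi_0$, it attains the optimal profit, proving the second assertion. For the third, if $U^P_{H,t^*}>0$ at some $t^*$ (necessarily $t^*\ge1$ by the first part), then $R_{t^*}<1$ forces $\hat\Pi_{t^*}>\Pi_{t^*}\ge$ the right-hand side of \eqref{eq:nonreneging}, i.e.\ \eqref{eq:nonreneging} is slack at $t^*$ in this optimal allocation.

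The main obstacle is precisely the nonreneging verification above: one must confirm that frontloading the pooling rent does not tighten \eqref{eq:nonreneging} at early dates. The payoff-accounting identity is what makes this tractable---it converts the awkward period-by-period comparison of payment streams into the manifestly signed expression displayed above---with the one delicate point being the pointwise monotonicity $\hat U^S_{L,t}(T)\le U^S_{L,t}(T)$, which is why I would first frontload the separating payments via Claim~\ref{cl:frontloading}.
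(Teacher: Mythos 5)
Your proof is correct in substance and uses the same transformed allocation as the paper (pooling payments reset to $C_H(q_t^P)$, separating payments frontloaded with zero initial lump so that the reveal value equals the new pooling rent), and your treatment of the first assertion coincides with the paper's one-line perturbation. Where you genuinely diverge is at the crux you correctly identified: the verification of \eqref{eq:nonreneging}. The paper signs the change in continuation profits by comparing discounted \emph{cumulative payment streams} date by date: for the pooling payments it shows $\sum_{\tau=0}^{T}\delta^\tau \tilde{x}_\tau^P\geq \sum_{\tau=0}^{T}\delta^\tau x_\tau^P$ for every $T$ (otherwise the high type could work until $T$ and quit for a positive payoff, contradicting $U_{H,0}^P=0$), reuses the argument of Claim~\ref{cl:frontloading} for the separating payments, and then locates slackness of \eqref{eq:nonreneging} at the period following the first payment discrepancy. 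You instead pass through the accounting identity $\Pi_t=\mathrm{Surplus}_t-(\text{agents' continuation values})$, which converts the comparison into the manifestly signed expression $(1-R_t)U^P_{H,t}+\sum_{T<t}r_T\bigl[U^S_{L,t}(T)-\hat U^S_{L,t}(T)\bigr]$. What your route buys is an explicit lower bound on the slack, slackness located at $t^*$ itself rather than at a later date, and $\hat\Pi_0=\Pi_0$ as an immediate byproduct; what it costs is that you need the unrevealed low types' equilibrium value to equal the pool-forever value (via Claim~\ref{cl:lowtypeindifference} and the absence of a last revelation period) already inside the identity, whereas the paper's stream comparison needs that fact only at the final step where it shows the transformed allocation attains the same profit.

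One step in your argument is incomplete as written. Your justification of the pointwise dominance $\hat U^S_{L,t}(T)\leq U^S_{L,t}(T)$ (``smaller total value with the same per-period rents'') overlooks that the frontloaded original payments carry the extra lump $\Delta\tilde X^S_T=U^S_{H,T}/(1-\delta)$ in the revelation period, so the two rent flows are \emph{not} the same at $\tau=T$. Comparing continuation values one period after revelation gives
\begin{equation*}
U^S_{L,T+1}(T)-\hat U^S_{L,T+1}(T)=\frac{U^P_{H,T}-U^S_{H,T}}{\delta},
\end{equation*}
which is nonnegative precisely because \eqref{eq:IC-H} holds in the original mechanism; from $T+1$ onward the two flows are capped identically, so the difference is scaled up by $1/\delta$ while both reward phases last and remains nonnegative after the transformed phase ends. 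With that one citation of \eqref{eq:IC-H} inserted, your dominance claim, and hence the whole argument, goes through.
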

\begin{proof}
    Clearly $U_{H,0}=0$ for otherwise payments at $t=0$ can be uniformly decreased by $\epsilon$. 

    Take any optimal allocation $\left\{\left(q_t^P,x_t^P\right), \left(q^S_\tau(t),x^S_\tau(t)\right), r_t\right\}_{t\geq 0, \tau\geq t}$. Consider an alternative allocation $\left\{\left(q_t^P,\tilde{x}_t^P\right), \left(q^S_\tau(t),\tilde{x}^S_\tau(t)\right), r_t\right\}_{t\geq 0, \tau\geq t}$ such that the pooling payment satisfies $\tilde{x}_t^P=C_H(q_t^P)$ for every $t$ and the separating payment $\tilde{x}_\tau^S(t)$ is defined by \eqref{eq:frontloadingconstruction} for every $t\in \mathcal{R}$ and $\tau\geq t$ with $\Delta \tilde{X}_t^S =0$ and $s$ and $\beta$ being determined by
    $$
    (1-\delta)\sum_{\tau=t}^{\infty}\delta^{\tau-t}\Delta C(q_\tau^P) = (1-\delta)\sum_{\tau = t}^{t+s-1}\delta^{\tau-t}\Delta C(q_\tau^S(t)) + (1-\delta)\delta^{s}\beta\Delta C(q_{t+s}^S(t)).
    $$
    Let $\tilde{U}_{\theta,t}^S$ and $\tilde{U}_{\theta,t}^P$ denote the players' payoffs under this alternative allocation. By construction, $\tilde{U}_{H,t}^P=\tilde{U}_{H,t}^S = 0$ for every $t$ and $\tilde{U}_{L,t}^P =\tilde{U}_{L,t}^S(t) = (1-\delta)\sum_{\tau=t}^{\infty}\delta^{\tau-t}\Delta C(q_\tau^P)$ for every $t\in\mathcal{R}$, so \eqref{eq:IC-H}, \eqref{eq:IC-L}, \eqref{eq:IR-H}, and \eqref{eq:IR-L} are satisfied.

    We now show that the alternative allocation (weakly) relaxes \eqref{eq:nonreneging}. By the argument in the proof of Lemma \ref{cl:frontloading}, the variation of the separating payments relaxes \eqref{eq:nonreneging}. For the variation of the pooling payments, note that for any $T\geq 0$ we must have
    \begin{equation}\label{eq:frontloadingverifypooling}
        \sum_{\tau=0}^{T}\delta^\tau \tilde{x}_\tau^P\geq \sum_{\tau=0}^{T}\delta^\tau x_\tau^P,
    \end{equation}
    for otherwise if it were the case that $\sum_{\tau=0}^{T}\delta^\tau x_\tau^P > \sum_{\tau=0}^{T}\delta^\tau \tilde{x}_\tau^P$, then under the original allocation the high type would obtain a strictly positive payoff by working until period $T$ and then quitting, a contradiction to $U_{H,0} = 0$. Therefore, at any $T\geq 0$ the principal has paid more (i.e., owes less) to the high-type agent under the alternative allocation than she does under the original allocation, which further relaxes \eqref{eq:nonreneging}.

    Next we argue that the principal's time-$0$ profit keeps the same under the alternative allocation. First, $U_{H,0}^P = 0 = \tilde{U}_{H,0}^P$. Now consider a low-type agent who reveals in period $t$ under the original allocation. Because the original allocation is optimal, Claim \ref{cl:lowtypeindifference} implies the low type's time-$0$ payoff can be attained by taking the pooling contract forever, that is,
    \begin{equation*}
        U_{L,0}^P = U_{H,0}^P + \sum_{\tau =0}^{\infty}\delta^\tau\Delta  C(q_\tau^P).
    \end{equation*}
    Meanwhile, by construction of the alternative allocation, the low type is indifferent between when to reveal too, so the low type's time-$0$ payoff is
    \begin{equation*}
        \tilde{U}_{L,0}^P = \sum_{\tau =0}^{\infty}\delta^\tau\Delta C(q_\tau^P).
    \end{equation*}
    Recall that $U_{H,0}^P=0$, so we have $U_{L,0}^P = \tilde{U}_{L,0}^P$. As quantities (thus surplus) are unchanged, the principal's time-$0$ profit remains the same. Therefore, the alternative allocation is also optimal.

    Finally, if $U_{H,t}>0$ for some $t$, then $x_\tau^P\neq \tilde{x}_\tau^P$ for some $\tau$. Let $\tilde{T}$ be the first period this occurs, so that $x_\tau^P = \tilde{x}_\tau^P$ for $\tau< \tilde{T}$ and $x_{\tilde{T}}^P < \tilde{x}_{\tilde{T}}^P$. Condition \eqref{eq:frontloadingverifypooling} then holds strictly for $T=\tilde{T}$, implying that \eqref{eq:nonreneging} is slack at $\tilde{T}+1$ under the alternative optimal allocation.
\end{proof}

\begin{definition}
    Consider a low-type agent who reveals in period $t$ and is offered $\{q_\tau^S(t),x_\tau^S(t)\}_{\tau\geq t}$. We say that the principal is \textbf{in debt} with this agent in period $\tau$ if the agent's continuation payoff is strictly positive. Any such period is called a \textbf{debt period.} 
\end{definition}

\begin{claim}\label{cl:bardelta}
    There exists $\bar{\delta}\in(\hat{\delta},1)$ such that if $\delta>\bar{\delta}$, any optimal allocation with separating payments given by \eqref{eq:frontloadingconstruction} satisfies $s\geq 1$ for every $t\in \mathcal{R}$. That is, it takes at least two debt periods to reward a low type after his revelation.
\end{claim}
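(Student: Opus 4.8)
The plan is to argue by contradiction. By Claim \ref{cl:poolingpayment} I would work with an optimal allocation whose pooling payments satisfy $x_\tau^P=C_H(q_\tau^P)$ for all $\tau$ (so $U_{H,\tau}^P=0$, hence $U_{H,\tau}^S=0$ and $\Delta\tilde{X}_\tau^S=0$) and whose separating payments take the frontloaded form \eqref{eq:frontloadingconstruction}. Suppose some $t\in\mathcal{R}$ had $s=0$. Then the construction in Claim \ref{cl:frontloading} delivers the low type his \emph{entire} information rent in the single period $t$, so
\[
U_{L,t}^S(t)=(1-\delta)\,\beta\,\Delta C\!\left(q_t^S(t)\right)\ \leq\ (1-\delta)\,\Delta C(\bar{q}),
\]
which vanishes as $\delta\to 1$. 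But Claim \ref{cl:lowtypeindifference} forces $U_{L,t}^S(t)=U_{L,t}^P=(1-\delta)\sum_{\tau\geq t}\delta^{\tau-t}\Delta C(q_\tau^P)$, the low type's value of pooling forever. So the whole claim reduces to a single quantitative fact: that $U_{L,t}^P$ is bounded \emph{below} by a constant independent of $\delta$ and of $t$. This uniform rent floor is the main obstacle, because a priori the pooling quantities could be small, and I cannot invoke their convergence to $q_H^*$ (Lemma \ref{lem:poolingquantity} is proved later, so using it here would be circular).

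To get the floor I would lean on Assumption \ref{ass:1} through the non-reneging constraint. Since $\pi_L(q_L^*)>\pi_H(q_H^*)$ and $R_t\in[0,\alpha_0]$, \eqref{eq:nonreneging} yields $\Pi_t\geq \pi_H(q_H^*)$ for every $t$. Bounding $\Pi_t$ from above by splitting off the two components of its definition: at most mass $\alpha_0$ has ever revealed and each revealed agent gives flow profit $v(q^S)-x^S\leq \pi_L(q^S)\leq \pi_L(q_L^*)$ (using $x^S\geq C_L(q^S)$ under frontloading), so the separating component is at most $\alpha_0\pi_L(q_L^*)$; and since the pooling mass never exceeds $1$ and $x_\tau^P=C_H(q_\tau^P)$, the pooling component is at most $(1-\delta)\sum_{\tau\geq t}\delta^{\tau-t}\pi_H(q_\tau^P)$. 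Combining,
\[
(1-\delta)\sum_{\tau\geq t}\delta^{\tau-t}\pi_H(q_\tau^P)\ \geq\ \pi_H(q_H^*)-\alpha_0\pi_L(q_L^*)\ =:\ \kappa\ >\ 0,
\]
where the strict positivity of $\kappa$ is precisely Assumption \ref{ass:1}. This is the key leverage point: the principal's guaranteed continuation profit cannot come only from the revealed agents, so the pool must generate nontrivial surplus on average.

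The last step converts this lower bound on the discounted average of $\pi_H(q_\tau^P)$ into one on $U_{L,t}^P$. Fix $q_{\mathrm{low}}\in(0,q_H^*)$ with $\pi_H(q_{\mathrm{low}})=\kappa/2$ (possible since $\pi_H$ is continuous, $\pi_H(0)=0$, and $\kappa/2<\pi_H(q_H^*)$). A one-line accounting argument shows that the discounted mass $m$ of periods $\tau\geq t$ with $\pi_H(q_\tau^P)\geq\kappa/2$ satisfies $\kappa\leq \pi_H(q_H^*)\,m+(\kappa/2)(1-m)$, hence $m\geq \mu:=\kappa/\big(2\pi_H(q_H^*)\big)>0$. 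By unimodality of $\pi_H$, on each such period $q_\tau^P\geq q_{\mathrm{low}}$, so by monotonicity of $\Delta C$,
\[
U_{L,t}^P=(1-\delta)\sum_{\tau\geq t}\delta^{\tau-t}\Delta C(q_\tau^P)\ \geq\ \mu\,\Delta C(q_{\mathrm{low}})\ =:\ \underline{u}\ >\ 0,
\]
with $\underline{u}$ depending only on primitives. Taking $\bar{\delta}:=\max\{\hat{\delta},\,1-\underline{u}/\Delta C(\bar{q})\}\in(\hat{\delta},1)$, for $\delta>\bar{\delta}$ we get $(1-\delta)\Delta C(\bar{q})<\underline{u}\leq U_{L,t}^P=U_{L,t}^S(t)$, contradicting the $s=0$ upper bound. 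Hence $s\geq 1$ for every $t\in\mathcal{R}$, i.e.\ at least two debt periods are required. The virtue of this route is that it never uses pooling-quantity convergence and instead pins the rent floor directly on Assumption \ref{ass:1} and \eqref{eq:nonreneging}, so no circularity arises.
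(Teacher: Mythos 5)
Your proof is correct and, at bottom, rests on the same three ingredients as the paper's: with a degenerate reward phase the revealing rent is at most $(1-\delta)\Delta C(\bar{q})$; constraint \eqref{eq:IC-L} ties that rent to the value of pooling forever; and Assumption \ref{ass:1} is what keeps the pooling value from collapsing. However, you run the logic in the contrapositive direction and make it quantitative, which is a genuine difference in execution. The paper takes a sequence of optimal allocations with $\delta_n\to 1$ and argues: vanishing rent $\Rightarrow$ vanishing pooling payoff $\Rightarrow$ vanishing profit from the pool, so the principal's continuation profit is at most $(1-\alpha_0)\epsilon+\alpha_0\pi_L(q_L^*)$, contradicting the $\pi_H(q_H^*)$ she can always guarantee; this limit argument is non-constructive but yields a stronger conclusion (for every $N$, at least $N$ reward periods once $\delta$ is large enough). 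You instead start from \eqref{eq:nonreneging} to get $\Pi_t\geq\pi_H(q_H^*)$ at every $t$, bound the revealed agents' contribution by $\alpha_0\pi_L(q_L^*)$, and convert the residual $\kappa=\pi_H(q_H^*)-\alpha_0\pi_L(q_L^*)$ into a uniform rent floor $\underline{u}$ via a Markov-type counting step; this buys an explicit threshold $\bar{\delta}$ expressed in primitives, which the paper's compactness-style argument never exhibits. Two small repairs are needed: (i) your bound on the pooling component should read $m_\tau\,\pi_H(q_\tau^P)\leq\left[\pi_H(q_\tau^P)\right]^{+}$, since at this stage $q_\tau^P$ large enough that $\pi_H(q_\tau^P)<0$ has not been ruled out (Claim \ref{cl:hightypedistortion} comes later); the counting step goes through verbatim with positive parts, and such periods anyway only help your rent bound since they force $q_\tau^P\geq q_{\mathrm{low}}$; (ii) Claim \ref{cl:lowtypeindifference} equates $U_{L,t}^S(t)$ with the supremum over all later revelation dates, not with pooling forever, so you should only assert $U_{L,t}^S(t)\geq U_{L,t}^P$ --- which is all your argument uses.
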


\begin{proof}
    We prove a stronger claim: For every $N\in \mathbb{N}$, there exists $\delta^*\in (\hat{\delta},1)$ such that if $\delta>\delta^*$ and revelation happens in period $t$, then the reward of the low type takes at least $N$ periods.

    Assume toward a contradiction that we can find a sequence of optimal allocations associated with $\delta _{n} \rightarrow 1$ and $M>0$ such that it takes less than $M$ periods to reward the low type in some revealing period with discount factor $\delta _{n}.$ Then by revealing the low type can get at most $\left( 1-\delta_n \right) M \Delta C \left( \bar{q}\right) $ which converges to zero as $\delta_{n}\rightarrow 1.$ Because the low type is indifferent between revealing or not (Claim \ref{cl:lowtypeindifference}), this implies that the low type's payoff from taking the pooling contract converges to zero. Hence, for every $\epsilon >0$, the principal's continuation payoff is bounded above by $(1-\alpha_0)\epsilon + \alpha_0\pi_L(q_L^*)$. But the principal is always able to guarantee herself a continuation payoff of $\pi_H(q_H^*)$, which is greater than $\alpha_0\pi_L(q_L^*)$ by Assumption \ref{ass:1}, a contradiction.
\end{proof}

We assume for the reminder of the proof that $\delta>\bar{\delta}$.

\begin{claim}\label{cl:hightypedistortion}
    In any optimal mechanism, $q_t^P\leq q_H^*$ for every $t$, and $q_t^P< q_H^*$ for every $t\in\mathcal{R}$.
\end{claim}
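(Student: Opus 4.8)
The plan is to argue by contradiction through a local perturbation of a single pooling quantity. First I would normalize the mechanism. By Claim~\ref{cl:poolingpayment} every optimal mechanism can be replaced by an optimal mechanism with the \emph{same} pooling quantities $\{q_t^P\}$, the same separating quantities, and the same revelation measures $\{r_t\}$ (hence the same $\mathcal{R}$), in which $x_t^P=C_H(q_t^P)$ for all $t$. So it suffices to derive a contradiction from the existence of an optimal mechanism of this normalized form that violates the claim, i.e. one with $q_t^P>q_H^*$ for some $t$, or with $q_t^P=q_H^*$ for some $t\in\mathcal{R}$. Under the normalization the high type's pooling payoff is zero in every period, the principal's per-agent profit in the pool at date $t$ is $\pi_H(q_t^P)$, and a low type's per-period rent from remaining in the pool at date $t$ is $\Delta C(q_t^P)$.

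Next I would perturb. Fix the offending date $t$ and decrease $q_t^P$ to $q_t^P-\varepsilon$ while setting $x_t^P=C_H(q_t^P-\varepsilon)$, so that the high type's pooling payoff stays identically zero and \eqref{eq:IC-H} and \eqref{eq:IR-H} are untouched. This lowers the pooling value of every low-type cohort whose pooling window contains date $t$, namely those revealing at some $s\le t$; cohorts with $s>t$ are unaffected. Because Claim~\ref{cl:lowtypeindifference} forces each revealing cohort to be indifferent, I would then lower the separating payments of each cohort $s\le t$---re-applying the frontloading construction of Claim~\ref{cl:frontloading}---so as to deliver exactly the new, smaller total rent. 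Since the separating streams are indexed by the revelation date, there is enough freedom to restore every indifference, and \eqref{eq:IR-L} continues to hold because the frontloaded payments never drive a continuation payoff negative.

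The resulting profit change splits into a pool-surplus term at date $t$, equal to the pooling mass times $\pi_H(q_t^P-\varepsilon)-\pi_H(q_t^P)$, and a rent-savings term from paying the cohorts $s\le t$ strictly less. If $q_t^P>q_H^*$ then $\pi_H'(q_t^P)<0$, so the surplus term is already strictly positive at first order and the rent-savings term only adds to it; this contradicts optimality and yields $q_t^P\le q_H^*$. If instead $q_t^P=q_H^*$ and $t\in\mathcal{R}$, then $\pi_H'(q_H^*)=0$ makes the surplus term $O(\varepsilon^2)$ (and nonpositive by concavity), whereas the date-$t$ cohort has $r_t>0$ and contributes a strictly positive first-order rent saving of order $\Delta C'(q_H^*)\,\varepsilon$; for $\varepsilon$ small the first-order gain dominates, contradicting optimality and giving $q_t^P<q_H^*$.

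The step I expect to be the main obstacle is checking that the whole family of non-reneging constraints \eqref{eq:nonreneging} survives the perturbation. Their right-hand sides depend only on $R_\tau$, $q_L^*$ and $q_H^*$ and are therefore unchanged, so I must show no continuation profit $\Pi_\tau$ falls below its bound. Lowering and frontloading the separating payments weakly relaxes every \eqref{eq:nonreneging}, by exactly the argument in Claim~\ref{cl:frontloading}: at each date the principal owes the revealed low types weakly less. The only adverse effect is the date-$t$ surplus reduction in the strict-inequality case, but it is $O(\varepsilon^2)$ while the cumulative payment savings entering each $\Pi_\tau$ are $O(\varepsilon)$, so a small $\varepsilon$ preserves every constraint. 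Verifying that this first-order gain dominates the second-order loss period by period is the only genuinely delicate bookkeeping in the argument.
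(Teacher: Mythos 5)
Your proposal is correct and takes essentially the same route as the paper: normalize payments so $x_t^P=C_H(q_t^P)$ (Claims \ref{cl:poolingpayment} and \ref{cl:frontloading}), then perturb $q_t^P$ downward, trading a first-order surplus gain (when $q_t^P>q_H^*$) or a second-order surplus loss (when $q_t^P=q_H^*$ and $t\in\mathcal{R}$) against the first-order rent saving of order $r_t(1-\delta)\Delta C'(q_H^*)\varepsilon$ made possible by the low types' indifference (Claim \ref{cl:lowtypeindifference}). The paper's proof is simply a terser version of this argument, leaving the indifference restoration and the non-reneging bookkeeping implicit, so your extra verification is consistent with, not different from, its approach.
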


\begin{proof}
    Take any optimal mechanism and, if necessary, apply the variation of payments as in Claims \ref{cl:frontloading} and \ref{cl:poolingpayment}. If $q_t^P>q_H^*$ at some $t$, then decreasing $q_t^P$ to $q_H^*$ and $x_t^P$ to $C_H(q_H^*)$ will increase the surplus at $t$ and decrease the information rent to any low-type agents who revealed before $t$. Moreover, if $r_t>0$ and $q_t^P=q_H^*$, further decreasing $q_t^P$ by $\epsilon$ has no first-order effect on surplus but can reduce the information rent to the low-type agents who reveal at $t$ by $r_t(1-\delta)\Delta C'(q_H^*)\epsilon$.
\end{proof}

\begin{claim}\label{cl:gradualrevelation}
    In any optimal mechanism, $\mathcal{R} = \mathbb{Z}_+$. That is, information revelation is gradual.
\end{claim}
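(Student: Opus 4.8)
The plan is a proof by contradiction exploiting optimality. By Claim~\ref{cl:nolastperiod} (recall we have fixed $\delta>\hat\delta$), $\sup\mathcal{R}=\infty$, so revelation never permanently halts; what remains is to rule out an initial delay ($\min\mathcal{R}>0$) and interior gaps. The workhorse is the non-reneging slack
\[
D_t:=\Pi_t-\bigl[R_t\pi_L(q_L^*)+(1-R_t)\pi_H(q_H^*)\bigr]\ge 0,
\]
together with Claim~\ref{cl:bardelta} (every revelation is followed by at least two debt periods) and Claims~\ref{cl:poolingpayment}--\ref{cl:hightypedistortion}, which give that a pooling agent yields flow profit $\pi_H(q_t^P)\le\pi_H(q_H^*)$.

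\emph{No initial delay.} Suppose $r_\tau=0$ for $\tau<t_1$ and $r_{t_1}>0$ with $t_1\ge1$. The continuation from $t_1$, relabeled to start at $0$, is itself implementable (all constraints for $\tau\ge t_1$ survive relabeling, and $R_{t_1}=0$), and it earns $\Pi_{t_1}$. Non-reneging at $t_1$ gives $\Pi_{t_1}\ge\pi_H(q_H^*)$, while during the delay every agent pools, so $\Pi_0=(1-\delta)\sum_{\tau<t_1}\delta^\tau\pi_H(q_\tau^P)+\delta^{t_1}\Pi_{t_1}$ with each $\pi_H(q_\tau^P)\le\pi_H(q_H^*)\le\Pi_{t_1}$. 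Hence $\Pi_0\le\Pi_{t_1}$; combined with the maintained hypothesis $\Pi_0>\pi_H(q_H^*)$ this forces $\Pi_{t_1}>\pi_H(q_H^*)$, so each delayed period contributes strictly less than $\Pi_{t_1}$ and $\Pi_0<\Pi_{t_1}$ strictly. The relabeled mechanism then beats the original, a contradiction; thus $\min\mathcal{R}=0$.

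\emph{No gaps.} Using $\Pi_t=(1-\delta)P_t+\delta\Pi_{t+1}$ (with $P_t$ the flow profit) and $R_{t+1}=R_t+r_t$, write $\Delta\pi:=\pi_L(q_L^*)-\pi_H(q_H^*)>0$ and let $\Lambda_t\ge0$ denote the one-period loss from honoring (rent to debtor agents plus quantity-distortion losses), which is strictly positive whenever the principal is in debt at $t$. A direct substitution yields the slack recursion
\[
D_{t+1}=\frac{D_t+(1-\delta)\Lambda_t-\delta\,r_t\,\Delta\pi}{\delta}.
\]
Consider a maximal gap $[a,b]$ (so $r_{a-1}>0$, $r_\tau=0$ on $[a,b]$, and $r_{b+1}>0$). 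By Claim~\ref{cl:bardelta} the cohort revealing at $a-1$ is still in debt at $a$, so $\Lambda_a>0$ and the recursion (with $r_a=0$) gives $D_{a+1}>0$; iterating across the gap ($r_\tau=0$) gives $D_\tau\ge D_{a+1}/\delta^{\tau-a-1}>0$ for every $\tau\in\{a+1,\dots,b+1\}$. Now shift a mass $\epsilon$ of agents from revealing at $b+1$ to revealing at $a$. One checks that this raises $R_\tau$, and hence the reneging value, by exactly $\epsilon\,\Delta\pi$ for $\tau\in\{a+1,\dots,b+1\}$ and leaves it unchanged elsewhere. Since $D_\tau$ is strictly positive on precisely this set and profits vary continuously (by $O(\epsilon)$), for small $\epsilon$ every non-reneging constraint still holds; the indifference in \eqref{eq:IC-L} at $a$, together with \eqref{eq:IC-H} and the participation constraints, is restored by $O(\epsilon)$ adjustments of the period-$a$ contracts. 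Bringing revelation forward advances the exploitation phase—where per-period surplus attains its maximum $\pi_L(q_L^*)$ with zero rent—so, holding the shifted agents' total rent fixed, the discounted surplus and hence the profit strictly increase. This contradicts optimality, and with $\min\mathcal{R}=0$ we conclude $\mathcal{R}=\mathbb{Z}_+$.

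The main obstacle is the feasibility of the shift: moving agents to reveal earlier mechanically strengthens the principal's temptation to renege in the periods spanning the gap, because less future information remains to be lost (this is the concern flagged after Lemma~\ref{lem:info}). The slack recursion is the device that defeats it—strict slack is manufactured at $a+1$ out of the debt owed to the cohort revealing at $a-1$ (exactly where Claim~\ref{cl:bardelta} enters) and is then inherited across the whole gap, covering precisely the periods whose reneging value the shift inflates. The remaining work—reconstructing the shifted agents' contracts so that \eqref{eq:IC-L}--\eqref{eq:IR-L} continue to hold and verifying the first-order profit gain—is routine once the slack is in hand.
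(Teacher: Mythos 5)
Your ``no initial delay'' step matches the paper's, and your slack recursion is a clean, quantitative version of the paper's observation that \eqref{eq:nonreneging} must be slack at the far end of a gap (the paper derives this slack only at $\hat t=b+1$, which is all its construction needs). The genuine gap is in your shifting deviation. You move mass $\epsilon$ from $b+1$ to $a$ and give these agents contracts that ``advance the exploitation phase,'' i.e.\ quantity $q_L^*$ with rent disbursed at the maximal per-period rate $\Delta C(q_L^*)$, and you then claim that outside $\{a+1,\dots,b+1\}$ the non-reneging constraints are unaffected because the reneging value is unchanged there. But the \emph{left-hand side} $\Pi_\tau$ is not unchanged at $\tau\ge b+2$: under the original mechanism the cohort revealing at $b+1$ is paid down at rate $\Delta C(q_\sigma^S(b+1))$ per debt period, and nothing proved up to this point rules out $q_\sigma^S(b+1)>q_L^*$ (indeed Lemma~\ref{lem:separatingquantity}, proved \emph{after} this claim, shows upward distortion is exactly what optimality forces). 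Your contracts pay down the same total rent at the strictly slower rate $\Delta C(q_L^*)$, and a head start of $b+1-a$ periods cannot compensate over a debt phase of length $O\bigl(1/(1-\delta)\bigr)$; hence near the end of the original cohort's debt phase the shifted agents' outstanding balance under your contracts exceeds the original balance by an order-one amount per agent, while the surplus gain from such a period onward is only $O(1-\delta)$ per agent. So $\Pi_\tau$ \emph{falls} by order $\epsilon$ at those periods. Since you cannot rule out that \eqref{eq:nonreneging} binds there (in an optimal mechanism it binds in every period, by Claim~\ref{cl:principalindifference}), the perturbed allocation need not be implementable, and no ``$O(\epsilon)$ adjustment of the period-$a$ contracts'' can repair a violated constraint at $\tau\ge b+2$.

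The paper's construction is designed precisely to avoid this: it moves the cohort only one period earlier (from $\hat t$ to $\hat t-1$), has them produce $q_L^*$ in the inserted period, and then \emph{copies the original quantity sequence} $q_\tau^S(\hat t)$ for all $\tau\ge\hat t$, with payments re-frontloaded via \eqref{eq:frontloadingconstruction}. Because the shifted agents collect rent $\Delta C(q_L^*)>\Delta C(q_{\hat t-1}^P)$ in the inserted period and are thereafter paid down at the same rate as before, their outstanding balance is pointwise weakly lower and their surplus path pointwise weakly higher than originally; thus $\Pi_\tau$ weakly increases at every $\tau\neq\hat t$, and the only constraint that is hurt is \eqref{eq:nonreneging} at $\hat t$ itself---exactly where the slack lives. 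Your argument can be repaired the same way: either shift to $a$ but copy $q_\tau^S(b+1)$ from $b+1$ onward (your recursion then covers every adversely affected period), or shift one period at a time as the paper does.
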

\begin{proof}
    Take any optimal mechanism and, if necessary, apply the variation of payments as in Claims \ref{cl:frontloading} and \ref{cl:poolingpayment}. 
    
    We first argue that $r_0>0$. If not, let $\tilde{t} = \min\mathcal{R}$. Note that the principal's profit is at most $\pi_H(q_H^*)$ at each $t\in \{0,...,\tilde{t}-1\}$. The principal's total time-$0$ profit is given by $(1-\delta)\left[\sum_{\tau=0}^{\tilde{t}-1}\delta^\tau\pi_H(q_\tau^P) + \delta^{\tilde{t}}\Pi_{\tilde{t}}\right]$, which is presumed to be greater than $\pi_H(q_H^*)$, so we must have $\Pi_{\tilde{t}} > \Pi_0$. But then, the principal is strictly better off by skipping the first $\tilde{t}$ periods and starting with the contracts from time-$\tilde{t}$ onward.

    Next we show that if $r_t>0$ then $r_{t+1}>0$. Suppose toward a contradiction that $r_{t}>0$ but $r_{t+1}=0.$ Let $\hat{t}$ be the smallest integer greater than $t$ such that $r_{\hat{t}}>0$. Such $\hat{t}$ exists because $\sup\mathcal{R} = \infty$ by Claim \ref{cl:nolastperiod}. Hence $r_{\tau }=0$ for every $\tau \in \{t+1,...,\hat{t}-1\}.$  Since $\delta > \bar{\delta}$, by Claim \ref{cl:bardelta} the principal must be paying debt at period $t+1$. But then, from the principal's perspective, her incentive to renege at $t+1$ is strictly stronger than that at $\hat{t}$. Indeed, relative to reneging at $\hat{t}$, reneging at $t+1$ saves the principal from paying her time-$t+1$ debt while she loses nothing as no one reveals between $t+1$ and $\hat{t}$. This implies that \eqref{eq:nonreneging} does not bind at period $\hat{t}$, for otherwise the principal would find it optimal to renege at $t+1$. Therefore we can implement a variation of the mechanism to induce a fraction $\phi\in \left( 0,1\right) $ of the agents that were supposed to reveal at $\hat{t}$ to reveal at $\hat{t}-1$. This can be achieved by offering them a sequence of separating contracts $\{\tilde{q}_{\tau}^S(\hat{t}-1),\tilde{x}_{\tau}^S(\hat{t}-1)\}_{\tau\geq \hat{t}-1}$, such that $\tilde{q}_{\hat{t}-1}^S(\hat{t}-1)= q_L^*$, $\tilde{q}_{\tau}^S(\hat{t}-1)= q_{\tau}^S(\hat{t})$ for $\tau\geq \hat{t}$, and the payments are given by \eqref{eq:frontloadingconstruction}. This is incentive-compatible, keeps the agent indifferent, and increases the surplus, because before the variation this fraction of low type agents only produce $q_{\hat{t}-1}^P$ at $\hat{t}-1$, which is strictly less than $q_L^*$ by Claim \ref{cl:hightypedistortion}. Thus this is a profitable deviation and contradicts the optimality of the original mechanism.
\end{proof}

\begin{claim}\label{cl:principalindifference}
     In any optimal mechanism, \eqref{eq:nonreneging} is binding in every period $t\in\{1,2,...\}$. That is, the principal is always indifferent between reneging on past promises or not.
\end{claim}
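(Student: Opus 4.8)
The plan is to argue by contradiction, reusing the revelation-shifting variation that already appears in the proof of gradual revelation (Claim \ref{cl:gradualrevelation}). Suppose some optimal mechanism had \eqref{eq:nonreneging} slack at a period $t^*\geq 1$, with slack $S:=\Pi_{t^*}-\left[R_{t^*}\pi_L(q_L^*)+(1-R_{t^*})\pi_H(q_H^*)\right]>0$. Since revelation is gradual (Claim \ref{cl:gradualrevelation}), $r_{t^*}>0$, so there is a positive measure of agents available to move, and by Claim \ref{cl:hightypedistortion} the pooling quantity satisfies $q_{t^*-1}^P<q_H^*<q_L^*$. As $t^*-1\geq 0$, the target period $t^*-1$ is well defined. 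I would then, for a small $\varepsilon\in(0,r_{t^*})$, reassign a measure $\varepsilon$ of the agents revealing at $t^*$ to reveal at $t^*-1$ instead, offering them a separating sequence with $\tilde{q}_{t^*-1}^S(t^*-1)=q_L^*$ and $\tilde{q}_\tau^S(t^*-1)=q_\tau^S(t^*)$ for $\tau\geq t^*$, with payments frontloaded as in \eqref{eq:frontloadingconstruction} so that each moved agent's continuation payoff equals his pooling payoff $U_{L,t^*-1}^P$. By low-type indifference (Claim \ref{cl:lowtypeindifference}) this is exactly what he would have obtained, so \eqref{eq:IC-L}, \eqref{eq:IC-H}, \eqref{eq:IR-H}, and \eqref{eq:IR-L} remain satisfied, and all pooling contracts and all other cohorts are untouched.

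The next step is to verify \eqref{eq:nonreneging} period by period. The reassignment raises $R_{t^*}$ by $\varepsilon$ and leaves every other $R_t$ unchanged, so the right-hand side of \eqref{eq:nonreneging} rises by $\varepsilon\left[\pi_L(q_L^*)-\pi_H(q_H^*)\right]$ at $t^*$ and is unchanged at all other dates. Because every agent's lifetime utility is preserved, total rent is unchanged, so the change in the principal's profit is exactly the change in total surplus; and the moved agents now produce $q_L^*$ rather than $q_{t^*-1}^P<q_L^*$ in period $t^*-1$, a strictly positive surplus gain. For $t<t^*$ this surplus gain strictly raises $\Pi_t$ while the right-hand side is unchanged, so \eqref{eq:nonreneging} is relaxed. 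For $t\geq t^*$ the moved agents' quantities—and hence the surplus they generate—are unchanged, while accelerating their rent into period $t^*-1$ (where they now collect $\Delta C(q_L^*)$ in place of the smaller pooling rent $\Delta C(q_{t^*-1}^P)$) weakly reduces the outstanding debt owed to them from $t^*$ onward; thus $\Pi_t$ does not fall for $t\geq t^*$. At $t^*$ the increase in the right-hand side is then absorbed by the slack $S$ once $\varepsilon<S/\left[\pi_L(q_L^*)-\pi_H(q_H^*)\right]$.

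It follows that the variation is implementable and strictly increases the principal's profit, contradicting optimality; hence \eqref{eq:nonreneging} must bind at every $t\geq 1$. (The restriction to $t\geq 1$ is intrinsic: the argument moves agents one period earlier and so has no analogue at $t=0$.) I expect the main obstacle to be the verification of \eqref{eq:nonreneging} for $t>t^*$—one must rule out that accelerating the rent creates a \emph{larger} debt at some later date that would tighten a future non-reneging constraint. This is precisely where the frontloading construction of Claim \ref{cl:frontloading}, which pays $C_H(\cdot)$ each period until the rent is exhausted, is essential: it guarantees that the cumulative rent delivered by any date $\tau\geq t^*$ is weakly larger after the shift, so the remaining debt is weakly smaller at each such $\tau$. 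A secondary, routine point is choosing $\varepsilon$ small enough, which is immediate since only the single constraint at $t^*$ is tightened and its slack $S$ is strictly positive.
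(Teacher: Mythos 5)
Your proposal follows the same basic variational idea as the paper's proof (shift a small mass of period-$t^*$ revealers back to period $t^*-1$, exploiting the slack at $t^*$), and your period-by-period verification of \eqref{eq:nonreneging} is essentially right. But there is a genuine gap: your deviation is not a valid sequential revelation mechanism. By Claim \ref{cl:gradualrevelation}, $r_{t^*-1}>0$, so the principal is \emph{already} offering a separating sequence $\{(q_\tau^S(t^*-1),x_\tau^S(t^*-1))\}_{\tau\geq t^*-1}$ to the agents revealing at $t^*-1$. Your construction gives the moved $\varepsilon$-mass a \emph{different} sequence, $\tilde{q}_{t^*-1}^S(t^*-1)=q_L^*$ followed by $q_\tau^S(t^*)$, so that two distinct separating offers coexist at the same revelation date. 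The mechanism class allows only one sequence per revelation date: the allocation is indexed by $(t,\tau)$, not by which cohort an agent "came from." Your statement that "all other cohorts are untouched" hides exactly this problem — the cohort already revealing at $t^*-1$ cannot be left untouched once the moved agents are merged into it. (This issue did not arise in the proof of Claim \ref{cl:gradualrevelation} itself, because there the target period had $r_{\hat{t}-1}=0$, so a fresh sequence could be introduced; here gradualness guarantees the opposite.)

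The paper closes this gap with an additional step you are missing: it replaces the two would-be sequences at $t^*-1$ by a single "convex combination" offer, defined in rent space by $\hat{\chi}_\tau = \frac{r_{t^*-1}}{r_{t^*-1}+\phi r_{t^*}}\chi_\tau + \frac{\phi r_{t^*}}{r_{t^*-1}+\phi r_{t^*}}\tilde{\chi}_\tau$, converted back into quantities and payments via $\Delta C^{-1}$. One then shows that the principal's per-period profit, as a function $\Psi(\cdot)$ of the rent delivered, is concave (this uses concavity of both $\pi_H$ and $\Delta C$), so by Jensen's inequality the single merged offer yields weakly more profit than the population-weighted mixture of the two sequences; the strict gain from the quantity improvement at $t^*-1$ survives, and the resulting object \emph{is} a legitimate sequential revelation mechanism. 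Without this convexification (or some substitute argument), your contradiction is not established: you have only shown that a mechanism outside the feasible class beats the optimum, which proves nothing. Note also that this is where the assumption $\Delta C''\leq 0$ enters the proof, which your argument never invokes — a sign that something essential was skipped.
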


\begin{proof}
Take any optimal mechanism, and suppose toward a contradiction that \eqref{eq:nonreneging} is slack in some period. Let $\hat{t}$ be the smallest period this happens. By Claim \ref{cl:gradualrevelation}, $r_{\hat{t}}>0$. As in the proof of Claim \ref{cl:gradualrevelation}, one can construct a sequence of separating contracts $\{\tilde{q}_{\tau}^S(\hat{t}-1),\tilde{x}_{\tau}^S(\hat{t}-1)\}_{\tau\geq \hat{t}-1}$ to induce a fraction $\phi$ of the agents that were supposed to reveal at $\hat{t}$ to reveal at $\hat{t}-1$. This variation increases the principal's profit by the previous argument. Note, however, that the principal is already offering a sequence of separating contracts $\{{q}_{\tau}^S(\hat{t}-1),{x}_{\tau}^S(\hat{t}-1)\}_{\tau\geq \hat{t}-1}$ for revealing at $\hat{t}-1$, so this variation will result in a randomization between two separating offers starting $\hat{t}-1$, which is not allowed by the definition of a sequential revelation mechanism. To construct a feasible profitable deviation, we will replace $\{{q}_{\tau}^S(\hat{t}-1),{x}_{\tau}^S(\hat{t}-1)\}_{\tau\geq \hat{t}-1}$ with a ``convex combination" of $\{{q}_{\tau}^S(\hat{t}-1),{x}_{\tau}^S(\hat{t}-1)\}_{\tau\geq \hat{t}-1}$ and $\{\tilde{q}_{\tau}^S(\hat{t}-1),\tilde{x}_{\tau}^S(\hat{t}-1)\}_{\tau\geq \hat{t}-1}$. This single offer will be taken by all of the $r_{\hat{t}-1} + \phi r_{\hat{t}}$ agents revealing at $
\hat{t}-1$.

Specifically, let $(\chi_{\hat{t}-1},\chi_{\hat{t}},\chi_{\hat{t}+1},...)$ and $(\tilde{\chi}_{\hat{t}-1},\tilde{\chi}_{\hat{t}},\tilde{\chi}_{\hat{t}+1},...)$ be the sequences of rents the low type gets from $\{{q}_{\tau}^S(\hat{t}-1),{x}_{\tau}^S(\hat{t}-1)\}_{\tau\geq \hat{t}-1}$ and $\{\tilde{q}_{\tau}^S(\hat{t}-1),\tilde{x}_{\tau}^S(\hat{t}-1)\}_{\tau\geq \hat{t}-1}$, respectively. Note that the discounted sum of each sequence of rents is the same. Define $$\hat{\chi}_{\tau} : = \frac{r_{\hat{t}-1}}{r_{\hat{t}-1}+\phi r_{\hat{t}}}\chi_{\tau} + \frac{\phi r_{\hat{t}}}{r_{\hat{t}-1}+\phi r_{\hat{t}}}\tilde{\chi}_{\tau}.$$
Consider a new sequence of separating contracts $\{\hat{q}_{\tau}^S(\hat{t}-1),\hat{x}_{\tau}^S(\hat{t}-1)\}_{\tau\geq \hat{t}-1}$ such that 
\begin{equation*}
    \hat{q}_{\tau}^S(\hat{t}-1):=\begin{cases}
        q_L^*, &\text{ if }\hat{\chi}_{\tau}\leq \Delta C(q_L^*)\\
        \Delta C^{-1}(\hat{\chi}_{\tau}), &\text{ if }\hat{\chi}_{\tau}>\Delta C(q_L^*)
    \end{cases},
\end{equation*}
and
\begin{equation*}
    \hat{x}_{\tau}^S(\hat{t}-1):=
    \begin{cases}
        C_L(q_L^*) + \hat{\chi}_{\tau}, &\text{ if }\hat{\chi}_{\tau}\leq \Delta C(q_L^*)\\
        C_H(\hat{q}_{\tau}^S(\hat{t}-1)), &\text{ if }\hat{\chi}_{\tau}>\Delta C(q_L^*)
    \end{cases}.
\end{equation*}
By construction, the low type's payoffs from this new sequence of contracts are $(\hat{\chi}_{\hat{t}-1},\hat{\chi}_{\hat{t}},\hat{\chi}_{\hat{t}+1},...)$ whose discounted sum is the same as the other two rent sequences. 

Now we show that this new sequence of separating contracts increases the principal profit against the randomization of the two contract sequences we started with. Note that the principal's profit in period $\tau$ is given by
\begin{equation*}
    \Psi(\hat{\chi}_{\tau}) =\begin{cases}
        \pi_L(q_L^*) - \hat{\chi}_\tau,&\text{ if }\hat{\chi}_{\tau}\leq \Delta C(q_L^*)\\
        \pi_H(\hat{q}_{\tau}^S(\hat{t}-1)), &\text{ if }\hat{\chi}_{\tau}>\Delta C(q_L^*)
    \end{cases}.
\end{equation*}
Using the concavity of $\pi_H(\cdot)$ and $\Delta C(\cdot)$, one can verify that $\Psi(\cdot)$ is concave. This implies that for each $\tau\geq \hat{t}-1$,
\begin{equation*}
    \Psi(\hat{\chi}_{\tau}) \geq \frac{r_{\hat{t}-1}}{r_{\hat{t}-1}+\phi r_{\hat{t}}}\Psi(\chi_{\tau}) + \frac{\phi r_{\hat{t}}}{r_{\hat{t}-1}+\phi r_{\hat{t}}}\Psi(\tilde{\chi}_{\tau}).
\end{equation*}
So we have a profitable deviation.
\end{proof}

\begin{corollary}\label{cor:frontloading}
    In any optimal mechanism, the separating payments are given by \eqref{eq:frontloadingconstruction} and the pooling payments satisfy $x_t^P = C_H(q_t^P)$ for every $t$.
\end{corollary}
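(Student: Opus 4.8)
The plan is to obtain the corollary as an immediate consequence of the three preceding claims, via a single contradiction argument exploiting the tension between ``frontloading strictly slackens \eqref{eq:nonreneging}'' (Claims \ref{cl:frontloading} and \ref{cl:poolingpayment}) and ``\eqref{eq:nonreneging} must bind everywhere at an optimum'' (Claim \ref{cl:principalindifference}). Fix an arbitrary optimal mechanism $\left\{\left(q_t^P,x_t^P\right), \left(q^S_\tau(t),x^S_\tau(t)\right), r_t\right\}$. The conceptual key is that Claim \ref{cl:principalindifference} applies to \emph{every} optimal mechanism, so I may invoke it not only on the fixed mechanism but also on any equally profitable---hence itself optimal---allocation produced by the payment variations of Claims \ref{cl:frontloading} and \ref{cl:poolingpayment}.

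For the pooling payments I would argue by contradiction. Suppose $U_{H,t}^P>0$ for some $t$. Then the final assertion of Claim \ref{cl:poolingpayment} furnishes an optimal allocation under which \eqref{eq:nonreneging} is not always binding; since that allocation is optimal, Claim \ref{cl:principalindifference} forces \eqref{eq:nonreneging} to bind in every period $t\geq 1$, a contradiction. Hence $U_{H,t}^P=0$ for all $t$. I then translate this into the per-period identity by telescoping: since $(1-\delta)\left[x_t^P-C_H(q_t^P)\right]=U_{H,t}^P-\delta U_{H,t+1}^P$, substituting $U_{H,t}^P=U_{H,t+1}^P=0$ yields $x_t^P=C_H(q_t^P)$ for every $t$.

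For the separating payments I would apply Claim \ref{cl:frontloading} to the fixed optimal mechanism to obtain the alternative allocation with payments $\tilde{x}^S_\tau(t)$ given by \eqref{eq:frontloadingconstruction}; by that claim it is implementable and yields the same profit, so it is optimal. Suppose, toward a contradiction, that $x^S_\tau(t)\neq\tilde{x}^S_\tau(t)$ for some $\tau\geq t$. The last part of Claim \ref{cl:frontloading} then guarantees that \eqref{eq:nonreneging} is not always binding under the alternative allocation; but the alternative is optimal, so Claim \ref{cl:principalindifference} again requires \eqref{eq:nonreneging} to bind in every period, a contradiction. Therefore $x^S_\tau(t)=\tilde{x}^S_\tau(t)$ for all $t$ and all $\tau\geq t$, i.e.\ the separating payments are exactly those prescribed by \eqref{eq:frontloadingconstruction}.

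There is no genuine difficulty here, only two points requiring care. First---and this is the main thing to get right---the binding-constraint conclusion of Claim \ref{cl:principalindifference} must be applied to the \emph{constructed} optimal allocation rather than to the original mechanism, because it is the constructed allocation whose \eqref{eq:nonreneging} has been shown to be slack; the original is used only as the seed for the variation. Second, one needs the elementary telescoping step converting the continuation-payoff statement $U_{H,t}^P=0$ into the flow identity $x_t^P=C_H(q_t^P)$, with the analogous bookkeeping determining the value of $\Delta\tilde{X}^S_t$ that enters \eqref{eq:frontloadingconstruction} through $U^S_{H,t}$.
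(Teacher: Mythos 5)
Your proposal is correct and follows essentially the same route as the paper, whose proof simply cites Claims \ref{cl:frontloading}, \ref{cl:poolingpayment}, and \ref{cl:principalindifference}; you have merely spelled out the intended contradiction argument, correctly noting that the binding-everywhere conclusion of Claim \ref{cl:principalindifference} must be applied to the equally profitable (hence optimal) allocations constructed by the payment variations, whose \eqref{eq:nonreneging} would otherwise be slack in some period $t\geq 1$. The telescoping step from $U_{H,t}^P=0$ to $x_t^P=C_H(q_t^P)$ is also the right bookkeeping and matches what the paper leaves implicit.
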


\begin{proof}
    This follows immediately from Claims \ref{cl:frontloading}, \ref{cl:poolingpayment}, and \ref{cl:principalindifference}.
\end{proof}

 \begin{proof}[\bf Proof of Lemma \ref{lem:frontloading}]
     This follows immediately from Corollary \ref{cor:frontloading}, Claim \ref{cl:poolingpayment}, and Claim \ref{cl:bardelta}.
 \end{proof}

 \begin{claim}\label{cl:eventualrevelation}
    In any optimal mechanism, $\sum_{t=0}^{\infty}r_t =\alpha_0$. That is, information revelation is eventual.
\end{claim}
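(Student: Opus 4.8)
The plan is to prove eventual revelation by contradiction: suppose an optimal mechanism has $\sum_{t=0}^\infty r_t = R_\infty < \alpha_0$, so a positive mass $\alpha_0 - R_\infty$ of low types never reveals and permanently takes the pooling contract. The key idea, as foreshadowed in the discussion after Lemma~\ref{lem:info}, is that one can \emph{scale up} the measure of revealing agents in every period while keeping the same sequence of contracts, thereby revealing strictly more information and raising the principal's profit without violating any constraint.

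Concretely, I would fix the scaling factor $\kappa := \alpha_0 / R_\infty > 1$ and define a new allocation with revelation measures $\hat{r}_t := \kappa r_t$ in each period, leaving every pooling contract $(q_t^P, x_t^P)$ and every separating sequence $\{(q_\tau^S(t), x_\tau^S(t))\}_{\tau \geq t}$ exactly as before. Since $\sum_t \hat{r}_t = \kappa R_\infty = \alpha_0$, the new allocation is feasible in the sense that the revealing measures sum to the total mass of low types (and $\hat{R}_t = \kappa R_t \leq \alpha_0$ for all $t$). The heart of the argument is to verify that all five constraints survive. The agent-side constraints \eqref{eq:IC-L}, \eqref{eq:IC-H}, \eqref{eq:IR-H}, and \eqref{eq:IR-L} depend only on the contract terms offered to a given agent, not on how many agents reveal, so they are automatically preserved. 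The delicate constraint is \eqref{eq:nonreneging}: both the continuation profit $\Pi_t$ on the left and the outside-option value $R_t \pi_L(q_L^*) + (1-R_t)\pi_H(q_H^*)$ on the right change when the revelation measures are scaled.

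The main obstacle is showing that the non-reneging constraint is \emph{not} violated under scaling, because scaling up $r_t$ affects $\Pi_t$ and $R_t$ in opposing-looking ways. I would handle this by writing the net slack of \eqref{eq:nonreneging}, namely $\Pi_t - R_t\pi_L(q_L^*) - (1-R_t)\pi_H(q_H^*)$, and examining how it depends linearly on the revelation measures. Observe that both $\Pi_t$ and $R_t$ are linear in the vector $(r_\tau)_\tau$, so the entire net slack can be decomposed into a term that is homogeneous of degree one in the revelation measures (the profit and rent contributions from agents who ever reveal, weighted against the $\pi_L$--$\pi_H$ gap they unlock) plus a constant baseline coming from the permanently-pooling mass. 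The subtle point is that scaling the revealing agents by $\kappa$ scales both the principal's information benefit (the $R_t\pi_L$ term and the extra profit from revealed agents) and her rent-payment cost proportionally, so the homogeneous part of the slack scales by $\kappa$; I must confirm that since the original slack was nonnegative and the baseline pooling term only shifts in the principal's favor (the unlocked $\alpha_0 - R_\infty$ mass was previously pooling and earning at most $\pi_H$, whereas now revelation earns the full $\pi_L$ gap), the rescaled slack remains nonnegative at every $t$.

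Once feasibility is established, the profit improvement is immediate: under the original allocation the mass $\alpha_0 - R_\infty$ of unrevealed low types earns the principal only the pooling profit $v(q_t^P) - x_t^P \leq \pi_H(q_H^*)$ each period, whereas inducing them to reveal (following the existing separating sequences) eventually yields $\pi_L(q_L^*) > \pi_H(q_H^*)$ per agent in the exploitation phase, a strict gain by Assumption~\ref{ass:1}. This contradicts optimality of the original mechanism, so we must have $\sum_{t=0}^\infty r_t = \alpha_0$. The cleanest execution would be to frame the scaling as continuously increasing $\kappa$ from $1$ and arguing the slack in \eqref{eq:nonreneging} stays nonnegative throughout by the homogeneity observation, which also sidesteps any awkwardness at the boundary $\hat{R}_t = \alpha_0$.
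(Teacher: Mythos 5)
Your overall strategy coincides with the paper's: suppose $\bar R:=\sum_t r_t<\alpha_0$, scale every revelation measure by $\kappa=\alpha_0/\bar R$ while keeping all contracts fixed, note that the agent-side constraints are untouched, and argue that \eqref{eq:nonreneging} survives and that profit strictly rises. However, the two decisive verifications are exactly where your write-up has gaps. On \eqref{eq:nonreneging}: your decomposition of the slack $\Pi_t-R_t\pi_L(q_L^*)-(1-R_t)\pi_H(q_H^*)$ into a constant baseline plus a part homogeneous of degree one in $(r_\tau)_\tau$ is a legitimate route, but homogeneity alone proves nothing---if the homogeneous part were negative, multiplying it by $\kappa>1$ would \emph{shrink} the slack and could break the constraint. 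The missing step is a sign fact: the baseline, namely the slack that would obtain if the entire unit mass pooled forever, $(1-\delta)\sum_{\tau\ge t}\delta^{\tau-t}\left[v(q_\tau^P)-x_\tau^P\right]-\pi_H(q_H^*)$, is nonpositive, because $x_\tau^P=C_H(q_\tau^P)$ and $q_\tau^P\le q_H^*$ in any optimal mechanism (Claims \ref{cl:poolingpayment} and \ref{cl:hightypedistortion}; alternatively \eqref{eq:IR-H} suffices). Only with that fact does the homogeneous part exceed the original slack, hence is nonnegative, hence scaling preserves nonnegativity. Your stated justification---that ``the baseline pooling term only shifts in the principal's favor''---is inconsistent with your own setup (the baseline is constant under scaling by construction) and conflates the profit-improvement argument with the constraint-preservation argument. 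The paper instead anchors this step in Claim \ref{cl:principalindifference}: since \eqref{eq:nonreneging} binds in every period, it reduces to the one-period comparison ``renege at $t+1$ versus at $t$,'' which decomposes across three groups of agents into terms that scale by $\kappa$ plus a single nonpositive term $(1-R_{t+1})(1-\delta)\left[\pi_H(q_t^P)-\pi_H(q_H^*)\right]$ that only improves under scaling.

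On the profit improvement: you claim that moving the mass $\alpha_0-\bar R$ from pooling to the existing separating sequences is a strict gain because revelation ``eventually yields $\pi_L(q_L^*)$.'' This ignores the reward phase, where the agent produces $q_\tau^S(t)\ge q_L^*>q_H^*$ and is paid $C_H(q_\tau^S(t))$, so the principal's flow profit is $\pi_H(q_\tau^S(t))$, which can be below the pooling flow profit and even negative; the discounted per-agent comparison $V^S>V^P$ is therefore not immediate from your observation. The paper obtains it indirectly: the standing presumption that the optimal profit strictly exceeds $\pi_H(q_H^*)$, combined with $V^P<\pi_H(q_H^*)$ (downward distortion of pooling quantities), forces $V^S>\pi_H(q_H^*)>V^P$. (Also, $\pi_L(q_L^*)>\pi_H(q_H^*)$ is a primitive assumption of the model, not Assumption \ref{ass:1}; Assumption \ref{ass:1} is what pins down the reneging payoff on the right-hand side of \eqref{eq:nonreneging}.) Both gaps can be repaired using facts already established earlier in the appendix, but as written neither of the two steps that carry the proof is actually proved.
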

\begin{proof}
    Take any optimal mechanism. Recall that $R_t : = \sum_{\tau<t} r_\tau$. Suppose toward a contradiction that $\lim_t R_t = \bar{R} < \alpha_0$. Let $V^P$ be the principal's profit from an agent who takes the pooling contract forever, and $V^S$ be the principal's average profit from the agents who reveal at some $t$. Remember we assume  $\pi_H(q_H^*)< \Pi_0 = (1-\bar{R})V^P + \bar{R} V^S$.   By Claim \ref{cl:hightypedistortion}, $V^P<\pi_H(q_H^*)$. This implies that 
    $$V^S > \pi_H(q_H^*) > V^P.$$

    Now consider an alternative allocation where all contracts remain the same while the measure of agents revealing at $t$ is scaled up to $\hat{r}_t = \frac{\alpha_0}{\bar{R}} r_t$ for every $t$. Clearly this alternative allocation still satisfies \eqref{eq:IC-H}, \eqref{eq:IC-L}, \eqref{eq:IR-H}, and \eqref{eq:IR-L}, because all contracts are unaltered. Moreover, the principal's profit under the alternative allocation is $\alpha_0 V^S + (1-\alpha_0)V^P$, which is greater than $\Pi_0$. This renders a valid profitable deviation and contradicts the optimality of the original mechanism, if we can show that \eqref{eq:nonreneging} remains satisfied.

    To check \eqref{eq:nonreneging}, let us fix any $t$. Claim \ref{cl:principalindifference} implies that under the original optimal allocation the principal is indifferent between reneging at $t$ or reneging at $t+1$. Consider three types of agents:
    \begin{itemize}
        \item Those who are supposed to reveal at $t$, with measure $r_t$. If the principal reneges at $t$, she gets $r_t\pi_H(q_H^*)$ from them; if the principal reneges at $t+1$, she gets $r_t[(1-\delta)\pi_H(q_t^S(t))+\delta \pi_L(q_L^*)]$ from them. 
        \item Those who revealed prior to $t$, with measure $R_t$. Let $B_t^S(\tau)$ be the rent the principal is suppose to pay at $t$ to an agent who revealed at $\tau<t$. If the principal reneges at $t$, she gets $R_t\pi_L(q_L^*)$ from them; if the principal reneges at $t+1$, she gets $(1-\delta)\sum_{\tau<t}r_\tau [v(q_t^S(\tau))-B_t^S(\tau)] + \delta R_t \pi_L(q_L^*)$.

        \item Those who are supposed to take the pooling contracts at $t$, with measure $1-R_{t+1}$. If the principal reneges at $t$, she gets $(1-R_{t+1})\pi_H(q_H^*)$ from them; if the principal reneges at $t+1$, she gets $(1-R_{t+1})[(1-\delta)\pi_H(q_t^P)+\delta\pi_H(q_H^*)]$.
    \end{itemize}

    Because the principal weakly prefers reneging at $t+1$ to reneging at $t$, we have
    \begin{align*}
        &r_t[(1-\delta)\pi_H(q_t^S(t))+\delta \pi_L(q_L^*)-\pi_H(q_H^*)] \\
        &+ (1-\delta)\sum_{\tau<t}r_\tau [v(q_t^S(\tau))-B_t^S(\tau)-\pi_L(q_L^*)]  \\
        &+ (1-R_{t+1})(1-\delta)[\pi_H(q_t^P)-\pi_H(q_H^*)] \\
        &\geq 0.
    \end{align*}
As $\pi_H(q_t^P)-\pi_H(q_H^*)\leq 0$, the above inequality continues to hold if every $r_\tau$ is scaled up the same factor (greater than $1$). This means that \eqref{eq:nonreneging} is still satisfied at $t$. Because $t$ is arbitrary, we are done.
\end{proof}

\begin{proof}[\bf Proof of Lemma \ref{lem:info}]
    This follows immediately from Claims \ref{cl:gradualrevelation} and \ref{cl:eventualrevelation}.
\end{proof}

\begin{claim}\label{cl:poolingquantityconverge}
    In any optimal mechanism, $\lim_{t\to\infty}q_t^P = q_H^*$.
\end{claim}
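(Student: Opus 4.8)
The plan is to leverage two facts already established: that the non-reneging constraint \eqref{eq:nonreneging} binds in every period $t\geq 1$ (Claim \ref{cl:principalindifference}), and that revelation is eventual, so $R_t\to\alpha_0$ and $r_t\to 0$ (Claim \ref{cl:eventualrevelation}). Together these pin down the principal's per-period profit asymptotically, from which the pooling quantity can be read off. The key idea is that in the long run essentially all of the mass $\alpha_0$ of low types has been revealed and is being exploited at $\left(q_L^*,C_L(q_L^*)\right)$, so the binding non-reneging constraint forces the remaining (pooling) profit to equal $(1-\alpha_0)\pi_H(q_H^*)$, i.e. the pooling quantity must approach $q_H^*$.

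First I would write the continuation profit recursively as $\Pi_t=(1-\delta)P_t+\delta\Pi_{t+1}$, where $P_t$ is the undiscounted period-$t$ profit. Substituting the binding values $\Pi_t=R_t\pi_L(q_L^*)+(1-R_t)\pi_H(q_H^*)$ and the analogous expression for $\Pi_{t+1}$ into the recursion and solving for $P_t$, and using $R_{t+1}=R_t+r_t$, yields the exact identity $P_t=A_t\pi_L(q_L^*)+(1-A_t)\pi_H(q_H^*)$ with $A_t=R_t-\tfrac{\delta r_t}{1-\delta}$. Since $R_t\to\alpha_0$ and $r_t\to 0$ by Claim \ref{cl:eventualrevelation}, we have $A_t\to\alpha_0$, and hence $P_t\to\alpha_0\pi_L(q_L^*)+(1-\alpha_0)\pi_H(q_H^*)$.

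Second I would bound $P_t$ from above directly in terms of the contracts. The pooling agents, of measure $1-R_{t+1}$, contribute exactly $(1-R_{t+1})\pi_H(q_t^P)$ because $x_t^P=C_H(q_t^P)$ (Corollary \ref{cor:frontloading}). Each revealed agent contributes \emph{at most} $\pi_L(q_L^*)$ per period: in the exploitation phase the contribution is exactly $\pi_L(q_L^*)$, whereas in a reward period the payment is $C_H(q_\tau^S(T))$, giving $\pi_H(q_\tau^S(T))\leq\pi_H(q_H^*)<\pi_L(q_L^*)$, and in the transition period the payment $C_L(\cdot)+\beta\Delta C(\cdot)$ gives at most $\pi_L(\cdot)\leq\pi_L(q_L^*)$. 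Therefore $P_t\leq(1-R_{t+1})\pi_H(q_t^P)+R_{t+1}\pi_L(q_L^*)$.

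Finally I would combine the two: equating the exact value of $P_t$ with this upper bound and rearranging gives a lower bound on $\pi_H(q_t^P)$ whose right-hand side converges to $\pi_H(q_H^*)$ as $t\to\infty$ (using $A_t,R_{t+1}\to\alpha_0$ and $R_{t+1}<1$), so $\liminf_t\pi_H(q_t^P)\geq\pi_H(q_H^*)$. Since $q_t^P<q_H^*$ by Claim \ref{cl:hightypedistortion} implies $\pi_H(q_t^P)\leq\pi_H(q_H^*)$, I conclude $\pi_H(q_t^P)\to\pi_H(q_H^*)$, and by the strict monotonicity of $\pi_H$ on $[0,q_H^*]$ and its continuity, $q_t^P\to q_H^*$. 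The main subtlety worth flagging is that only the \emph{one-sided} bound on each revealed agent's per-period profit (at most $\pi_L(q_L^*)$) is needed: one need not estimate the measure of agents still in their reward phase or the associated profit loss, because the binding non-reneging constraints already deliver the exact asymptotic level of $P_t$, and the direction of the resulting slack is precisely what pushes $q_t^P$ up toward $q_H^*$.
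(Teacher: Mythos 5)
Your proof is correct and takes essentially the same route as the paper: both arguments rest on the non-reneging constraint \eqref{eq:nonreneging} binding at consecutive periods (Claim \ref{cl:principalindifference}) together with $r_t\to 0$, which forces the one-period benefit of reneging coming from the pooling distortion, proportional to $\pi_H(q_H^*)-\pi_H(q_t^P)$, to vanish. The only difference is presentational: you turn the paper's verbal benefit--cost comparison into explicit algebra, computing the flow profit $P_t$ exactly from the recursion $\Pi_t=(1-\delta)P_t+\delta\Pi_{t+1}$ and bounding it via the payment structure from Lemma \ref{lem:frontloading} and Corollary \ref{cor:frontloading}, which is a cleaner write-up of the same idea.
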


\begin{proof}
    Notice that $r_t\to 0$. By Claim \ref{cl:principalindifference}, at each $t$ the principal is indifferent between reneging at $t$ and reneging at $t+1$. By reneging today, the principal can save the information rents to those whom she is in debt with and, more importantly, let the high type to produce $q_H^*$ today rather than $q_t^P$. Therefore, relative to reneging at $t+1$, the principal's benefit of reneging at $t$ is bounded below by $(1-\delta)(1-\alpha_0)[\pi_H(q_H^*)- \pi_H(q_t^P)]$. On the other hand, the cost of reneging at $t$ comes from not being able to exploit the agents who are supposed to reveal at $t$. This is proportional to $r_t$ and converges to $0$ as $t\to\infty$. Therefore, for the principal to be indifferent between reneging at $t$ or $t+1$, we must have $(1-\delta)(1-\alpha_0)[\pi_H(q_H^*)- \pi_H(q_t^P)]\to 0$, which implies $\lim_t q_t^P =  q_H^*$.
\end{proof}

\begin{proof}[\bf Proof of Lemma \ref{lem:poolingquantity}]
    This follows immediately from Claims \ref{cl:poolingpayment}, \ref{cl:hightypedistortion}, \ref{cl:gradualrevelation}, and \ref{cl:poolingquantityconverge}.
\end{proof}

\begin{proof}[\bf Proof of Lemma \ref{lem:separatingquantity}]
Take any optimal mechanism and fix any $t$. 

First we argue that $q_\tau^S(t)\geq q_L^*$ for every $\tau\geq t$. Suppose toward a contradiction that we can find $\tau\geq t$ such that $q_\tau^S(t)< q_L^*$. The principal can then increase $q_\tau^S(t)$ to $q_L^*$ and increase $x_\tau^S(t)$ by $C_L(q_L^*)-C_L(q_\tau^S(t))$. This keeps the payoff of the low type unchanged, relaxes the high type's incentive compatibility, and increases surplus, making it a profit deviation.

Next we show that $q_\tau^S(t)=q_L^*$ for all $\tau>t+s$, that is, when the principal is no longer in debt with the agents who reveals at $t$. Indeed, if $q_\tau^S(t)\neq q_L^*$, the principal can simply reset $q_\tau^S(t)$ to $q_L^*$ and $x_\tau^S(t)$ to $C_L(q_L^*)$. This increases surplus, keeps the low type's payoff unchanged, and maintains the high type's incentive compatibility,\footnote{Note that the payment after the debt periods is less than the high type's cost, so the high type (even after falsely revealing at $t$) would never stay in the separating contract till then.} making it a profitable deviation.

Now we prove that if $\tau$ and $\tau+1$ are debt periods then $q_{\tau}^S(t)>q_{\tau+1}^S(t)$. 
\begin{itemize}
    \item Suppose toward a contradiction that $q_{\tau}^S(t)<q_{\tau+1}^S(t)$. By Lemma \ref{lem:frontloading}, $x_{\tau}^S(t) = C_H(q_{\tau}^S(t))$ and $x_{\tau+1}^S(t) = C_H(q_{\tau+1}^S(t))$.\footnote{This is true even if $\tau+1$ is the last debt period. To see this, note that $q_{\tau+1}^S(t)>q_{\tau}^S(t)\geq q_L^*$ by the contradiction assumption. If it were the case that $x_{\tau+1}^S(t) < C_H(q_{\tau+1}^S(t))$, then the principal could decrease $q_{\tau+1}^S(t)$ and $x_{\tau+1}^S(t)$ at the same time to keep the low type's payoff unchanged and increase surplus. Importantly, this variation would not affect the high type's incentive compatibility because the high type would never falsely reveal at $t$ and stay in the contract until $\tau+1$. So this would be a profitable deviation.} Now increase $q_{\tau}^S(t)$ by $\epsilon$, decrease $q_{\tau+1}^S(t)$ by $\frac{\Delta C'(q_{\tau}^S(t))}{\Delta C'(q_{\tau+1}^S(t))}\delta^{-1}\epsilon$, and adjust payments accordingly. By construction, this variation keeps the low type's payoff unchanged and frontloads some rent from $\tau+1$ to $\tau$ (thus relaxing \eqref{eq:nonreneging} at $\tau+1$). Moreover, the change of surplus is given by
    \begin{align*}
        (1-\delta)\epsilon\left[\pi_H'(q_{\tau}^S(t)) - \pi_H'(q_{\tau+1}^S(t))\frac{\Delta C'(q_{\tau}^S(t))}{\Delta C'(q_{\tau+1}^S(t))}\right], 
    \end{align*}
    which is positive because \textit{i) $q_L^*\leq q_{\tau}^S(t)<q_{\tau+1}^S(t)$ and $\pi_H$ is strictly concave so that $0\geq \pi_H'(q_{\tau}^S(t)) > \pi_H'(q_{\tau+1}^S(t))$}; and \textit{ii)} $\Delta C$ is concave so that $\frac{\Delta C'(q_{\tau}^S(t))}{\Delta C'(q_{\tau+1}^S(t))}\geq 1$. So this is a profitable deviation. 
\item Suppose toward a contradiction that $q_{\tau}^S(t)=q_{\tau+1}^S(t)$. We can apply the same variation as above, which keeps the agents' payoffs and has no first-order effect on surplus this time. But it still strictly relaxes \eqref{eq:nonreneging} at $t+1$, thus a further variation based on Claim \ref{cl:principalindifference}'s proof will render a profitable deviation.
    \end{itemize}

Therefore, $x_\tau^S(t)$ must be strictly decreasing in $\tau$ over the debt periods and stay constant at $q_L^*$ after the debt periods. This establishes Lemma \ref{lem:separatingquantity}.
\end{proof}

\begin{proof}[\bf Proof of Theorem \ref{thm:structure}]
    This theorem follows directly from Lemmas \ref{lem:frontloading} through \ref{lem:separatingquantity}.
\end{proof}

\subsection{Proof of Theorem \ref{thm:nsc} and Proposition \ref{prop:typedistance}}\label{app:thm3}
\begin{proof}[\bf Proof of Part 1 of Theorem \ref{thm:nsc}]
Suppose that \eqref{eq:nsconditionnonrevealing} holds for all $\tilde{q}_L \in [q_L^*,\bar{q}]$, that is,
\begin{align*}
        \pi_L(q_L^*) - \pi_H(q_H^*) < \ln \left(\frac{\Delta C(\tilde{q}_L)}{\Delta C(\tilde{q}_L)-\Delta C({q}_H^*)}\right) \left[\pi_L(q_L^*)-\pi_H(\tilde{q}_L)\right].
\end{align*}
Let us first prove that the best pooling mechanism is optimal when $\delta$ is sufficiently close to $1$. Suppose (toward a contradiction) that for any $\epsilon>0$, there is some $\delta>1-\epsilon$ such that every optimal mechanism is revealing. We will construct a $\tilde{q}$ that violates the above inequality.

Let us first fix a small $\epsilon$ and such a  $\delta>1-\epsilon$. Take any optimal mechanism. For every $t$, let $\{z(t),...,t-1\}$ be the set of periods that the principal is still in debt with the agents revealing in these periods. By Lemma \ref{lem:frontloading}, if an agent reveals in $\tau\leq t$, the principal's (flow) payoff from this agent in period $t$ is 
\begin{equation*}
    \pi_t^S(\tau) =\begin{cases}
        \pi_H(q_t^S(\tau)),&\text{ if } x_t^S(\tau) = C_H(q_t^S(\tau))\\
        \pi_H(q_t^S(\tau)) + (1-\beta) \Delta C(q_t^S(\tau)),&\text{ if } x_t^S(\tau) = C_L(q_t^S(\tau))+\beta\Delta C(q_t^S(\tau))
    \end{cases}.
\end{equation*}

By Claim \ref{cl:principalindifference}, the principal is indifferent between reneging at $t$ or reneging at $t+1$. Following the analysis in the proof of Claim \ref{cl:eventualrevelation}, the principal's weak preference for reneging at $t+1$ relative to reneging at $t$ tells us that
\begin{align}\label{eq:NRwaiting}
        &r_t[(1-\delta)\pi_H(q_t^S(t))+\delta \pi_L(q_L^*)-\pi_H(q_H^*)] \nonumber\\
        \geq 
        &(1-\delta)\sum_{\tau=z(t)}^{t-1}r_\tau [\pi_L(q_L^*)- \pi_{t}^S(\tau)] + (1-R_{t+1})(1-\delta)[\pi_H(q_H^*)- \pi_H(q_t^P)].
    \end{align}
Because $\pi_H(q_H^*) \geq \pi_H(q_t^S(t))$ and $\pi_H(q_H^*) \geq \pi_H(q_t^P)$, the above inequality implies that 
\begin{align*}
        r_t\delta [\pi_L(q_L^*)-\pi_H(q_H^*)] 
        &\geq 
        (1-\delta)\sum_{\tau=z(t)}^{t-1}r_\tau [\pi_L(q_L^*)- \pi_{t}^S(\tau)].
    \end{align*}
Summing from any $\tilde{t}$ onward, we have
\begin{align}
        \sum_{t= \tilde{t}}^\infty r_t\delta [\pi_L(q_L^*)-\pi_H(q_H^*)] 
        &\geq (1-\delta)\sum_{t= \tilde{t}}^\infty
        \sum_{\tau=z(t)}^{t-1}r_\tau [\pi_L(q_L^*)- \pi_{t}^S(\tau)]\nonumber\\
        &\geq (1-\delta)\sum_{t=\tilde{t} }^{\infty}r_t\sum_{\tau = t+1}^{T(t)}\left[\pi_L(q_L^*) - \pi_\tau^S(t)\right]\nonumber\\
        &\geq (1-\delta)\sum_{t=\tilde{t} }^{\infty}r_t\sum_{\tau = t+1}^{T(t)-1}\left[\pi_L(q_L^*) - \pi_H(q_\tau^S(t))\right].\label{eq:rolling}
    \end{align}
The second line follows from changing the order of summation, dropping the terms from periods before $\tilde{t}$, and letting $T(t)$ be the final debt period for the agents revealing at $t$; the third line follows from the fact $\pi_\tau^S(t) = \pi_H(q_\tau^S(t))$ in all debt periods except the last one $T(t)$.

To further simply the RHS of \eqref{eq:rolling}, recall that the low types are indifferent between revealing or not in every period (Claim \ref{cl:lowtypeindifference}). And because $\lim_{t\to\infty}q_t^P = q_H^*$ (Claim \ref{cl:poolingquantityconverge}), we know that for $t$ sufficiently large, we have
\begin{equation*}
   (1-\delta)\sum_{\tau = t+1}^{T(t)-1}\delta^{\tau-t}\Delta C (q_\tau^S(t)) = \Delta C (q_H^*) + O(\epsilon).
\end{equation*}
Let $\tilde{q}_t\in [q_L^*,\bar{q}]$ be such that
\begin{equation*}
    (1-\delta)\sum_{\tau = t+1}^{T(t)-1}\delta^{\tau-t}\Delta C (\tilde{q}_t) = (1-\delta)\sum_{\tau = t+1}^{T(t)-1}\delta^{\tau-t}\Delta C (q_\tau^S(t)).
\end{equation*}
Because $\pi_H(\cdot)$ and $\Delta C (\cdot)$ are concave, we have 
\begin{equation*}
\sum_{\tau = t+1}^{T(t)-1} \pi_H(\tilde{q}_t) \geq \sum_{\tau = t+1}^{T(t)-1} \pi_H(q_\tau^S(t)).
\end{equation*}
Thus \eqref{eq:rolling} now becomes
\begin{align*}
        \sum_{t= \tilde{t}}^\infty r_t\delta [\pi_L(q_L^*)-\pi_H(q_H^*)] 
        &\geq (1-\delta)\sum_{t=\tilde{t} }^{\infty}r_t\sum_{\tau = t+1}^{T(t)-1}\left[\pi_L(q_L^*) - \pi_H(\tilde{q}_t)\right]\nonumber\\
        &= (1-\delta) \sum_{t=\tilde{t} }^{\infty}r_t [T(t)- t - 1]  \left[\pi_L(q_L^*) - \pi_H(\tilde{q}_t)\right];
    \end{align*}
that is, 
\begin{equation*}
    \sum_{t= \tilde{t}}^\infty r_t\left\{\delta [\pi_L(q_L^*)-\pi_H(q_H^*)] - (1-\delta) [T(t)- t - 1]  \left[\pi_L(q_L^*) - \pi_H(\tilde{q}_t)\right]\right\}\geq 0.
\end{equation*}
So  there must exist some $t$ and $\tilde{q}_t$ such that
\begin{equation}\label{eq:inside}
    \delta [\pi_L(q_L^*)-\pi_H(q_H^*)] \geq  (1-\delta) [T(t)- t - 1]  \left[\pi_L(q_L^*) - \pi_H(\tilde{q}_t)\right],
\end{equation}
and
\begin{equation}\label{eq:rentapproximate}
        (1-\delta)\sum_{\tau = t+1}^{T(t)-1}\delta^{\tau-t}\Delta C (\tilde{q}_t) = \Delta C (q_H^*) + O(\epsilon).
\end{equation}

Now let us send $\epsilon$ to $0$, which also pushes $\delta$ to $1$ (remember $\delta > 1-\epsilon$). Taking a subsequence if necessary, let $\tilde{q} := \lim_{\epsilon\to 0}\ \tilde{q}_t$. By \eqref{eq:rentapproximate}, we have
\begin{equation*}
    \lim_{\epsilon\to 0}\  \delta^{T(t)-t-1} = \frac{\Delta C (\tilde{q}) - \Delta C (q_H^*)}{\Delta C (\tilde{q})},
\end{equation*}
which implies
\begin{equation*}
    \lim_{\epsilon\to 0}\  -\ln \delta \left[T(t)-t-1\right] = \ln \left( \frac{\Delta C (\tilde{q})}{\Delta C (\tilde{q}) - \Delta C (q_H^*)}\right).
\end{equation*}
Note also that $\lim_{\epsilon\to 0}\frac{1-\delta}{-\ln\delta} =1$, so taking the limit on both sides on \eqref{eq:inside} renders
\begin{equation*}
     \pi_L(q_L^*)-\pi_H(q_H^*) \geq  \ln \left( \frac{\Delta C (\tilde{q})}{\Delta C (\tilde{q}) - \Delta C (q_H^*)}\right)  \left[\pi_L(q_L^*) - \pi_H(\tilde{q})\right].
\end{equation*}
Hence we have found a $\tilde{q}\in [q_L^*,\bar{q}]$ that violates \eqref{eq:nsconditionnonrevealing}, a contradiction.

Lastly, we argue that no optimal mechanism can be revealing when $\delta$ is sufficiently large. Suppose (toward a contradiction) that there always is a revealing mechanism that generates a profit of $\pi_H(q_H^*)$ as $\delta$ gets close to $1$. When proving Theorem \ref{thm:structure}, the presumption that $\Pi_0>\pi_H(q_H^*)$ is only used to show that $r_0>0$ in \textit{every} optimal mechanism. Even if $\Pi_0=\pi_H(q_H^*)$, we can still find \textit{an} optimal mechanism with $r_0>0$,\footnote{This mechanism can be constructed by skipping the first several periods without information revelation.} thus this optimal mechanism satisfies all properties in Theorem \ref{thm:structure} and Lemmas \ref{lem:frontloading} through \ref{lem:separatingquantity}. But then, one can apply exactly the same argument prior to this paragraph to reach the same contradiction, which establishes uniqueness in this case.
\end{proof}

\begin{proof}[\bf Proof of Part 2 of Theorem \ref{thm:nsc}]
Suppose that \eqref{eq:nsconditionrevealing} holds for some $\tilde{q}_L\in [q_L^*,\bar{q}]$. We will construct an implementable allocation with information revelation that generates a profit strictly greater than $\pi_H(q_H^*)$.

Let $\tilde{q}\in [q_L^*,\bar{q}]$ be such that
\begin{align}\label{eq:nsconditionrevealingslack}
        \pi_L(q_L^*) - \pi_H(q_H^*) > \ln \left(\frac{\Delta C(\tilde{q})}{\Delta C(\tilde{q})-\Delta C({q}_H^*)}\right) \left[\pi_L(q_L^*)-\pi_H(\tilde{q})\right]+\epsilon,
\end{align}
for some $\epsilon>0$. Consider the following allocation: For every $t$,
\begin{align*}
    q_t^P &= q_H^*,\\
    x_t^P &= C_H(q_H^*),\\
    q_\tau^S(t) &= \begin{cases}
        \tilde{q}, &\text{ if }t\leq \tau\leq t+T\\
        q_L^*, &\text{ if } \tau > t+T
    \end{cases}\\
    x_\tau^S(t) &= \begin{cases}
        C_H(\tilde{q}), &\text{ if }t\leq \tau < t+T\\
        C_L(\tilde{q}) + \beta \Delta C(\tilde{q}), &\text{ if } \tau = t+T\\
        C_L(q_L^*), &\text{ if } \tau > t+T
    \end{cases}\\
    r_t &= r_0 \rho ^{t},
\end{align*}
where $\rho$ is less than but close to $1$, $r_0= (1-\rho)\alpha_0$, $T$ is the smallest integer that satisfies $(1-\delta)\sum_{\tau=t}^{t+T}\delta^{\tau-t}\Delta C(\tilde{q}) \geq \Delta C(q_H^*)$, and $\beta\in (0,1]$ is such that 
\begin{equation}\label{eq:rentapproximate2}
    (1-\delta)\sum_{\tau=t}^{t+T-1}\delta^{\tau-t}\Delta C(\tilde{q}) + (1-\delta)\delta^T\beta\Delta C(\tilde{q}) = \Delta C(q_H^*).
\end{equation}

It is easy to verify that the constructed allocation satisfies \eqref{eq:IR-H}, \eqref{eq:IR-L}, \eqref{eq:IC-H}, \eqref{eq:IC-L}. We now show that when $\delta$ is sufficiently large, we can always find some $\rho$ that keeps \eqref{eq:nonreneging} satisfied, so that the constructed allocation is implementable. It suffices to ensure that in each period $t$, the principal weakly prefers waiting one more period to renege than reneging right away. Adapting condition \eqref{eq:NRwaiting} to the constructed allocation, we need
\begin{align*}
        &r_t[(1-\delta)\pi_H(\tilde{q})+\delta \pi_L(q_L^*)-\pi_H(q_H^*)] \\
        \geq 
        &(1-\delta)\sum_{\tau=t-T}^{t-1}r_t\rho^{\tau-t} [\pi_L(q_L^*)- \pi_{H}(\tilde{q})] + (1-\delta)r_t\rho^{-T}(1-\beta)\Delta C(\tilde{q}),
    \end{align*}
or equivalently, canceling $r_t$ on both sides,
\begin{align}\label{eq:NRwaiting2}
        &(1-\delta)[\pi_H(\tilde{q})-\pi_H(q_H^*)]+\delta [\pi_L(q_L^*)-\pi_H(q_H^*)] \nonumber\\
        \geq 
        &(1-\delta)\sum_{\tau=t-T}^{t-1}\rho^{\tau-t} [\pi_L(q_L^*)- \pi_{H}(\tilde{q})] + (1-\delta)\rho^{-T}(1-\beta)\Delta C(\tilde{q}).
    \end{align}
When $\delta$ and $\rho$ both go to $1$, the LHS of \eqref{eq:NRwaiting2} converges to $\pi_L(q_L^*)-\pi_H(q_H^*)$, and the RHS of \eqref{eq:NRwaiting2} converges to $\ln \left(\frac{\Delta C(\tilde{q})}{\Delta C(\tilde{q})-\Delta C({q}_H^*)}\right) \left[\pi_L(q_L^*)-\pi_H(\tilde{q})\right]$.\footnote{Note that $T$ is determined by \eqref{eq:rentapproximate2}, so it is independent of $\rho$. The limit of the RHS of \eqref{eq:NRwaiting2} is taken by first sending $\rho$ to $1$ for each given $\delta$, rendering $(1-\delta)T[\pi_L(q_L^*)- \pi_{H}(\tilde{q})]+(1-\delta)(1-\beta)\Delta C(\tilde{q})$, and then sending $\delta$ to $1$, rendering $\ln \left(\frac{\Delta C(\tilde{q})}{\Delta C(\tilde{q})-\Delta C({q}_H^*)}\right) \left[\pi_L(q_L^*)-\pi_H(\tilde{q})\right]$.} But because $\tilde{q}$ satisfies \eqref{eq:nsconditionrevealingslack}, this means that for each sufficiently large $\delta$, one can find $\rho$ that makes \eqref{eq:NRwaiting2} hold, thus \eqref{eq:nonreneging} is satisfied.

We conclude the proof by showing that the principal's profit from the constructed allocation is strictly greater than $\pi_H(q_H^*)$. Because \eqref{eq:nonreneging} is satisfied, the principal's total time-$0$ profit is weakly greater than that from reneging at $t=1$, that is:
\begin{equation*}
    \Pi_0 \geq (1-r_0)\pi_H(q_H^*) +  r_0 \left[(1-\delta)\pi_H(\bar{q}) + \delta \pi_L(q_L^*)\right]> \pi_H(q_H^*),
\end{equation*}
where the last inequality holds as long as $\delta > \frac{\pi_H(q_H^*) - \pi_H(\bar{q})}{\pi_L(q_L^*) - \pi_H(\bar{q})}$. Therefore, every optimal mechanism is revealing whenever $\delta$ is sufficiently close to $1$.
\end{proof}

\begin{proof}[\bf Proof of Proposition \ref{prop:typedistance}]
Let us first ignore the upper bound $\bar{q}$ on quantity. We first show that there exists $\theta_1$ such that if $\theta_L<\theta_1$ then \eqref{eq:nsconditionrevealing} is satisfied by setting $\tilde{q}_L = q_L^*$. Notice that $q_L^*$ goes to $\infty$ as $\theta_L$ goes to $0$. So we have
\begin{align*}
    &\lim_{\theta_L\to 0 }\ \pi_L(q_L^*) - \pi_H(q_H^*) - \ln \left(\frac{ q_L^*}{q_L^*-q_H^*}\right) \left[\pi_L(q_L^*)-\pi_H({q}_L^*)\right] \\
    = &\lim_{q_L^*\to \infty }\ \pi_L(q_L^*) - \pi_H(q_H^*) - \ln \left(\frac{ q_L^*}{q_L^*-q_H^*}\right) \theta_Hq_L^*\\
    =& \lim_{q_L^*\to \infty }\ \pi_L(q_L^*) - \pi_H(q_H^*) - \theta_Hq_H^*\\
    =& \lim_{q_L^*\to \infty }\ \pi_L(q_L^*) - v(q_H^*)\\
    =& \lim_{q_L^*\to \infty }\ v(q_L^*) - v(q_H^*)\\
    >& 0.
\end{align*}
Therefore, there exists $\theta_1$ such that if $\theta_L<\theta_1$, then \eqref{eq:nsconditionrevealing} is satisfied by setting $\tilde{q}_L = q_L^*$. Fix any such $\theta_1$. Let us assume that $\bar{q} > q_L^*(\theta_1)$ and let $\underline{\theta}$ be such that $q_L^*(\underline{\theta}) = \bar{q}$. The previous analysis implies that for every $\theta_L\in (\underline{\theta},\theta_1)$, \eqref{eq:nsconditionrevealing} is satisfied by some $\tilde{q}_L \in [q_L^*,\bar{q}]$.

Next we show that there exists $\theta_2$ such that if $\theta_L>\theta_2$ then \eqref{eq:nsconditionnonrevealing} is satisfied by all $\tilde{q}_L\in [q_L^*,\bar{q}]$. Assume toward a contradiction that one can find a sequence $\theta_L^n\to \theta_H$ and a sequence $\tilde{q}_L^n\in [q_L^{*n},\bar{q}]$ such that
\begin{equation*}
    \pi_L(q_L^{*n}) - \pi_H(q_H^*) \geq \ln \left(\frac{\tilde{q}_L^n}{\tilde{q}_L^n-{q}_H^*}\right) \left[\pi_L(q_L^{*n})-\pi_H(\tilde{q}_L^n)\right],\ \forall n.
\end{equation*}
Because $q_L^n\to q_H^*$, the LHS of the above inequality converges to $0$. And because $\ln\left(\frac{\tilde{q}_L^n}{\tilde{q}_L^n-{q}_H^*}\right)$ is bounded below by $\ln\left(\frac{\bar{q}}{\bar{q}-{q}_H^*}\right)$, for the inequality to hold we need $\tilde{q}_L^n\to q_H^*$. This implies that 
$$\lim_n\ \ln\left(\frac{\tilde{q}_L^n}{\tilde{q}_L^n-{q}_H^*}\right) = \infty.$$
On the other hand, because $\pi_L(q_L^{*n}) - \pi_H(q_H^*)<\pi_L(q_L^{*n})-\pi_H(\tilde{q}_L^n)$ for every $n$, we must have $$\limsup_n\ \ln\left(\frac{\tilde{q}_L^n}{\tilde{q}_L^n-{q}_H^*}\right)\leq 1,$$
a contradiction.

Define 
\begin{equation*}
    \bar{\theta} := \inf\left\{\theta\ |\ \text{if $\theta_L>\theta$, \eqref{eq:nsconditionnonrevealing} is satisfied by all } \tilde{q}_L\in [q_L^*,\bar{q}]\right\}.
\end{equation*}
The previous analysis implies that $
\bar{\theta}$ is well-defined and satisfies $\theta_1\leq \bar{\theta}\leq \theta_2$. By Part 1 of Theorem \ref{thm:nsc}, if $\theta_L>\bar{\theta}$, then the optimal mechanism is non-revealing whenever $\delta$ is sufficiently close to $1$. 

Now fix any $\theta_L<\bar{\theta}$. Define
\begin{equation*}
    \tilde{q}(\theta) : = \argmin_{\tilde{q}\in [q_L^*(\theta),\bar{q}]}\ \ln \left(\frac{\tilde{q}}{\tilde{q}-{q}_H^*}\right) \left[\pi_L(q_L^*({\theta}))-\pi_H(\tilde{q})\right].
\end{equation*}
Let us argue that $\tilde{q}(\theta) > q_L^*(\theta)$ for every $\theta$. Note that
\begin{align*}
    &\frac{d}{d\tilde{q}} \left.\ln \left(\frac{\tilde{q}}{\tilde{q}-{q}_H^*}\right) \left[\pi_L(q_L^*({\theta}))-\pi_H(\tilde{q})\right]\right|_{\tilde{q}=q_L^*(\theta)} \\
    = &\left(\frac{1}{q_L^*(\theta)} - \frac{1}{q_L^*(\theta)-q_H^*}\right)(\theta_H-\theta_L)q_L^*(\theta) + \ln \left(\frac{q_L^*(\theta)}{q_L^*(\theta)-{q}_H^*}\right)(\theta_H-\theta_L)\\
    = &\left[\ln \left(\frac{q_L^*(\theta)}{q_L^*(\theta)-{q}_H^*}\right) - \frac{q_H^*}{q_L^*(\theta)-q_H^*}\right](\theta_H-\theta_L)\\
    < & 0
\end{align*}
where the first equality follows from $\pi_H(q_L^*(\theta)) = \pi_L(q_L^*(\theta)) - (\theta_H-\theta_L)q_L^*(\theta)$ and $\pi_L'(q_L^*(\theta)) = 0$, and the inequality follows because $\ln(1+x)<x$ for all $x>0$. Hence $\tilde{q} = q_L^*(\theta)$ cannot be the minimizer of the objective function. Next define
\begin{equation*}
    G(\theta) : = \pi_L(q_L^*({\theta})) - \pi_H(q_H^*) - \ln \left(\frac{\tilde{q}({\theta})}{\tilde{q}({\theta})-{q}_H^*}\right) \left[\pi_L(q_L^*({\theta}))-\pi_H(\tilde{q}({\theta}))\right].
\end{equation*}
Note that
\begin{align*}
    G'(\theta) &= -\left[1-\ln \left(\frac{\tilde{q}({\theta})}{\tilde{q}({\theta})-{q}_H^*}\right)\right]q_L^*(\theta) - \frac{d}{d\tilde{q}}\left[\ln \left(\frac{\tilde{q}}{\tilde{q}-{q}_H^*}\right) \left[\pi_L(q_L^*({\theta}))-\pi_H(\tilde{q})\right]\right]\frac{d\tilde{q}(\theta)}{d\theta}\\
    &= -\left[1-\ln \left(\frac{\tilde{q}({\theta})}{\tilde{q}({\theta})-{q}_H^*}\right)\right]q_L^*(\theta),
\end{align*}
where the second line follows from the envelope theorem which is applicable because $\tilde{q}(\theta) > q_L^*(\theta)$. By definition of $\bar{\theta}$, there exists $\tilde{\theta}\in (\theta_L,\bar{\theta}]$ such that $G(\tilde{\theta}) \geq 0$. Remember $G(\tilde{\theta}) \geq 0$ implies $\ln \left(\frac{\tilde{q}(\tilde{\theta})}{\tilde{q}(\tilde{\theta})-{q}_H^*}\right) <1 $, so we have
\begin{equation*}
    G'(\tilde{\theta}) <0.
\end{equation*}
This argument can then be used to show that $G'(\theta)<0$ for all $\theta<\tilde{\theta}$, thus $G(\theta_L) >G(\tilde{\theta})\geq 0$. Because $\theta_L$ is arbitrary, applying Part 2 of Theorem \ref{thm:nsc}, we conclude that if $\theta_L<\bar{\theta}$, then every optimal mechanism is revealing whenever $\delta$ is sufficiently close to $1$.
\end{proof}

\section{Continuation Equilibrium After the Principal Reneges}\label{app:B}
Recall that by Assumption \ref{ass:1}, $\alpha _{0}\pi _{L}(q_{L}^{\ast })<\pi_{H}(q_{H}^{\ast }).$
Hence we only need to characterize worst payoffs for cases in
which the probability that an agent is the low-cost type conditional on having not
revealed at $t,$ $\xi _{t},$ satisfies%
\begin{equation}\label{eq:enoughhightype}
\xi_{t}\pi _{L}(q_{L}^{\ast })<\pi_{H}(q_{H}^{\ast }). 
\end{equation}

We also assume that the parties are sufficiently patient, so that%
\begin{equation}\label{eq:largedelta}
\frac{\delta }{1-\delta }\Delta C(q_{H}^{\ast })>\Delta C(\bar{q}).
\end{equation}%

\subsection{Extensive Form Game}
Each period is divided into two stages: First, the principal offers an individualized menu with at most $K<\infty$ contracts to each agent $i$: $m_{t}^{i}.$ Second, if agent $i$ chooses
contract $(x_{t}^{i},q_{t}^{i})$ from the menu $m_{t}^{i}$, production
and payments unfold accordingly; otherwise, the agent leaves the
relationship and obtains zero in every future period.

\subsubsection*{Payoff Lower Bound}
Take $\varepsilon >0$ and notice that if the principal offers a
menu containing a single contract $(q_{H}^{\ast },C(q_{H}^{\ast
})+\varepsilon )$ for each agent with an unknown type and a contract $(q_{L}^{\ast },C(q_{L}^{\ast })+\varepsilon )$ for each agent who
has revealed by $t$, then the principal obtains payoff $\varepsilon $ below the bound on the RHS of \eqref{eq:nonreneging}. Indeed, each (individualized) contract
above gives strictly positive payoffs to the agent while rejecting it gives
zero. Hence every agent has to accept it. We conclude that the payoff on the RHS of \eqref{eq:nonreneging} is a lower bound on the principal's payoff in any continuation equilibrium.

\subsection{Worse Equilibrium Construction}
\subsubsection*{Principal's Strategy}
Offer the singleton menu $\{(q_{L}^{\ast },C_{L}\left(
q_{L}^{\ast }\right) )\}$ to every agent who has revealed his type and the singleton menu  $\{(q_{H}^{\ast },C_{H}\left(q_{H}^{\ast }\right) )\}$ to every agent who has not revealed his type. Do this in every future period.

\subsubsection*{Agent's Strategy}
Agents who have revealed their type accept the contract that
yields the largest payoff in the menu provided that it is positive (choose the one with the smallest index in case of a tie). Reject every contract if all lead to negative stage-game payoffs.

For agents who have not revealed their type yet, there are two cases to consider:
\begin{itemize}
    \item \textbf{Case 1: All contracts in the menu yield a negative payoff to the high
type.}

A high type rejects all contracts in the menu. A low type also rejects all
contracts if the most profitable contract yields a negative payoff; otherwise, he selects the the most profitable contract (chooses the one with the smallest index in case of a tie).

\item \textbf{Case 2: At least one contract in the menu yields a positive payoff to
the high type.}

A high type selects the the most profitable contract (chooses the one with the smallest index in case of a tie). A low type imitates the high type.
\end{itemize}

\subsubsection*{Belief Updating}
We only need to specify belief updating in Case 2 above when a contract that differs from the
high type's choice prescription is chosen. In this case, the principal
attributes probability one to the agent being low type.

\subsection{Equilibrium Verification}
\subsubsection*{High-Type's Optimality}
Since his continuation payoff is always zero and the strategy above maximizes his stage-game payoff, he does not have a profitable deviation.

\subsubsection*{Low-Type's Optimality}
For anyone who revealed in the past, since his continuation
payoff is always zero and the strategy above maximizes his stage-game payoff, he does not have a profitable deviation.

For anyone who has not revealed yet, the only nontrivial
case is when a menu contains a more profitable contract in the stage
game that differs from the contract chosen by the high type. We must verify
that he cannot benefit from selecting this contract. The short-term gains from selecting this
contract are at most $\left( 1-\delta \right) \Delta C(\bar{q}).$ On the other
hand, the principal learns his type after this choice and hence this choice
leads to a loss of continuation payoff of $\delta \Delta C(q_{H}^{\ast }).$
By \eqref{eq:largedelta},  this is not profitable.

\subsubsection*{Principal's Optimality}
The only nontrivial deviation to assess is the attempt to separate types among those who are still in the pool. Notice that the strategy profile above implies that the only way to
screen the agents is by making an offer that leads to rejection by the high
type. However, this is not profitable because of \eqref{eq:enoughhightype}.
\end{document}